\newtheorem{lemma}{Lemma}
\newtheorem{definition}{Definition}
\DeclareMathOperator*{\argmin}{arg\,min}
\newcolumntype{Y}{>{\centering\arraybackslash}X}
\begin{document}

\title{Multi-Source to Multi-Target Decentralized \\ Federated Domain Adaptation}

\author{Su Wang, Seyyedali Hosseinalipour,~\IEEEmembership{Member,~IEEE,} Christopher~G.~Brinton,~\IEEEmembership{Senior~Member,~IEEE}
\thanks{S. Wang and C. G. Brinton are with Purdue University, IN, USA e-mail: \{wang2506,cgb\}@purdue.edu.}
\thanks{S. Hosseinalipour is with University at Buffalo--SUNY, NY, USA email: alipour@buffalo.edu.}
\thanks{This project was supported in part by the Office of Naval Research (ONR) under grants N000142212305 and N00014-23-C-1016, and by the National Science Foundation (NSF) under grants CNS-2146171 and CNS-2212565.}}
\maketitle

\begin{abstract}
    Heterogeneity across devices in federated learning (FL) typically refers to statistical (e.g., non-i.i.d. data distributions) and resource (e.g., communication bandwidth) dimensions. 
    In this paper, we focus on another important dimension that has received less attention: varying quantities/distributions of \textit{labeled} and \textit{unlabeled} data across devices. 
    In order to leverage all data, we develop a decentralized federated domain adaptation methodology which considers the transfer of ML models from devices with high quality labeled data (called sources) to devices with low quality or unlabeled data (called targets). 
    Our methodology, Source-Target Determination and Link Formation (ST-LF), optimizes both (i) classification of devices into sources and targets and (ii) source-target link formation, in a manner that considers the trade-off between ML model accuracy and communication energy efficiency. 
    To obtain a concrete objective function, we derive a measurable generalization error bound that accounts for estimates of source-target hypothesis deviations and divergences between data distributions. 
    The resulting optimization problem is a mixed-integer signomial program, a class of NP-hard problems, for which we develop an algorithm based on successive convex approximations to solve it tractably. 
    Subsequent numerical evaluations of ST-LF demonstrate that it improves classification accuracy and energy efficiency over state-of-the-art baselines.
\end{abstract}

\begin{IEEEkeywords}
Federated learning, federated domain adaptation, link formation, decentralized federated learning, network optimization.
\end{IEEEkeywords}

\section{Introduction}
\label{sec:intro}

\noindent Federated learning (FL)~\cite{konevcny2016federated,lim2020federated} is a class of distributed machine learning (ML) methods designed for training ML models across non-independent and identically distributed (non-i.i.d.) datasets at edge devices. 
FL has been actively studied in two different scenarios: (i) centralized FL~\cite{mcmahan2017communication,luo2021cost}, which relies on a server to aggregate ML parameters across devices, and (ii) decentralized FL~\cite{ye2022decentralized,roy2019braintorrent,wang2021device}, in which devices exchange ML parameters through peer-to-peer communications. 
Existing work in both scenarios has made great strides in addressing the fact that devices often exhibit heterogeneity with respect to their local data distributions and/or local communication/computation capabilities for processing model updates.
However, many of these works have implicitly assumed that each device has a sufficient amount of labeled data available to train local models in the first place~\cite{peng2019federated,schmarje2021survey}. 
This tends to be unrealistic, particularly when the measurements are originating from unfamiliar environments. 

{
In training supervised ML models through FL, the resulting global model will naturally be biased to favor devices/distributions with more labeled data. For example, the standard FL algorithm~\cite{mcmahan2017communication} and its weighted averaging rule will yield a global model that favors data categories that are labeled and more populous. 
Since labeled and unlabeled data are both distributed across real-world networks in a non-i.i.d. manner, the global model may not capture the unique statistical properties at devices with mostly unlabeled data, and thus lead to poor performance for those devices.

Consequently, there is a need for FL methodologies that address heterogeneity in the local quantity and distribution of unlabeled data. To address this, consider the possibility of using both labeled and unlabeled data at network devices to calculate statistical divergence measures between devices. 
In decentralized FL, we could then optimize the transfer of unique weighted combinations of ML models from (a) devices with high quality labeled data, called sources or source domains, to (b) devices with poorly labeled and/or unlabeled data, called targets or target domains. 
Such source-to-target model transmissions are commonly called domain adaptation~\cite{ben2010theory,mansour2008domain}. However, literature in this area~\cite{peng2019federated,zhao2018adversarial,zhao2020multi} typically assumes that the sources and targets are known apriori. 
Since devices are heterogeneous with respect to both local \textit{quantity} and \textit{distribution} of unlabeled data in federated settings, optimal source-target determination is non-trivial, e.g., it could yield targets with some labeled data and sources with some unlabeled data.
Moreover, empirical data divergence computations should not require the transmission of raw data. 

In this paper, we investigate domain adaptation in decentralized FL, with the above considerations and additional network factors such as energy efficiency. 
Our proposed methodology, \textit{Source-Target Determination and Link Formation} (ST-LF), is among the first to jointly (i) determine device source/target classifications, (ii) analyze the link formation problem from sources to targets, and (iii) minimize the total network communication resource consumption. 
Individually, (i), (ii), and (iii) pose several challenges: for example, optimal source/target classification may entail considering targets with labeled data.  
The difficulty is further exacerbated as (i), (ii), and (iii) are intertwined together, i.e., the result of source/target selection directly influences possible link formation regimes and communication resource consumption.

ST-LF aims to classify devices as either sources or targets by estimating post-training ML error based on the local quantity and quality of data. 
To facilitate this, we develop theoretical bounds for expected post-training ML error and generalization bounds to capture the contribution of source-to-target model transmissions. 
{\color{black}We also develop an algorithm to compute empirical divergence on labeled and unlabeled data, which offers similar data privacy advantages found in FL by eliminating the need to transmit devices' raw data over the network.} 
Then, we leverage theoretical bounds and divergence measures to optimize link formation from multiple sources to multiple targets.
ST-LF for the first time reveals how communication efficiency requirements influence link formation from sources to targets, and optimizes network communication resource consumption jointly with link formation.}

\begin{figure}[t]
    \centering
    \includegraphics[width=0.48\textwidth]{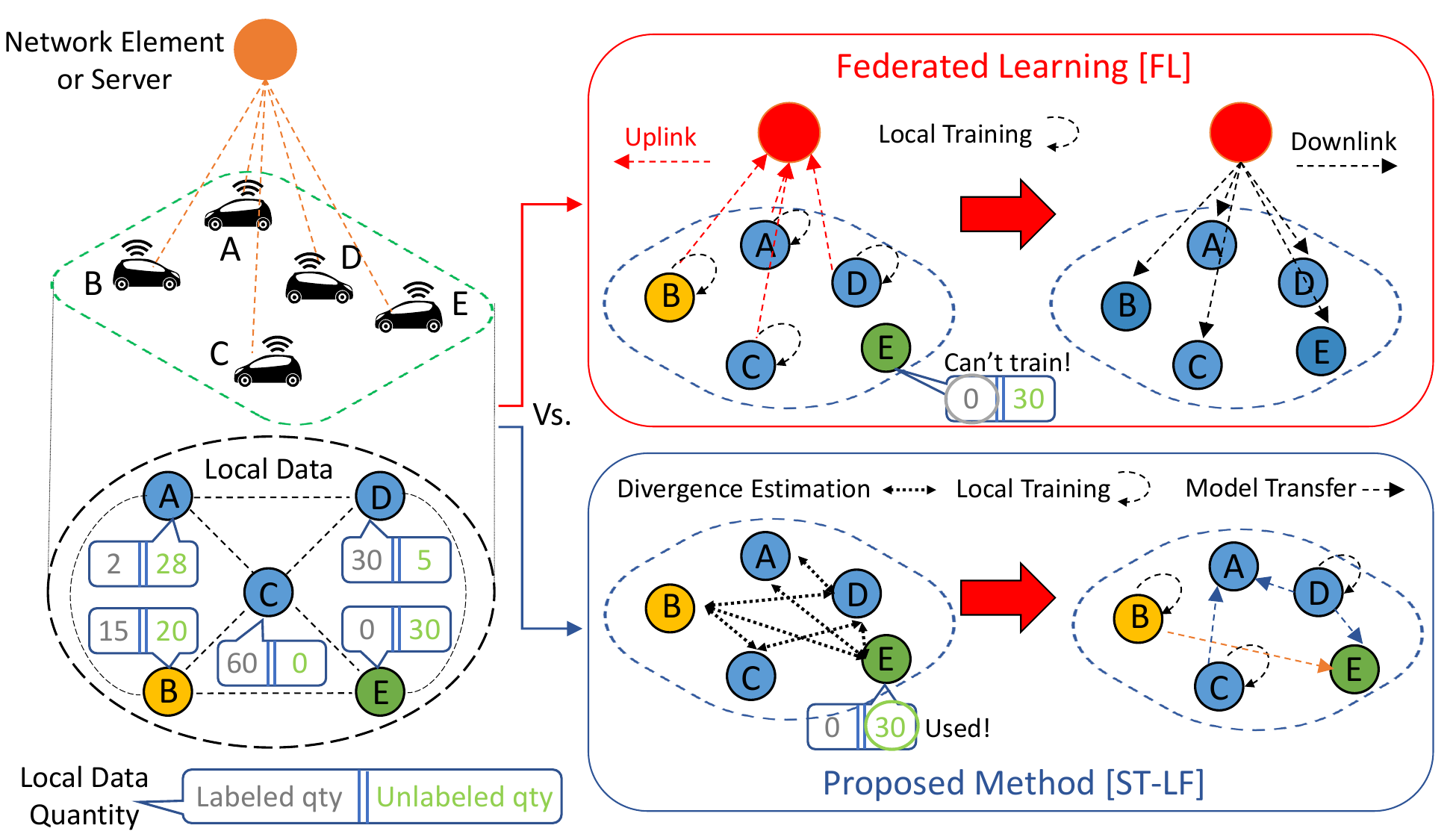}
    \caption{Small motivating example of a network of 5 smart cars. Only cars B, C, and D have meaningful amounts of labeled data, with cars A and E containing very few or no labeled data. Using a server, FL combines ML models from devices with labeled data, yielding a global model heavily biased for the ``blue" domain. Meanwhile, ST-LF uses unlabeled data to estimate pair-wise divergences, then determines source/target selection and source-to-target link formation, leading to individualized ML models without a server.}
    \label{fig:toy_ex}
    \vspace{-5mm}
\end{figure}

\subsection{Motivating Toy Example}
{
Consider Fig.~\ref{fig:toy_ex}, where a network of five smart cars is aiming to collaboratively construct object detection classifiers. Each car gathers data from its local environment, which may contain variations on the same type of object (e.g., images taken in rainy vs sunny environments) and unique sets of objects altogether (e.g., trees vs cacti in the landscape).  
We use the phrase ``data domain" to describe this data-related heterogeneity across network devices, with a colored circle on the left hand side of Fig.~\ref{fig:toy_ex} indicating similarity (in a statistical sense) between local data domains. 
For example, cars A, C, and D all have blue circles, meaning they share very similar data domains. 
Furthermore, each car has a unique local ratio of labeled and unlabeled data.

With traditional FL methods, only cars with labeled data will contribute to ML model training.  
As a result, cars A through D will train ML models while all cars, A through E, will receive the global ML model.  
Since there are more data and devices from the ``blue" domain, the global ML model is biased to favor the ``blue" domain, to the detriment of ``yellow'' (15 labeled data) and ``green'' (0 labeled data) domains.

With our proposed method (i.e., ST-LF), information in \textit{both labeled and unlabeled} data is employed to estimate divergence among device pairs. Using this together with information on resource availability, ST-LF subsequently optimizes the set of source and target cars. In the example of Fig.~\ref{fig:toy_ex}, cars B, C, and D become the source cars, and cars A and E are the target cars. 
Car A's labeled dataset is small, but it has a similar ``blue" data domain with cars C and D, which have much larger labeled datasets. Therefore, car A is likely well-represented by the local models constructed on similar datasets at cars C and D. 
These source cars will locally perform the ML training. 
Then, sources and targets will be matched together based on ST-LF's link formation output. 
In this example, the network can combine the ML models at cars C and D to yield an ML model for car A, as all three cars belong to the ``blue" domain. 
Similarly, combining the ML models at cars B and D can yield a specially-tailored ML model for car E, as combining ``yellow" and ``blue" domains yields a ``green" domain. 
Our optimization will further consider the tradeoff between wireless energy efficiency and ML quality improvement in determining link formation for the network. 
}

\begin{figure*}[t]
    \centering
    \includegraphics[width=0.92\textwidth]{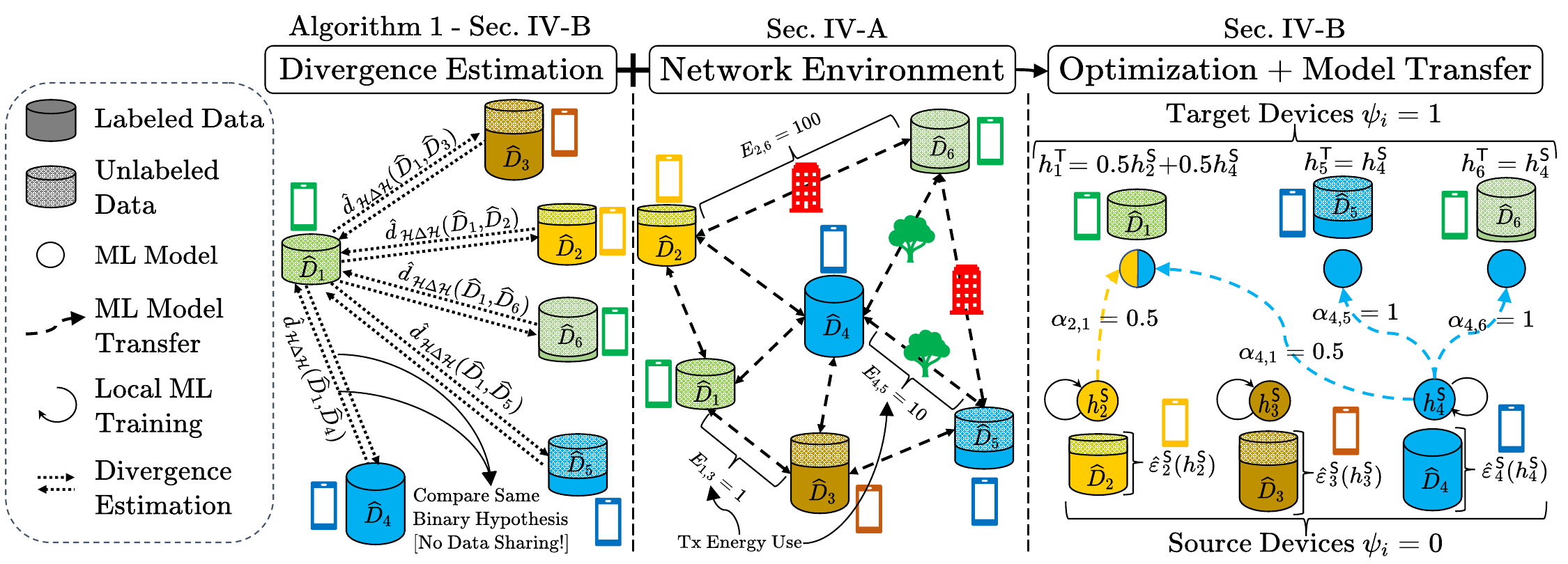}
    \caption{Overview of our ST-LF methodology, where each color represents a different domain. 
    {On the left, ST-LF first determines empirical distribution divergences among device pairs through comparison of a binary domain hypothesis/classifier (visualized for device 1). In the middle, ST-LF extracts information about the network environment such as communication energy costs indicted by ``Tx Energy Use". 
    Finally, on the right, ST-LF uses these measurements in an optimization problem to determine optimal source/target classification $\boldsymbol{\psi}$ and combination weights $\boldsymbol{\alpha}$ with respect to both expected ML model performance and network energy consumption.} }
    \label{fig:intro_model}
     \vspace{-5mm}
\end{figure*}

\subsection{Outline and Summary of Contributions}
Structurally, we first review relevant work in Sec.~\ref{sec:related_work}, and present some important theoretical preliminaries in Sec.~\ref{ssec:relevant_theory}. 
Then, we develop our ST-LF methodology in Sec.~\ref{sec:st-lf-all}, 
and experimentally characterize our formulation itself and its performance relative to several baselines both in Sec.~\ref{sec:experiments}.
We summarize our key contributions as follows: 

\begin{itemize}[leftmargin=3.5mm]
    \item 
    \textbf{Formulation of ST-LF} (Sec.~\ref{sec:st-lf-all}):
    We study a novel yet natural problem in decentralized FL -- only a subset of devices have high quality labeled data.
    We develop a methodology, ST-LF (overview in Fig.~\ref{fig:intro_model}), to address source/target classification, link formation, and communication efficiency requirements simultaneously. 
    \item 
    \textbf{Development of measurable theoretical bounds} (Sec.~\ref{ssec:theoretical_results}):
    We develop a measurable multi-source generalization error bound that captures the impact of combining source hypotheses at targets, which we use to formulate model transfer between multiple sources to multiple targets. 
    {\color{black}Subsequently, we propose an algorithm to estimate source-to-target distribution divergences based on hypothesis comparisons, which offers similar data privacy advantages to those found in FL by eliminating the need to transmit devices' raw data over the network.}
    \item 
    \textbf{Development of optimization methodology for ST-LF} (Sec.~\ref{ssec:init_formulation}):
    As a part of ST-LF, we formulate a novel optimization problem which jointly optimizes our generalization error bound, source-to-target model transfer, and communication efficiency. We show that source/target classification and source-to-target link formation in multi-source to multi-target model transfer scenarios can be classified as a mixed-integer signomial program, a class of NP-hard problems. We subsequently propose a tractable solution using posynomial approximation-based techniques. 
    {The proposed optimization transformation techniques, given their generality and versatility, have a broader range of applicability to federated domain adaptation problems.} 
    \item \textbf{Demonstrate ML performance and energy improvements from ST-LF} (Sec.~\ref{sec:experiments}):
    We experimentally demonstrate the superior performance of ST-FL's source/target classification and link formation relative to baselines from literature. 
    Our experiments reveal the importance of jointly optimizing learning, model transfer, and communication. 
\end{itemize}


\vspace{-1mm}
\section{Related Work}
\label{sec:related_work}

\noindent\textbf{Federated Learning.}
Many methods in FL have been proposed to reduce the impact of non-i.i.d. gathered data on ML model training~\cite{wang2020optimizing,tang2021incentive}. In particular,  multi-task FL~\cite{smith2017federated} learns multiple models, personalized FL~\cite{fallah2020personalized,dinh2020personalized,wang2022uav} adjusts a global model to fit individual devices, and works such as~\cite{wang2021device,zhao2018federated} propose reducing statistical divergence by sharing data via social agreements. 
However, these works rely on all devices having fully labeled data, which poses difficulties in applications where networks of devices only have partially labeled data. 
In the extreme case of fully unlabeled data, methods for unsupervised FL have been proposed, e.g.,~\cite{lu2021federated} using surrogate labels or~\cite{li2021model} which proposes federated contrastive learning. 
We study the ``intermediate'' problem where devices have varying levels of labeled data in FL, and develop novel methods to optimize the transfer of models trained at devices with high quality labeled data, called sources, to those devices without, called targets. 

In our ST-LF methodology, the weighted contributions of source domain ML models are combined at the target domains. This makes our methodology a form of decentralized FL~\cite{ye2022decentralized,roy2019braintorrent}, which exploits the proliferation of device-to-device wireless communications at the edge~\cite{wang2023towards,wang2023potent}.
In this regard, existing works have considered jointly optimizing expected ML model performance and the communication resource consumption that is required over these distributed links~\cite{zhang2021federated,wang2021network,avdiukhin2021federated,hosseinalipour2022parallel}. 
However, these works address centralized settings where the objective is to minimize server-to-device communications. 
We instead consider how expected communication costs among D2D links can influence both source/target device selection and subsequent source-to-target selection in decentralized networks with unlabeled data. 


\textbf{Domain Adaptation.} Domain adaptation~\cite{motiian2017unified,wang2018deep,saito2019semi} is concerned with the transferal of an ML model from a source to a target domain. 
Our problem is a form of unsupervised domain adaptation~\cite{ganin2016domain,sun2016deep} where the target domain is entirely reliant on the source domain.
Whereas most unsupervised domain adaptation methods~\cite{ben2010theory,ganin2016domain,zhao2019learning} assume the labeled and unlabeled data are in a single location, we consider the federated setting where data is distributed across different devices.
To this end,~\cite{peng2019federated} develops an adversarial-based method for distributed domain adaption where a single target and a set of sources is predetermined. In particular, the adversarial learning approach of~\cite{peng2019federated} has inspired several extensions~\cite{zhang2023federated,tran2023personalized} all involving generative adversarial networks and other adversarial methods to perform federated domain adaptation. 
Another recent method~\cite{yao2022federated} entrusts a central server with data from the edge devices in a federated network, which, while effective, has some difficulty lending itself to decentralized FL settings and does lead to some minor data privacy concerns. 
Existing literature~\cite{peng2019federated,yao2022federated,zhao2019learning,zhang2023federated} generally assumes that the source and target devices are known in the network apriori, and has yet to consider the source/target device classification problem by itself when devices may vary with respect to their local ratio and quantity of labeled/unlabeled data. 
Our methodology, enabled by our measurable multi-source generalization error bound, is the first to jointly consider source/target determination, link formation optimization, and communication efficiency for multi-source to multi-target domain adaptation.

\section{Preliminaries}
\label{sec:prelim}

\subsection{Multi-Source/Multi-Target Federated Domain Adaptation}
\label{ssec:prob_defn}
We consider a set $\mathcal{N}$ of devices in a decentralized FL setting (see Fig.~\ref{fig:intro_model}). We aim to partition $\mathcal{N}$ into \textit{source} devices $\mathcal{S}$ and \textit{target} devices $\mathcal{T}$.
Each source $s \in \mathcal{S}$ trains a hypothesis $h^{\mathsf{S}}_s: \mathcal{X} \rightarrow \mathcal{Y}$, $h^{\mathsf{S}}_s \in \mathcal{H}$, where $\mathcal{H}$ denotes the hypotheses space, $\mathcal{X}$ is the input space of the data, and $\mathcal{Y}$ is the classification output. 
We consider $\mathcal{H}$ as a binary hypothesis space for our theoretical analysis, i.e., $\mathcal{Y} \in \{0,1\}$, as in~\cite{ben2010theory,zhao2019learning}. 
Instead of locally training, each target device $t \in \mathcal{T}$ has a hypothesis $h^{\mathsf{T}}_t$ formed based on a weighted combination of source hypotheses, i.e.,
$h^{\mathsf{T}}_t = \sum_{s \in \mathcal{S}} \alpha_{s,t} h^{\mathsf{S}}_s$,
where $\alpha_{s,t} \geq 0$ is the combination weight from source $s$ to target $t$, and $\sum_{s \in \mathcal{S}} \alpha_{s,t} = 1$.


We define the \textit{domain} of source $s$ as $\boldsymbol{\mathcal{D}}^{\mathsf{S}}_s = \langle \mathcal{D}^{\mathsf{S}}_s, f^{\mathsf{S}}_s \rangle$ and target $t$ as $\boldsymbol{\mathcal{D}}^{\mathsf{T}}_t = \langle \mathcal{D}^{\mathsf{T}}_t, f^{\mathsf{T}}_t \rangle$, where $\mathcal{D}^{\mathsf{S}}_s$, $\mathcal{D}^{\mathsf{T}}_t$ are data distributions and $f^{\mathsf{S}}_s$, $f^{\mathsf{T}}_t: \mathcal{X} \rightarrow \mathcal{Y}$ are ground-truth labeling functions at $s$ and $t$, respectively. In general, we expect the $\mathcal{D}^{\mathsf{S}}_s$, $\mathcal{D}^{\mathsf{T}}_t$ to be non-i.i.d.
We denote the true error induced by a source hypothesis as 
\begin{equation} \label{eq:def_true_serror}
    \varepsilon^{\mathsf{S}}_s(h^{\mathsf{S}}_s) = \mathbb{E}_{x \sim \mathcal{D}^{\mathsf{S}}_s}[\vert h^{\mathsf{S}}_s(x) - f^{\mathsf{S}}_s(x)\vert],
\end{equation}
which is the expected deviation of source hypothesis $h^{\mathsf{S}}_s$ from $f^{\mathsf{S}}_s$ over $\mathcal{D}^{\mathsf{S}}_s$. 
Similarly, target devices $t \in \mathcal{T}$ have true error:
\begin{equation} \label{eq:def_true_terror}
    \varepsilon^{\mathsf{T}}_t(h^{\mathsf{T}}_t) = \mathbb{E}_{x \sim \mathcal{D}^{\mathsf{T}}_t}[\vert h^{\mathsf{T}}_t(x) - f^{\mathsf{T}}_t(x)\vert].
\end{equation}
We also use $\varepsilon^{\mathsf{S}}_s(\cdot,\cdot)$ and $\varepsilon^{\mathsf{T}}_t(\cdot,\cdot)$ to compare two hypotheses, e.g., $\varepsilon^{\mathsf{S}}_s(h_1,h_2)$ denotes the true hypothesis divergence error for $h_1,h_2 \in \mathcal{H}$ over $\mathcal{D}^{\mathsf{S}}_{s}$.

\textit{Multi-source to multi-target federated domain adaptation} is the process of transferring hypotheses trained on source domains to target domains to minimize the target domains' true errors. 
%
However, the true error measures $\varepsilon^{\mathsf{S}}_s(\cdot), \varepsilon^{\mathsf{T}}_t(\cdot)$ depend on the underlying data distributions, which are unknown.
We thus use the source empirical error:
\begin{equation} \label{eq:empirical_serror}
\widehat{\varepsilon}^{\mathsf{S}}_s(h^{\mathsf{S}}_s) = \sum_{x \in \widehat{\mathcal{D}}^{\mathsf{S}}_s} \frac{\vert h^{\mathsf{S}}_s(x) - f^{\mathsf{S}}_s(x)\vert} {\widehat{D}^{\mathsf{S}}_s}
\end{equation}
for sources $s \in \mathcal{S}$, where $\widehat{\mathcal{D}}^{\mathsf{S}}_s$ is the empirical dataset at $s$.
{\color{black}In general, sets will be denoted with caligraphic font, e.g., $\mathcal{X}$, and non-caligraphic gives their cardinality, e.g., $X = |\mathcal{X}|$. The $\wedge$ symbol above a variable denotes an empirical result (i.e., the quantity is evaluated over an empirical dataset).}
While the exact ground-truth labeling function $f^{\mathsf{S}}_s$ for $s\in\mathcal{S}$ is unknown, we can determine $f^{\mathsf{S}}_s(x)$ when $x$ is some labeled datum. When $x$ is unlabeled, we treat $\vert h^{\mathsf{S}}_s(x) - f^{\mathsf{S}}_s(x)\vert$ as $1$.\footnote{{\color{black}While the expected error on an unlabeled datum without prior information would be $0.5$, this may not hold as an upper bound for empirical source errors.}} In this manner, the empirical error at a partially labeled device $s$ can be adjusted to also consider the unlabeled data at $s$. 
While target devices $t \in \mathcal{T}$ may also have partially labeled datasets, for mathematical analysis, once classified as targets, target devices are assumed to have no labeled data. Consequently, their empirical target error $\widehat{\varepsilon}^{\mathsf{T}}_t(h^{\mathsf{T}}_t)$ cannot be measured as $f^{\mathsf{T}}_t(x)$ is unknown $\forall x$. 
Instead, empirical hypothesis difference errors are considered at the targets, which we define formally for each target $t$ as:
\begin{equation} \label{eq:target_div_error}
    \widehat{\varepsilon}^{\mathsf{T}}_t(h_1,h_2) =  \sum_{x \in \widehat{\mathcal{D}}^{\mathsf{T}}_t}  \frac{\vert h_1(x) - h_2(x)\vert}{\widehat{D}^{T}_t},
\end{equation}
which can be computed given hypotheses $h_1,h_2 \in \mathcal{H}$. We can similarly calculate $\widehat{\varepsilon}^{\mathsf{S}}_s(h_1,h_2)$ at sources $s \in \mathcal{S}$.
In our problem setting, devices are heterogeneous with respect to the fractions of labeled data that they possess. Classifying devices with limited quantities of labeled data as sources could produce hypotheses that generalize poorly to local unlabeled data as well as to any potential target domains.
We thus desire a new analytical framework for classifying devices in multi-source to multi-target federated domain adaptation, which has to-date remained elusive in the literature. 
To this end, we first review some relevant theory for single-source to single-target domain adaptation in the next section. We then develop our framework in Sec.~\ref{ssec:theoretical_results}.



\subsection{Theoretical Background} 
\label{ssec:relevant_theory}
Works in single-source to single-target domain adaptation~\cite{ben2010theory,zhao2019learning} have proposed analyzing the generalization error of a source hypothesis to a target domain in terms of a \textit{domain divergence} measure: 
\vspace{-1mm}
\begin{definition} ($\mathcal{H}$-divergence) Let $\mathcal{H}$ be the hypothesis class on input space $\mathcal{X}$, and $\mathcal{A}_{\mathcal{H}}$ denote the collection of subsets of $\mathcal{X}$ that form the support of a particular hypothesis in $\mathcal{H}$. The $\mathcal{H}$-divergence between two data distributions $\mathcal{D}$ and $\mathcal{D}^{'}$ is defined as 
$
    d_{\mathcal{H}}(\mathcal{D},\mathcal{D}^{'}) = 2 \sup_{A \in \mathcal{A}_{\mathcal{H}}} \vert Pr_{\mathcal{D}}(A) - Pr_{\mathcal{D}^{'}}(A) \vert,
$
where $\sup$ is the supremum over the subsets and $Pr_{\mathcal{D}}(A)$ is the probability that subset $A$ is in $\mathcal{D}$.
\label{def:div}
\end{definition}
\vspace{-2mm}
A hypothesis class $\mathcal{H}$ induces its own symmetric difference hypothesis space $\mathcal{H}\Delta\mathcal{H} := \{h(x) \oplus h^{'}(x) \vert h, h^{'} \in \mathcal{H} \}$, where $\oplus$ is the XOR operation. The $\mathcal{H}\Delta\mathcal{H}$ space has an associated divergence $d_{\mathcal{H}\Delta\mathcal{H}}$ defined as in Definition~\ref{def:div} but with $\mathcal{A} \in \mathcal{A}_{\mathcal{H}\Delta\mathcal{H}}$ instead. 
Denoting the optimal joint hypothesis for a single source $s$ and target $t$ as $h^{*}_{s,t} := \argmin_{h \in \mathcal{H}} \{ \varepsilon^{\mathsf{S}}_s(h) + \varepsilon^{\mathsf{T}}_t(h) \}$ and the corresponding minimum joint error as $\lambda^{*}_{s,t} := \varepsilon^{\mathsf{S}}_s(h^*) + \varepsilon^{\mathsf{T}}_t(h^*)$,
~\cite{ben2010theory} proved a generalization bound on the target error $ \varepsilon^{\mathsf{T}}_t(h)$ in terms of the empirical source error and the empirical $\widehat{d}_{\mathcal{H}\Delta\mathcal{H}}$ divergence:

\vspace{-.1mm}

\begin{restatable}{theorem}{bd_2010} \label{thm:bd_2010} Let $\mathcal{H}$ be a hypothesis space of Vapnik–Chervonenkis (VC) dimension $d$, $s$ be a source domain, $t$ be a target domain, and $\widehat{\mathcal{D}}^{\mathsf{S}}_s$, $\widehat{\mathcal{D}}^{\mathsf{T}}_t$ be the empirical distributions induced by samples of size $n$ drawn from $\mathcal{D}^{\mathsf{S}}_s$ and $\mathcal{D}^{\mathsf{T}}_t$. Then, with probability of at least $1-\delta$ over the choice of samples,
\vspace{-2mm}

{\normalsize
\begin{align}\label{eq:bd_2010}
    &\varepsilon^{\mathsf{T}}_t(h) \leq \widehat{\varepsilon}^{\mathsf{S}}_s(h) + \frac{\widehat{d}_{\mathcal{H} \Delta \mathcal{H}} (\widehat{\mathcal{D}}^{\mathsf{S}}_s,\widehat{\mathcal{D}}^{\mathsf{T}}_t)}{2} \\ \nonumber
    & + 4 \sqrt{\frac{2d \log(2n) + \log(4/\delta)}{n} } + \lambda^{*}_{s,t}, \forall h \in \mathcal{H}.
\end{align}}
\end{restatable}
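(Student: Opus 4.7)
The plan is to follow the classical template of Ben-David et al., proceeding in two phases: first establish a bound on the true target error in terms of the true source error and the true $\mathcal{H}\Delta\mathcal{H}$-divergence, and then lift this to empirical quantities via VC-based uniform convergence together with a union bound.

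For the first phase, I would fix any $h \in \mathcal{H}$ and repeatedly apply the triangle inequality for the error functional $\varepsilon^{\mathsf{T}}_t(\cdot,\cdot)$, using the joint optimal hypothesis $h^{*}_{s,t}$ as the pivot. Specifically, I would first write $\varepsilon^{\mathsf{T}}_t(h) \leq \varepsilon^{\mathsf{T}}_t(h^{*}_{s,t}) + \varepsilon^{\mathsf{T}}_t(h,h^{*}_{s,t})$, and then bound the hypothesis-divergence term by adding and subtracting $\varepsilon^{\mathsf{S}}_s(h,h^{*}_{s,t})$, yielding
\begin{equation*}
\varepsilon^{\mathsf{T}}_t(h,h^{*}_{s,t}) \leq \varepsilon^{\mathsf{S}}_s(h,h^{*}_{s,t}) + \bigl| \varepsilon^{\mathsf{T}}_t(h,h^{*}_{s,t}) - \varepsilon^{\mathsf{S}}_s(h,h^{*}_{s,t}) \bigr|.
\end{equation*}
The key structural step is recognizing that, since $h \otimes h^{*}_{s,t} \in \mathcal{H}\Delta\mathcal{H}$, the absolute difference above is at most $\tfrac{1}{2} d_{\mathcal{H}\Delta\mathcal{H}}(\mathcal{D}^{\mathsf{S}}_s,\mathcal{D}^{\mathsf{T}}_t)$ by Definition~\ref{def:div}. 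One more triangle inequality gives $\varepsilon^{\mathsf{S}}_s(h,h^{*}_{s,t}) \leq \varepsilon^{\mathsf{S}}_s(h) + \varepsilon^{\mathsf{S}}_s(h^{*}_{s,t})$, and collecting the $h^{*}_{s,t}$-terms reconstructs $\lambda^{*}_{s,t}$, producing the population-level bound
\begin{equation*}
\varepsilon^{\mathsf{T}}_t(h) \leq \varepsilon^{\mathsf{S}}_s(h) + \tfrac{1}{2} d_{\mathcal{H}\Delta\mathcal{H}}(\mathcal{D}^{\mathsf{S}}_s,\mathcal{D}^{\mathsf{T}}_t) + \lambda^{*}_{s,t}.
\end{equation*}

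For the second phase, I would replace $\varepsilon^{\mathsf{S}}_s(h)$ by $\widehat{\varepsilon}^{\mathsf{S}}_s(h)$ and $d_{\mathcal{H}\Delta\mathcal{H}}(\mathcal{D}^{\mathsf{S}}_s,\mathcal{D}^{\mathsf{T}}_t)$ by $\widehat{d}_{\mathcal{H}\Delta\mathcal{H}}(\widehat{\mathcal{D}}^{\mathsf{S}}_s,\widehat{\mathcal{D}}^{\mathsf{T}}_t)$ using standard VC bounds. Concretely, uniform convergence over $\mathcal{H}$ gives $|\varepsilon^{\mathsf{S}}_s(h) - \widehat{\varepsilon}^{\mathsf{S}}_s(h)| \leq 2\sqrt{(2d\log(2n)+\log(2/\delta'))/n}$ for all $h$ simultaneously, and because $\mathcal{H}\Delta\mathcal{H}$ has VC dimension at most $2d$ (in fact the Ben-David bound absorbs this into the same $d$ via a careful Sauer-Shelah argument on the symmetric difference class), an analogous uniform deviation bound applies to the empirical $\mathcal{H}\Delta\mathcal{H}$-divergence. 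Choosing $\delta' = \delta/2$ for each event and taking a union bound yields a single confidence level $1-\delta$, and combining the two deviations produces the stated constant $4\sqrt{(2d\log(2n)+\log(4/\delta))/n}$ when substituted into the population bound.

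The main obstacle will be bookkeeping rather than any single deep step: one has to be careful that the $\mathcal{H}\Delta\mathcal{H}$ uniform convergence produces a bound of the same functional form as the $\mathcal{H}$ bound (so they can be merged into one constant), and that the factor of $1/2$ in front of $d_{\mathcal{H}\Delta\mathcal{H}}$ survives the transition from the supremum-over-sets definition to the measurable XOR-based estimator computed from finite samples. Because this bound is by now a standard ingredient and the statement matches verbatim the one in~\cite{ben2010theory}, I would present the derivation compactly and defer the VC concentration inequalities to a citation rather than re-derive them.
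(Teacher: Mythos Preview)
Your proposal correctly sketches the classical Ben-David et al.\ argument. However, note that the paper does not prove this theorem at all: it is stated in Section~\ref{ssec:relevant_theory} as theoretical background and attributed directly to~\cite{ben2010theory}, with no proof given in the main text or appendices (the appendices only prove Theorem~\ref{thm:upper_bound_tte}, Lemma~\ref{lemma1}, and Corollary~\ref{coro:gen_error_bound_rademacher}). So there is no in-paper proof to compare against; your instinct at the end of the proposal---to present this compactly and defer the VC concentration to a citation---is exactly what the paper does, only taken to the extreme of omitting the derivation entirely.
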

Existing literature~\cite{peng2019federated,zhao2018adversarial}, which has extended Theorem~\ref{thm:bd_2010} to multi-source domain adaptation, has relied on identifying an optimal hypothesis $h^{*}_{s,t}$ for every source $s \in \mathcal{S}$ to a single target $t$ in order to compute a combined minimum error. 
However, finding $h^{*}_{s,t}$ and $\lambda^{*}_{s,t}$ requires optimizing the error over an entire hypothesis space $\mathcal{H}$ for both source and target domains, which is impractical in our setting given the inability to centralize data in many FL applications. Furthermore, while these bounds consider the worst-case where the errors across sources are independent, in practice they may exhibit dependencies which can be exploited to obtain tighter bounds.
Motivated by this, in Sec.~\ref{ssec:theoretical_results}, we develop a multi-source generalization error bound which considers the \textit{joint} effect of sources, eliminating the need for $\lambda^{*}_{s,t}$ terms from~\eqref{eq:bd_2010}. Subsequently, we formulate our link formation optimization problem that jointly minimizes source and target errors. 

\section{Source-Target Determination and Link Formation in FL (ST-LF)}
\label{sec:st-lf-all}

\vspace{-.1mm}
\subsection{Generalization Error Characterization} \label{ssec:theoretical_results}

One of our central contributions is developing a multi-source generalization error bound for the true target error $\varepsilon^{\mathsf{T}}_t(h^{\mathsf{T}}_t)$ that consists of optimizable/controllable terms.
In the process to obtain such a bound, we first bound the true target error $\varepsilon^{\mathsf{T}}_t(h^{\mathsf{T}}_t)$ as a combination of terms that compare the domain at target $t$ to the domains at sources $s \in \mathcal{S}$. This makes the bound a function of controllable combination weights $\alpha_{s,t}$ and reveals their impact on the performance. 


\begin{restatable}{theorem}{ourgenbound} \label{thm:upper_bound_tte} (Upper Bound for True Target Error) 
Given a set of sources $\mathcal{S}$ and a weighted target hypothesis $h^{\mathsf{T}}_t = \sum_{s \in \mathcal{S}} \alpha_{s,t} h^{\mathsf{S}}_s$, where $0 \leq \alpha_{s,t} \leq 1~\forall s,t,$ and $\sum_{s \in \mathcal{S}} \alpha_{s,t} = 1$, for a target $t \in \mathcal{T}$, the true target error is bounded as follows: 

\vspace{-2mm}
{\normalsize
\begin{align} 
     \varepsilon^{\mathsf{T}}_t (h^{\mathsf{T}}_t ) \leq &
    \sum_{s \in \mathcal{S}} \hspace{-0.1mm} \alpha_{s,\hspace{-0.1mm}t} \hspace{-0.1mm} \bigg[ \hspace{-0.1mm}
    \underbrace{\varepsilon^{\mathsf{S}}_s\hspace{-0.1mm}(h^{\mathsf{S}}_s)}_{(i)} 
    \hspace{-0.1mm} + \hspace{-0.1mm}\nonumber \\
    & \underbrace{\varepsilon^{\mathsf{T}}_t\hspace{-0.1mm}(f^{\mathsf{S}}_s)}_{(ii)}
    \hspace{-0.1mm} + \hspace{-0.1mm}
    \underbrace{\frac{d_{\mathcal{H}\Delta \mathcal{H}}\hspace{-0.1mm}(\mathcal{D}^{\mathsf{T}}_t,\hspace{-0.1mm}\mathcal{D}^{\mathsf{S}}_s)}{2}}_{(iii)} 
    \hspace{-0.1mm} + \hspace{-0.1mm}
    \underbrace{\varepsilon^{\mathsf{T}}_t\hspace{-0.1mm}(h^{\mathsf{T}}_t\hspace{-0.1mm},\hspace{-0.1mm}h^{\mathsf{S}}_s)}_{(iv)} \bigg].\label{eq:thm1_main_result} \hspace{-2mm}
\end{align}}
\end{restatable}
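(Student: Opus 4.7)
The plan is to derive the bound by chaining two applications of the triangle inequality (first through a source hypothesis $h^{\mathsf{S}}_s$, then through a source labeling function $f^{\mathsf{S}}_s$) and finishing with the standard $\mathcal{H}\Delta\mathcal{H}$-divergence bridge that underlies Theorem~\ref{thm:bd_2010}. The constraint $\sum_{s\in\mathcal{S}}\alpha_{s,t}=1$ is what lets the multi-source structure collapse cleanly: I will first derive an upper bound that holds for \emph{every} fixed $s$, then multiply both sides by $\alpha_{s,t}$ and sum, so the identical left-hand side $\varepsilon^{\mathsf{T}}_t(h^{\mathsf{T}}_t)$ is preserved while the right-hand side becomes the required convex combination.

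Concretely, I would proceed in four steps. Step 1: write $\varepsilon^{\mathsf{T}}_t(h^{\mathsf{T}}_t) = \sum_{s\in\mathcal{S}}\alpha_{s,t}\,\varepsilon^{\mathsf{T}}_t(h^{\mathsf{T}}_t)$ using $\sum_s\alpha_{s,t}=1$, so it suffices to bound the inner expression for each $s$. Step 2: apply the pointwise triangle inequality to $|h^{\mathsf{T}}_t(x)-f^{\mathsf{T}}_t(x)|$ with intermediate function $h^{\mathsf{S}}_s$ and take expectation over $\mathcal{D}^{\mathsf{T}}_t$, obtaining
\[
\varepsilon^{\mathsf{T}}_t(h^{\mathsf{T}}_t) \leq \varepsilon^{\mathsf{T}}_t(h^{\mathsf{T}}_t,h^{\mathsf{S}}_s) + \varepsilon^{\mathsf{T}}_t(h^{\mathsf{S}}_s,f^{\mathsf{T}}_t),
\]
which already yields term (iv). Step 3: apply the triangle inequality a second time to the residual $\varepsilon^{\mathsf{T}}_t(h^{\mathsf{S}}_s,f^{\mathsf{T}}_t)$, inserting $f^{\mathsf{S}}_s$:
\[
\varepsilon^{\mathsf{T}}_t(h^{\mathsf{S}}_s,f^{\mathsf{T}}_t) \leq \varepsilon^{\mathsf{T}}_t(h^{\mathsf{S}}_s,f^{\mathsf{S}}_s) + \varepsilon^{\mathsf{T}}_t(f^{\mathsf{S}}_s,f^{\mathsf{T}}_t),
\]
where the second summand equals $\varepsilon^{\mathsf{T}}_t(f^{\mathsf{S}}_s)$ by convention, contributing term (ii). Step 4: invoke the standard divergence bridge
\[
\varepsilon^{\mathsf{T}}_t(h^{\mathsf{S}}_s,f^{\mathsf{S}}_s) \leq \varepsilon^{\mathsf{S}}_s(h^{\mathsf{S}}_s,f^{\mathsf{S}}_s) + \frac{1}{2}d_{\mathcal{H}\Delta\mathcal{H}}(\mathcal{D}^{\mathsf{T}}_t,\mathcal{D}^{\mathsf{S}}_s),
\]
which by the paper's convention $\varepsilon^{\mathsf{S}}_s(h^{\mathsf{S}}_s,f^{\mathsf{S}}_s) = \varepsilon^{\mathsf{S}}_s(h^{\mathsf{S}}_s)$ produces terms (i) and (iii). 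Concatenating the three inequalities for a fixed $s$ and then taking the $\alpha_{s,t}$-weighted sum yields~\eqref{eq:thm1_main_result}.

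The main obstacle I anticipate is the clean justification of Step 4, since the classical $\mathcal{H}\Delta\mathcal{H}$ bridge requires $h^{\mathsf{S}}_s\otimes f^{\mathsf{S}}_s \in \mathcal{H}\Delta\mathcal{H}$; this is automatic under the realizability convention $f^{\mathsf{S}}_s\in\mathcal{H}$ implicit throughout the paper's binary-hypothesis setting, but it is worth stating explicitly as a standing assumption. A secondary subtlety is that $h^{\mathsf{T}}_t=\sum_s\alpha_{s,t}h^{\mathsf{S}}_s$ is in general a convex combination of binary outputs rather than a member of $\mathcal{H}$, so each occurrence of $\varepsilon^{\mathsf{T}}_t(h^{\mathsf{T}}_t,\cdot)$ must be interpreted with $|\cdot|$ acting on $\mathbb{R}$; fortunately the only operation applied to $h^{\mathsf{T}}_t$ in the derivation is the pointwise triangle inequality, which requires nothing beyond $h^{\mathsf{T}}_t\colon\mathcal{X}\to\mathbb{R}$, so this relaxation is harmless and does not require any modification to the divergence bridge itself (which is applied only to $h^{\mathsf{S}}_s$ and $f^{\mathsf{S}}_s$, both in $\mathcal{H}$).
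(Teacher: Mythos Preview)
Your proposal is correct and follows essentially the same route as the paper: rewrite the left-hand side as a convex combination via $\sum_s\alpha_{s,t}=1$, apply the triangle inequality twice (through $h^{\mathsf{S}}_s$, then through $f^{\mathsf{S}}_s$), and finish with the $\mathcal{H}\Delta\mathcal{H}$ bridge from Ben-David et al. The only cosmetic difference is that the paper spells out your Step~4 as an explicit add--subtract of $\varepsilon^{\mathsf{S}}_s(h^{\mathsf{S}}_s,f^{\mathsf{S}}_s)$ followed by taking absolute value before invoking Lemma~3 of~\cite{ben2010theory}, whereas you state the resulting one-sided inequality directly; your remarks on the implicit assumption $f^{\mathsf{S}}_s\in\mathcal{H}$ and on $h^{\mathsf{T}}_t\notin\mathcal{H}$ are more careful than the paper itself.
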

\vspace{-1.5mm}
The proof of Theorem~\ref{thm:upper_bound_tte} is given in Appendix~\ref{app_ssec:thm2}; the proofs of all subsequent results are in their appropriate appendices. 

Theorem~\ref{thm:upper_bound_tte} elucidates a set of theoretical criteria for efficient model transfer: (i) a source hypothesis should have small error $\varepsilon^{\mathsf{S}}_s(h^{\mathsf{S}}_s)$, (ii) the ground-truth labeling functions at the source and target should be similar as measured by $\varepsilon^{\mathsf{T}}_t(f^{\mathsf{S}}_s,f^{\mathsf{T}}_t)$, (iii) the underlying data distributions at the sources and target should be similar as measured by $\frac{1}{2}d_{\mathcal{H}\Delta \mathcal{H}}(\mathcal{D}^{\mathsf{T}}_t,\mathcal{D}^{\mathsf{S}}_s)$, and 
(iv) a source hypothesis should lead to a small hypothesis combination noise $\varepsilon^{\mathsf{T}}_t(h^{\mathsf{T}}_t,h^{\mathsf{S}}_s)$. 
Given a target $t$, the bound in~\eqref{eq:thm1_main_result} favors a larger combination weight $\alpha_{s,t}$ between a source-target pair $(s,t)$ (i.e., model transfer between the two) if the hypothesis/domain at a source $s$ satisfies the four aforementioned criteria.




The bound in~\eqref{eq:thm1_main_result} contains terms defined based on the underlying data distributions, e.g., $\mathcal{D}^{\mathsf{T}}_t$, that cannot be measured in practice.  
We thus aim to transform terms $(i)$, $(iii)$, and $(iv)$ in~\eqref{eq:thm1_main_result} to measurable quantities that can be computed in practical systems.\footnote{Term $(ii)$ in~\eqref{eq:thm1_main_result} is based on ground-truth labeling functions and thus is a constant.}
To this end, we exploit the empirical Rademacher complexity and transform \textit{true} error measures (e.g., $\varepsilon^{\mathsf{T}}_t$) defined on unknown data distributions to \textit{empirical} error measures (e.g., $\widehat{\varepsilon}^{\mathsf{T}}_T$), which are computable.

\begin{definition} (Empirical Rademacher Complexity). Let $\mathcal{H}$ be a family of functions mapping from input domain $\mathcal{X}$ to an interval $[a,b]$ and $\boldsymbol{Q} = \{ x_i \}_{i=1}^{n}$ be a fixed sample of size $n$ with elements in $\mathcal{X}$. Then, the empirical Rademacher complexity of $\mathcal{H}$ with respect to the sample $\boldsymbol{Q}$ is defined as \cite{bartlett2002rademacher}:

\vspace{-2.5mm}
{\small
\begin{equation} \label{eq:rad}
    Rad_{\boldsymbol{Q}}(\mathcal{H}) = \mathbb{E}_{\sigma} \bigg[ \sup_{h \in \mathcal{H}} \frac{1}{n} \sum_{i=1}^{n} \sigma_i h(x_i) \bigg],
\end{equation}}
where $\boldsymbol{\sigma} = \{ \sigma_i \}_{i=1}^{n}$ and $\sigma_i$ are i.i.d. random variables taking values in $\{ +1,-1 \}$.
\end{definition}

Empirical Rademacher complexity is a measure of the complexity of a hypothesis space $\mathcal{H}$ given a set of data $\bm{Q}$. 
Utilizing this measure, we show that true hypothesis noise (i.e., term $(iv)$ in \eqref{eq:thm1_main_result}) can be approximated via empirical errors.

\begin{restatable}{lemma}{hcombination} \label{lemma1} (Upper Bound for Hypothesis Combination Noise) Let $\mathcal{H}$ be a binary hypothesis space. For arbitrary $0 < \delta < 1$, the following bound holds $\forall h \in \mathcal{H}$ with probability of at least $1-\delta$: 
\vspace{-1mm}
\begin{equation} \label{eq:lemma1}
    \varepsilon^{\mathsf{T}}_t(h^{\mathsf{T}}_t, h^{\mathsf{S}}_s) 
    \leq \widehat{\varepsilon}^{\mathsf{T}}_t(h^{\mathsf{T}}_t,h^{\mathsf{S}}_s) 
    + 4 Rad_{\widehat{\mathcal{D}}^{\mathsf{T}}_t}(\mathcal{H}) + 3 \sqrt{\frac{\log(2/\delta)}{2\widehat{D}^{\mathsf{T}}_t}}.
\end{equation}
\end{restatable}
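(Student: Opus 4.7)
\textbf{Proof proposal for Lemma~\ref{lemma1}.}

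The plan is to recognize the right-hand side as a standard Rademacher-style uniform deviation bound applied to a ``difference loss'' class built from $\mathcal{H}$, where the factor of $4$ in front of the empirical Rademacher complexity comes from composing a contraction with subadditivity. Concretely, define the loss class
\[
\mathcal{G}_t = \{\, g_{h_1,h_2}(x) = |h_1(x)-h_2(x)| : h_1, h_2 \in \mathcal{H} \,\},
\]
whose elements map $\mathcal{X} \to [0,1]$. By definitions~\eqref{eq:def_true_terror} and~\eqref{eq:target_div_error}, $\varepsilon^{\mathsf{T}}_t(h_1,h_2) = \mathbb{E}_{x\sim\mathcal{D}^{\mathsf{T}}_t}[g_{h_1,h_2}(x)]$ and $\widehat{\varepsilon}^{\mathsf{T}}_t(h_1,h_2) = \widehat{\mathbb{E}}_{x\sim\widehat{\mathcal{D}}^{\mathsf{T}}_t}[g_{h_1,h_2}(x)]$. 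The lemma therefore reduces to a uniform deviation bound over $\mathcal{G}_t$ with sample size $D^{\mathsf{T}}_t$.

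First I would invoke the classical Rademacher-complexity generalization result for bounded function classes (e.g., applying McDiarmid's inequality twice together with a symmetrization argument): with probability at least $1-\delta$ over the draw of $\widehat{\mathcal{D}}^{\mathsf{T}}_t$, for every $g \in \mathcal{G}_t$,
\[
\mathbb{E}[g] \;\leq\; \widehat{\mathbb{E}}[g] \;+\; 2\, Rad_{\widehat{\mathcal{D}}^{\mathsf{T}}_t}(\mathcal{G}_t) \;+\; 3\sqrt{\frac{\log(2/\delta)}{2 D^{\mathsf{T}}_t}}.
\]
The factor $3$ and the $\log(2/\delta)$ arise exactly because one uses McDiarmid once to concentrate the supremum around its expectation, and a second time to replace the expected Rademacher complexity with the empirical one.

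Next I would reduce $Rad_{\widehat{\mathcal{D}}^{\mathsf{T}}_t}(\mathcal{G}_t)$ to $Rad_{\widehat{\mathcal{D}}^{\mathsf{T}}_t}(\mathcal{H})$. Write each $g_{h_1,h_2} = \varphi(h_1 - h_2)$ with $\varphi(u) = |u|$, a $1$-Lipschitz map with $\varphi(0)=0$. Talagrand's contraction lemma then yields $Rad(\mathcal{G}_t) \leq Rad(\mathcal{H} - \mathcal{H})$, and subadditivity of Rademacher complexity under Minkowski sums gives $Rad(\mathcal{H}-\mathcal{H}) \leq 2\,Rad(\mathcal{H})$. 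Chaining these gives $Rad(\mathcal{G}_t) \leq 2\,Rad_{\widehat{\mathcal{D}}^{\mathsf{T}}_t}(\mathcal{H})$, which is what produces the coefficient $4$ in~\eqref{eq:lemma1}. Specializing the uniform bound to $g = g_{h^{\mathsf{T}}_t, h^{\mathsf{S}}_s}$ finishes the argument.

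The main obstacle is the subtlety that $h^{\mathsf{T}}_t = \sum_{s} \alpha_{s,t} h^{\mathsf{S}}_s$ is a convex combination of binary hypotheses and hence is not itself binary, so the argument of the first slot of $g_{\cdot,\cdot}$ ranges over $\mathrm{conv}(\mathcal{H})$ rather than $\mathcal{H}$. This is handled by the well-known fact that $Rad(\mathrm{conv}(\mathcal{H})) = Rad(\mathcal{H})$, so subadditivity still yields $Rad(\mathrm{conv}(\mathcal{H}) - \mathcal{H}) \leq 2\,Rad(\mathcal{H})$; this observation should be stated explicitly so the contraction step applies cleanly to the real-valued differences $h^{\mathsf{T}}_t(x) - h^{\mathsf{S}}_s(x) \in [-1,1]$.
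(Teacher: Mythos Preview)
Your proposal is correct and matches the paper's proof essentially step for step: the paper also invokes the Bartlett--Mendelson generalization bound to obtain the $2\,Rad(|\bar{\mathcal{H}}-\mathcal{H}|)+3\sqrt{\log(2/\delta)/(2D^{\mathsf{T}}_t)}$ term, applies Ledoux--Talagrand contraction to strip the absolute value, and then shows $Rad(\bar{\mathcal{H}}-\mathcal{H})\le 2\,Rad(\mathcal{H})$ by an explicit calculation that amounts exactly to your subadditivity plus $Rad(\mathrm{conv}(\mathcal{H}))=Rad(\mathcal{H})$ observation. Your write-up is in fact a bit cleaner in naming the abstract facts (convex-hull invariance, subadditivity) where the paper does the same computation by unpacking the definition of $h^{\mathsf{T}}_t$ inside the supremum.
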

\vspace{-2mm}
Lemma~\ref{lemma1} obtains the true hypothesis comparison error $\varepsilon^{\mathsf{T}}_t(h^{\mathsf{T}}_t,h^{\mathsf{S}}_s)$ between hypotheses $h^{\mathsf{T}}_t$ and $h^{\mathsf{S}}_s$ with respect to the data distribution $\mathcal{D}^{\mathsf{T}}_t$ at target $t$. 
Similar to the above procedure, for term $(i)$ in \eqref{eq:thm1_main_result}, we can obtain 
\begin{equation}
\varepsilon^{\mathsf{S}}_s(h^{\mathsf{S}}_s) \leq \widehat{\varepsilon}^{\mathsf{S}}_s(h^{\mathsf{S}}_s) + 2 Rad_{\widehat{\mathcal{D}}^{\mathsf{S}}_s}(\mathcal{H}) + 3\sqrt{\log(2/\delta) / (2\widehat{D}^{S}_s)}
\end{equation}
which resembles the results in~\cite{zhao2018adversarial}. 


The terms in \eqref{eq:lemma1} are measurable quantities. 
{\color{black}Specifically, the empirical Rademacher complexity $Rad_{\widehat{\mathcal{D}}^{\mathsf{T}}_t}(\mathcal{H})$ can be bounded as $Rad_{\widehat{\mathcal{D}}^{\mathsf{T}}_t}(\mathcal{H})\leq \sqrt{2\log(2)}$ via Massart's Lemma (Lemma~\ref{lemma:massart_lemma} in Appendix~\ref{app_ssec:massart_lemma}). 
This is a general bound for binary hypothesis spaces $\mathcal{H}$ as it accounts for the worst-case, in which $\mathcal{H}$ shatters $\widehat{\mathcal{D}}^{\mathsf{T}}_t$.}
We next revisit Theorem~\ref{thm:upper_bound_tte} to obtain a measurable bound on the true target error: 
\begin{restatable}{corollary}{boundrad} \label{coro:gen_error_bound_rademacher} (Measurable Error Bound for True Target Error) Let $\mathcal{H}$ be a hypothesis class, and $\{\widehat{\mathcal{D}}^{\mathsf{S}}_s \}_{s \in \mathcal{S}}$ and $\widehat{\mathcal{D}}^{\mathsf{T}}_t$ be empirical distributions of sources and target domains. 
For arbitrary $0 < \delta < 1$, the following bound holds $\forall h \in \mathcal{H}$ with probability of at least $1-\delta$: 
\vspace{-1.5mm}

{\color{black}
{\normalsize
\begin{align} \label{eq:coro_gen_error_bound_rademacher}
    &\varepsilon^{\mathsf{T}}_t(h^{\mathsf{T}}_t) \hspace{-0.5mm} 
    \leq \hspace{-0.5mm} 
    \sum_{s \in \mathcal{S}} \hspace{-0.8mm} \alpha_{s,t} \hspace{-0.6mm} \Bigg[ \hspace{-0.6mm}
    \underbrace{\widehat{\varepsilon}^{\mathsf{S}}_s(h^{\mathsf{S}}_s) \hspace{-0.8mm} + \hspace{-0.8mm}
    \frac{\widehat{d}_{\mathcal{H} \Delta \mathcal{H}} (\widehat{\mathcal{D}}^{\mathsf{T}}_t \hspace{-0.7mm} , \hspace{-0.7mm} \widehat{\mathcal{D}}^{\mathsf{S}}_s)}{2} \hspace{-0.8mm} 
    + \hspace{-0.8mm} 
    \varepsilon^{\mathsf{T}}_t \hspace{-0.5mm}(f^{\mathsf{S}}_s) \hspace{-0.6mm} + \hspace{-0.6mm} 
    \widehat{\varepsilon}^{\mathsf{T}}_t \hspace{-0.5mm} (h^{\mathsf{T}}_t\hspace{-0.8mm} , \hspace{-0.5mm} h^{\mathsf{S}}_s) }_{(a)} 
    \\ \nonumber 
    & + \hspace{-1mm} 
    10\sqrt{2\log(2)}
    + 
    \underbrace{
    6\Bigg( \hspace{-1.2mm} 
    \left( \hspace{-0.6mm} {\frac{\log(2/\delta)}{2\widehat{D}^{\mathsf{T}}_t}} \hspace{-0.6mm} \right)^{\hspace{-1mm}\frac{1}{2}}
    \hspace{-2mm} + \hspace{-1.2mm} 
    \left( \hspace{-0.6mm}{\frac{\log(2/\delta)}{2\widehat{D}^{\mathsf{S}}_s}} \hspace{-0.6mm} \right)^{\hspace{-1mm} \frac{1}{2}} \hspace{-0.6mm}  }_{(b)} \hspace{-.5mm}\Bigg) \hspace{-0.6mm} \Bigg]. 
\end{align}
} 
} 
\end{restatable}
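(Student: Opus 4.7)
The plan is to chain together Theorem~\ref{thm:upper_bound_tte} with Lemma~\ref{lemma1} and the two analogous empirical-error bounds displayed immediately afterward, then substitute the Massart-type bound on Rademacher complexity to eliminate the $Rad_{\widehat{\mathcal{D}}}(\mathcal{H})$ factors. Concretely, I would first invoke Theorem~\ref{thm:upper_bound_tte} to write $\varepsilon^{\mathsf{T}}_t(h^{\mathsf{T}}_t)$ as a convex combination (weighted by $\alpha_{s,t}$) of the four non-measurable terms (i)--(iv). Term (ii), $\varepsilon^{\mathsf{T}}_t(f^{\mathsf{S}}_s)$, is a domain-level quantity that does not involve $\mathcal{H}$ and is carried through untouched. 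Terms (i), (iii), and (iv) are then replaced by their empirical counterparts using the three probabilistic inequalities stated in the excerpt, namely Lemma~\ref{lemma1} for the hypothesis-combination noise and the two displayed Rademacher-based bounds for $\varepsilon^{\mathsf{S}}_s(h^{\mathsf{S}}_s)$ and $d_{\mathcal{H}\Delta\mathcal{H}}(\mathcal{D}^{\mathsf{T}}_t,\mathcal{D}^{\mathsf{S}}_s)$.

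Next, I would collect the residual Rademacher terms. Since each source $s$ appears once in the sum, the coefficients on $Rad_{\widehat{\mathcal{D}}^{\mathsf{T}}_t}(\mathcal{H})$ coming from Lemma~\ref{lemma1} (contribution $4$) and from the $\mathcal{H}\Delta\mathcal{H}$-divergence bound (contribution $2$ from the target side, divided by $2$ since term (iii) carries a factor $1/2$) add up, and similarly the coefficient on $Rad_{\widehat{\mathcal{D}}^{\mathsf{S}}_s}(\mathcal{H})$ gets contributions from bounds on terms (i) and (iii). I would then apply the Massart-lemma-based upper bound $Rad_{\widehat{\mathcal{D}}}(\mathcal{H})\leq \sqrt{2\log(D)/D}$ referenced from Appendix~\ref{app_ssec:massart_lemma}, replacing each Rademacher quantity by the corresponding $\sqrt{2\log(D^{\mathsf{T}}_t)/D^{\mathsf{T}}_t}$ or $\sqrt{2\log(D^{\mathsf{S}}_s)/D^{\mathsf{S}}_s}$ expression. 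Finally, pulling everything inside the $\sum_{s}\alpha_{s,t}(\cdot)$ bracket (using $\sum_s \alpha_{s,t}=1$ where needed to absorb $s$-independent target terms into the summand) yields exactly the grouping displayed in~\eqref{eq:coro_gen_error_bound_rademacher}, with group $(a)$ being the empirical/constant terms and group $(b)$ collecting all the $\sqrt{\cdot}$ concentration terms.

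The main obstacle I anticipate is bookkeeping: first, carefully tracking the numerical coefficients (the factors $2,3,4,6$ in~\eqref{eq:coro_gen_error_bound_rademacher}) as I combine bounds that have different scalar prefactors, especially because term (iii) enters with a $1/2$ out front while Lemma~\ref{lemma1} does not, and the divergence bound contributes to \emph{both} source-side and target-side Rademacher coefficients. Second, and more substantively, the three probabilistic inequalities each hold with probability at least $1-\delta$, so I would need to invoke a union bound across the three events (and, in principle, across sources $s$) to guarantee that all bounds hold simultaneously with probability at least $1-\delta$. This forces a rescaling of $\delta$ into $\delta/(3|\mathcal{S}|)$ (or similar) inside each $\log(2/\delta)$ term; however, because $\log(2/\delta)$ only enters under a square root and the corollary as stated absorbs logarithmic constants into the ambient $\delta$, I would either restate the result with an adjusted $\delta$ or, if appealing to the looser convention used in~\cite{ben2010theory,zhao2018adversarial}, simply remark that such a union-bound rescaling changes only the logarithmic constants inside the concentration term and the bound in~\eqref{eq:coro_gen_error_bound_rademacher} holds as stated.
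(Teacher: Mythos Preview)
Your proposal is correct and follows essentially the same route as the paper: start from Theorem~\ref{thm:upper_bound_tte}, replace terms (i), (iii), (iv) by their empirical counterparts via Lemma~\ref{lemma1} and the two displayed Rademacher-based inequalities, then apply Massart's lemma to eliminate the Rademacher complexities. Your concern about the union bound across sources and events is legitimate, but the paper's own proof does not address it either, so your treatment is at least as careful.
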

\vspace{-1mm}
We group the terms in~\eqref{eq:coro_gen_error_bound_rademacher} into two categories. The first, $(a)$, captures the impact of source hypotheses, data distributions, and ground-truth labeling functions (i.e., the empirical source error, the empirical divergence, the ground-truth labeling function difference, and the hypothesis combination noise). 
The second, $(b)$, considers the impact of data quantity at the sources and targets on the  error. 

Unlike existing multi-source generalization error bounds~\cite{peng2019federated,zhao2018adversarial}, our result in Corollary~\ref{coro:gen_error_bound_rademacher} does not require pairwise hypothesis optimization for $h^{*}_{s,t}$ to minimize pairwise error $\lambda^{*}_{s,t}$, which, as discussed in Sec.~\ref{ssec:relevant_theory}, enables us to consider the joint impacts of sources $s \in \mathcal{S}$ used in model transfer to a target $t$. 
Using this important property, we next formulate the multi-source to multi-target federated domain adaptation problem as a link formation optimization.



\subsection{ST-LF Optimization Formulation and Solver}
\label{ssec:init_formulation}
ST-LF aims to transfer source-trained ML models to targets in order to maximize the ML performance across all devices while efficiently using network communication resources. 
To this end, we should first determine the source-target classification across devices ${\boldsymbol{\psi}}=\{\psi_i\}_{i\in\mathcal{N}}$, with $\psi_i = 0$ if device $i$ is a source and $\psi_i = 1$ otherwise, since that is not known \textit{a priori}. Then, the combination weights (interpreted as link/edge weights) between different source-target pairs $\boldsymbol{\alpha}=\{\alpha_{i,j}\}_{i,j\in\mathcal{N}}$ should be designed, as visualized in Fig.~\ref{fig:intro_model}.
Formally, we pose ST-LF as the following optimization problem:
\vspace{-.5mm}
{\normalsize
\begin{align}
    & \hspace{-3mm} (\boldsymbol{\mathcal{P}}):~\argmin_{\boldsymbol{\alpha},\boldsymbol{\psi}} 
    \phi^{\mathsf{S}} \underbrace{\sum_{i \in \mathcal{N}} (1-\psi_i) S_i}_{(c)} \nonumber \\
    & \hspace{-3mm} + \phi^{\mathsf{T}} \hspace{-1mm} \underbrace{\sum_{j \in \mathcal{N}} \hspace{-0.5mm} \psi_j \hspace{-0.5mm} \sum_{i \in \mathcal{N}} (1-\psi_i) \alpha_{i,j} T_{i,j}}_{(d)} 
    + \phi^{\mathsf{E}} \underbrace{\sum_{i \in \mathcal{N}} \sum_{j \in \mathcal{N}} E_{i,j} (\alpha_{i,j})}_{(e)} \label{eq:obj_fxn_1} \\
    & \textrm{subject to} \nonumber \\
    & h^{\mathsf{T}}_j = \sum_{i \in \mathcal{N}} \alpha_{i,j} (1-\psi_i)\psi_j h^{\mathsf{S}}_i, \forall j \in \mathcal{N} \label{eq:target_hypothesis_def}\\
    & \sum_{i \in \mathcal{N}} \alpha_{i,j} = \psi_j, \forall j \in \mathcal{N} \label{eq:only_targets_receive} \\
    & E_{i,j} (\alpha_{i,j}) = K_{i,j} \frac{\alpha_{i,j}}{\alpha_{i,j} + \epsilon_{E}}, \forall i,j \in \mathcal{N} \label{eq:def_E_i} \\ 
    & 0 \leq \alpha_{i,j} \leq 1, \forall i,j \in \mathcal{N} \label{eq:alpha_limits} \\
    & \psi_i(t) \in \{ 0,1 \}, \forall i,j \in \mathcal{N}. \label{eq:psi_limits}
\end{align}
}
\vspace{-4mm}

\vspace{1mm}
\textbf{Objective of $(\boldsymbol{\mathcal{P}})$.} \eqref{eq:obj_fxn_1} captures the trade-off between source/target classification errors and communication energy consumption from model transfers. In particular, $(c)$ captures the true source error obtained in Lemma~\ref{lemma1}, where 
{\color{black}
\begin{equation}
S_i = \widehat{\varepsilon}^{\mathsf{S}}_i(h^{\mathsf{S}}_i) 
+ 2 \sqrt{2\log(2)} 
+ 3 \sqrt{ \log(2/\delta)/(2\widehat{D}^{\mathsf{S}}_i)}.
\end{equation}
}
Also, $(d)$ captures the error bound at target domains deduced in Corollary~\ref{coro:gen_error_bound_rademacher} through 
{\normalsize
{\color{black}
\begin{equation}
\begin{aligned}
    & T_{i,j} = \widehat{\varepsilon}^{\mathsf{S}}_i(h^{\mathsf{S}}_i) + 10 \sqrt{2\log(2)} \\
    & + \varepsilon^{\mathsf{T}}_j(f^{\mathsf{T}}_j,f^{\mathsf{S}}_i) + \frac{1}{2}\widehat{d}_{\mathcal{H}\Delta\mathcal{H}}(\widehat{\mathcal{D}}^{\mathsf{T}}_j,\widehat{\mathcal{D}}^{\mathsf{S}}_i) + \widehat{\varepsilon}^{\mathsf{T}}_j(h^\mathsf{T}_j,h^{\mathsf{S}}_i) \\
    & + 6 \left( \sqrt{ \log(2/\delta)/(2\widehat{D}^{\mathsf{S}}_i)}
    + \sqrt{ \log(2/\delta)/ (2\widehat{D}^{\mathsf{T}}_j)} \right).
\end{aligned}
\end{equation}}}
Finally, $(e)$ measures the communication energy consumption specified in~\eqref{eq:def_E_i}. 
Minimizing $(c)$ in $(\boldsymbol{\mathcal{P}})$ influences source/target classification $\boldsymbol{\psi}$ as the optimization will aim to classify only those devices with high quality data as sources. 
Terms $(d)$ and $(e)$ primarily determine the combination weights $\boldsymbol{\alpha}$ because the true target errors in term $(d)$ can be minimized by proper selection of $\boldsymbol{\alpha}$ to optimize target hypotheses $h^{\mathsf{T}}_j$, and the communication energy cost in term $(e)$ is only incurred when a source $i$ transmits its hypothesis to a target $j$, i.e., only when $\alpha_{i,j} > 0$ due to~\eqref{eq:def_E_i}. 
We can shift the relative importance of terms $(c)$, $(d)$, and $(e)$ in~\eqref{eq:obj_fxn_1} by adjusting the scaling coefficients $ \phi^{\mathsf{S}},\phi^{\mathsf{T}},\phi^{\mathsf{E}}\geq 0$. 
In the extreme case, $\phi^{\mathsf{S}}=0$ makes all devices sources ($\mathcal{S} = \mathcal{N}$), $\phi^{\mathsf{T}}=0$ makes all devices targets ($\mathcal{T}=\mathcal{N}$), and $\phi^{\mathsf{E}}=0$ means the optimization does not consider communication energy use. 

$(\boldsymbol{\mathcal{P}})$ is the first concrete formulation of link formation in FL domain adaptation enabled via our results in Sec.~\ref{ssec:theoretical_results}. 
{\color{black}While communication energy is chosen as the cost of an active link between devices $i$ and $j$ (similarly to~\cite{tran2019federated,wang2022uav}), the factor $K_{i,j}$ in~\eqref{eq:obj_fxn_1} and~\eqref{eq:def_E_i} could be adjusted on a case-by-case basis to fit various other potential link costs of interest such as latency, privacy constraints, or link formation, as long as these costs do not introduce factors that violate signomial/geometric programming rules that we will employ later in this section.}
{\color{black}Different practical link formation costs can lead $(\boldsymbol{\mathcal{P}})$ to yield alternative model combinations.}

{\color{black}
More generally, we can draw parallels between $(\boldsymbol{\mathcal{P}})$ and link formation problems, which have been well studied in network science. This includes in wireless/IoT networks, where links are established for e.g., throughput maximization~\cite{ning2017social}, and in social networks, where link costs correspond to e.g., privacy/trust measures~\cite{zhang2015integrated}. 
However, existing link formation methodologies are not well-suited for the federated domain adaptation problem, presented in $(\boldsymbol{\mathcal{P}})$, which poses a tight coupling between ML performance and resource efficiency.}

\textbf{Constraints of $(\boldsymbol{\mathcal{P}})$.} 
We replace the original definition of $h^{\mathsf{T}}_j=\sum_{i \in \mathcal{S}} \alpha_{i,j} h^{\mathsf{S}}_i$ (Sec.~\ref{ssec:prob_defn}) with 
\eqref{eq:target_hypothesis_def} by adding the coefficient $(1-\psi_i)\psi_j$ inside the sum. This coefficient ensures that target hypotheses are only reliant on source hypotheses since $(1-\psi_i)\psi_j=1$ if and only if $\psi_i=0$ (i.e., $i$ is a source) and $\psi_j=1$ (i.e., $j$ is a target).
Constraint~\eqref{eq:only_targets_receive} guarantees that only target devices ($\psi_j=1$) receive models (i.e., $\sum_{i\in\mathcal{N}} \alpha_{i,j} =1$) and source devices ($\psi_j=0$) only send models (i.e., $\sum_{i\in\mathcal{N}} \alpha_{i,j}=0$).~\eqref{eq:def_E_i} models when communication energy is incurred between two devices $i$ and $j$, i.e., only if model transfer occurs between them ($\alpha_{i,j} > 0$). 
{\color{black}
In fact, $\alpha_{i,j}$ only has two states: either $\alpha_{i,j}>0$ so source $i$ transmits to target $j$ or $\alpha_{i,j} = 0$, meaning there is no transmission.
}
To capture this in a tractable manner, we approximate the desired behavior via $\frac{\alpha_{i,j}}{\alpha_{i,j}+\epsilon_E}$, where $0<\epsilon_E \ll 1$, in~\eqref{eq:def_E_i}, so that $\frac{\alpha_{i,j}}{\alpha_{i,j}+\epsilon_E}\approx 1$ when $\alpha_{i,j}>0$, and $\frac{\alpha_{i,j}}{\alpha_{i,j}+\epsilon_E}=0$ when $\alpha_{i,j}=0$. 
$K_{i,j}$ is the energy/cost of model transfer from $i$ to $j$, for which we adopt physical layer communication models presented in Sec.~\ref{sec:experiments}. 
{\color{black}Thus, in our work, the energy consumption $K_{i,j}$ is taken as the cost of link formation.}
Finally,~\eqref{eq:alpha_limits} and~\eqref{eq:psi_limits} are the feasibility constraints for offloading ratios and source/target classification respectively.

\begin{algorithm}[t]
 	\caption{Determination of Empirical Divergences}\label{alg:fed_div_est}
 	\SetKwFunction{Union}{Union}\SetKwFunction{FindCompress}{FindCompress}
 	\SetKwInOut{Input}{input}\SetKwInOut{Output}{output}
 	{\small
 	\Input{Starting Domain Classification Hypothesis $h^{'}$, Network Devices $\mathcal{N}$, Local Training Period $T^{d}$, Number of Local 
        Aggregations $\tau^{d}$}}
        {\small
 	\For{Device pair $(i,j)$, $i \in \mathcal{N}$ $j \in \mathcal{N}$, $i \neq j$}{
 	Initialize hypothesis at both devices $i$ and $j$, i.e., $h_i=h_j=h^{'}$. \\
 	Label all $x_d \in \widehat{\mathcal{D}}_i$ as $0$ and all $x_d \in \widehat{\mathcal{D}}_j$ as $1$. \\
 	\For{$\tau_i=1$ to $\tau^{d}$}{
 	Locally train hypotheses at $i$ and $j$ on local data $\widehat{\mathcal{D}}_i$ and $\widehat{\mathcal{D}}_j$ respectively for $T^{d}$ 
        iterations. \\
 	Transfer hypothesis $h_i$ at device $i$ to device $j$, and vice versa. \\
 	Take the average of $h_i$ and $h_j$, i.e., $\bar{h}^{'} = \frac{h_i+h_j}{2}$. \\
 	}
 	Measure classification error of final hypothesis $\bar{h}^{'}$ on data at devices $i$ and $j$. \\
        Transfer error of final hypothesis at device $i$ to device $j$, and vice versa. \\
 	Compute $\widehat{d}_{\mathcal{H}}(\widehat{\mathcal{D}}_j,\widehat{\mathcal{D}}_i)$ using the classification error. \\
 	}
 	Obtained $\widehat{d}_{\mathcal{H}}(\widehat{\mathcal{D}}_j,\widehat{\mathcal{D}}_i)$ for all potential source and target pair combinations, i.e., $\forall i \in \mathcal{N}$ and $ \forall j \in \mathcal{N}$. 
 	}
\end{algorithm}

\textbf{Computing the terms of $({\boldsymbol{\mathcal{P}}})$.}
Before we can solve $({\boldsymbol{\mathcal{P}}})$, the optimization terms must first be computed. In particular, the source error terms embedded within $S_i$ (found in part $(c)$ of~\eqref{eq:obj_fxn_1}) can all be computed directly, for example, the empirical source errors, $\widehat{\varepsilon}^{\mathsf{S}}_i(h^{\mathsf{S}}_i)$, can be locally determined by devices themselves. 
While the empirical error terms, $\widehat{\varepsilon}^{\mathsf{T}}_j$ and $\widehat{\varepsilon}^{\mathsf{S}}_i$, embedded within $T_{i,j}$ can be similarly computed, the remaining terms found in part $(d)$ of~\eqref{eq:obj_fxn_1} require different strategies. 
Firstly, the ground-truth labeling function difference, $\varepsilon^{\mathsf{T}}_{j}(f^{\mathsf{T}}_j,f^{\mathsf{S}}_i)$ cannot be found in practice, as it requires devices to have concrete ground-truth labeling functions. If devices have concrete ground-truth labeling functions, then the network has no need for any ML, and, therefore, we can omit the ground-truth labeling function differences from the computation of $T_{i,j}$. 

{\color{black}Owing to practical difficulties in computing $\widehat{d}_{\mathcal{H}\Delta\mathcal{H}}$ generally, we follow a similar strategy as~\cite{ben2010theory,zhao2018adversarial} in using $\widehat{d}_{\mathcal{H}}$ as an estimate for $\widehat{d}_{\mathcal{H}\Delta\mathcal{H}}$.} 
{\color{black}To compute the empirical $\frac{1}{2}\widehat{d}_{\mathcal{H}}(\widehat{\mathcal{D}}^{\mathsf{T}}_j,\widehat{\mathcal{D}}^{\mathsf{S}}_i)$ divergence, we propose Algorithm~\ref{alg:fed_div_est}, a decentralized peer-to-peer algorithm designed to operate pairwise between devices.}
Since empirical distribution divergences quantify the separability of two domains based on their local data~\cite{ben2010theory}, our algorithm uses a binary classifier at two devices to obtain $\frac{1}{2}\widehat{d}_{\mathcal{H}}(\widehat{\mathcal{D}}^{\mathsf{T}}_j,\widehat{\mathcal{D}}^{\mathsf{S}}_i)$. 
Essentially, algorithm~\ref{alg:fed_div_est} has three key steps: (i) relabel data as $0$ at the source and $1$ at the target, (ii) separately train and combine binary classifiers at the source and target domains, and (iii) use the final binary classifier to find the empirical separability of the two domains. 
{\color{black}Since Algorithm~\ref{alg:fed_div_est} only relies on sharing the parameters of this binary domain classifier, it thus offers similar data privacy advantages found in FL by avoiding raw data sharing among devices.}
Further discussion of Algorithm~\ref{alg:fed_div_est} is provided in Appendix~\ref{app_ssec:divergence_estimation}.

\begin{algorithm}[t] 
 	\caption{Optimization solver for problem~$\bm{\mathcal{P}}$}\label{alg:cent}
 	\label{alg:optimization_iteration}
 	\SetKwFunction{Union}{Union}\SetKwFunction{FindCompress}{FindCompress}
 	\SetKwInOut{Input}{input}\SetKwInOut{Output}{output}
  	{\small
 	\Input{Convergence criterion.}
 	\Output{$\bm{x}^\star, \textrm{Objective of $\bm{\mathcal{P}}$ evaluated at $\bm{x}^\star$} $}
 	 }
 	 {\small
 	 Set the iteration count $\ell=0$.\\
 	 Choose a feasible point $\bm{x}^{[0]}$.\\
 	 Obtain the monomial approximations~\eqref{eq:approx_source_error},\eqref{eq:approx_target_error1},\eqref{eq:approx_target_error2},\eqref{eq:approx_nrg},\eqref{eq:approx_con_plus},\eqref{eq:approx_con_minus} given $\bm{x}^{[\ell]}$.\label{midAlg1}\\
 	 Replace the results in the approximation of Problem~$\bm{\mathcal{P}}$ (see $\bm{\mathcal{P}}'$ in Appendix~\ref{app_sssec:tx_2gp}). \\ 
 	 With logarithmic change of variables, transform the resulting GP problem to a convex problem. \\ 
 	 $\ell=\ell+1$\\
 	 Obtain the solution of the convex problem using current art solvers (e.g., CVXPY~\cite{diamond2016cvxpy}) to determine  $\bm{x}^{[\ell]}$.\label{Alg:Gpconvexste}\\
 	 \If{two consecutive solutions $\bm{x}^{[\ell-1]}$ and $\bm{x}^{[\ell]}$ do not meet the specified convergence criterion}{
 	\textrm{Go to line~\ref{midAlg1} and redo the steps using $\bm{x}^{[\ell]}$.}\\\Else{Set the solution of the iterations as $\bm{x}^{\star}=\bm{x}^{[\ell]}$.\label{Alg:point2}\\}
 	}}
\end{algorithm} 

\textbf{Solution of $({\boldsymbol{\mathcal{P}}})$.}
{\color{black}
In $({\boldsymbol{\mathcal{P}}})$, the optimization variables are tightly coupled together - as any choice of source/target classification $\psi_i$, $\forall i \in \mathcal{N}$, directly influences possible model offloading ratios $\alpha_{i,j}$, $\forall i,j \in \mathcal{N}$. This coupling manifests itself through the multiplication of optimization variables. For example, term $(d)$ in~\eqref{eq:obj_fxn_1} and~\eqref{eq:target_hypothesis_def} contains the multiplication of real variable $\alpha_{i,j}$ and integer variable $\psi_j$. 
Furthermore, these variable combinations often involve (i) negative optimization variables such as the $1-\psi_i$ term found in $(d)$ and~\eqref{eq:target_hypothesis_def}, (ii) equality constraints such as~\eqref{eq:only_targets_receive}, and (iii) the division of sums involving optimization variables such as~\eqref{eq:def_E_i}.  
As a result of these complex combinations of optimization variables, $({\boldsymbol{\mathcal{P}}})$ belongs to a class of mixed-integer signomial programs, which are known to be NP-hard and non-convex~\cite{chiang2005geometric}. 
Since this formulation captures the intricate and coupled process from source/target identification to model ratio offloading, this complexity is expected. 

We now develop a tractable solution for $({\boldsymbol{\mathcal{P}}})$ based on a set of modifications and approximations for the objective function~\eqref{eq:obj_fxn_1} and constraints~\eqref{eq:target_hypothesis_def}-\eqref{eq:psi_limits}. These approximations can then be iteratively refined, ultimately converging to an optimal solution. 
While our approach in this work is refined specifically for $({\boldsymbol{\mathcal{P}}})$, it can be applied for general link formation problems in decentralized federated domain adaptation, where formulations are concerned with optimizing ML performance through network control.  
We are among the first to leverage these flexible optimization methods for domain adaptation problems. 

Our methodology exploits approximations of~\eqref{eq:obj_fxn_1}-\eqref{eq:psi_limits} to convert $({\boldsymbol{\mathcal{P}}})$ from a mixed-integer signomial programming problem to a geometric programming problem, which, after a logarithmic change of variables, becomes a convex programming problem that can be solved using modern optimization libraries such as CVXPY~\cite{diamond2016cvxpy}. 
Thus, to properly explain our methodology, we must discuss geometric programming (GP), which requires first defining monomials and posynomials. 
\begin{definition} \label{defn:monomial_posynomial}
A \textbf{monomial} is defined as a function\footnote{$\mathbb{R}^n_{++}$ denotes the strictly positive quadrant of an $n$-dimensional Euclidean space.} $f: \mathbb{R}^n_{++}\rightarrow \mathbb{R}$ of the form $f(\bm{y})=z y_1^{\beta_1} y_2^{\beta_2} \cdots y_n ^{\beta_n}$, where $z\geq 0$, $\bm{y}=[y_1,\cdots,y_n]$, and $\beta_j\in \mathbb{R}$, $\forall j$. 
A \textbf{posynomial} $g$ is defined as a sum of monomials, and has form $g(\bm{y})= \sum_{m=1}^{M} z_m y_1^{\beta^m_1} y_2^{\beta^m_2} \cdots y_n ^{\beta^m_n}$, $z_m \geq 0,\forall m$.
\end{definition}
\noindent
A further discussion of GP is available in Appendix~\ref{app_sssec:gp}, but the key point is that only standard GP with posynomial objective function subject to posynomial inequality and monomial equality constraints can enable a logarithmic change of variables and subsequent solution using modern solvers. Our formulation, $({\boldsymbol{\mathcal{P}}})$, violates GP rules. Specifically, the objective function~\eqref{eq:obj_fxn_1} is not a posynomial or a monomial due to the negative optimization variables (i.e., $(1-\psi_i)$) present in parts $(c)$ and $(d)$. Furthermore, both~\eqref{eq:only_targets_receive} and~\eqref{eq:def_E_i} violate GP constraints, as the former is an equality constraint on posynomials while the latter contains a division by a posynomial which is not a posynomial. 
To address all these violating terms, we use the method of penalty functions and auxiliary optimization variables~\cite{xu2014global} to approximate (i) the negative optimization variables, (ii) the posynomial divisors, and (iii) the posynomial equality constraints.

\begin{table*}[tbp]
\begin{minipage}{0.99\textwidth}
{\scriptsize
\begin{equation}\label{eq:approx_source_error}
\begin{aligned}
    F_i(\bm{x}) = \psi_i + \frac{\chi^{\mathsf{S}}_i}{S_i} \geq \widehat{F}_i(\bm{x};\ell) \triangleq \left(\frac{\psi_i F_i([\bm{x}]^{\ell-1}) } {[\psi_i]^{\ell-1}} \right) ^ {\frac{[\psi_i]^{\ell-1}}{F_i([\bm{x}]^{\ell-1})}} 
    \left( \frac{\chi^{\mathsf{S}}_i F_i([\bm{x}]^{\ell-1})}{[\chi^{\mathsf{S}}_i]^{\ell-1}} \right)^ {\frac{[\chi^{\mathsf{S}}_i/ S_i]^{\ell-1}}{F_i([\bm{x}]^{\ell-1})}}
\end{aligned}
\vspace{-1mm}
\end{equation}
\vspace{-1mm}
\hrulefill
\begin{equation}\label{eq:approx_target_error1}
\begin{aligned}
    G_{i,j,k}(\bm{x}) = \psi_k + \frac{\widehat{\chi}^{\mathsf{T}}_{i,j,k}}{\psi_i \alpha_{k,i}\widehat{T}_{i,j,k}} \geq 
    \widehat{G}_{i,j,k}(\bm{x};\ell) 
    \triangleq \left( \frac{\psi_k G_{i,j,k}([\bm{x}]^{\ell-1})}{[\psi_k]^{\ell-1}} \right) ^ {\frac{[\psi_k]^{\ell-1}}{G_{i,j,k}([\bm{x}]^{\ell-1})}}
    \left( \frac{ G_{i,j,k}([\bm{x}]^{\ell-1}) \widehat{\chi}^{\mathsf{T}}_{i,j,k} /(\psi_i \alpha_{k,i}) } 
    {\left[ \widehat{\chi}^{\mathsf{T}}_{i,j,k}/ (\psi_i \alpha_{k,i}) \right]^{\ell-1}} \right) ^ 
    {\frac{\left[ \widehat{\chi}^{\mathsf{T}}_{i,j,k}/ (\widehat{T}_{i,j,k}\psi_i \alpha_{k,i}) \right]^{\ell-1}} {G_{i,j,k}([\bm{x}]^{\ell-1})}} \hspace{-2mm}
\end{aligned}
\vspace{-1mm}
\end{equation}
\vspace{-1mm}
\hrulefill
\begin{equation}\label{eq:approx_target_error2}
\begin{aligned}
    & H_{i,j}(\bm{x}) = \psi_i \widehat{T}_{i,j} + \psi_i \sum_{k \in \mathcal{N}} \widehat{\chi}^{\mathsf{T}}_{i,j,k} + \frac{\chi^{\mathsf{T}}_{i,j}}{\psi_j\alpha_{i,j}} \geq 
    \widehat{H}_{i,j}(\bm{x};\ell) \triangleq \\
    & \left( \frac{\psi_i H_{i,j}([\bm{x}]^{\ell-1})}{[\psi_i]^{\ell-1}}  \right) ^ { \frac{[\psi_i]^{\ell-1} \widehat{T}_{i,j}} {H_{i,j}([\bm{x}]^{\ell-1})}} 
    \left( \frac{ H_{i,j}([\bm{x}]^{\ell-1}) \chi^{\mathsf{T}}_{i,j}/ (\psi_j\alpha_{i,j})  } {\left[ \chi^{\mathsf{T}}_{i,j} / (\psi_j\alpha_{i,j}) \right]^{\ell-1}} \right) ^ {\frac{\left[ \chi^{\mathsf{T}}_{i,j} / (\psi_j\alpha_{i,j})\right]^{\ell-1}} {H_{i,j}([\bm{x}]^{\ell-1})} } 
    \prod_{k \in \mathcal{N}} \left( \frac{\psi_i \widehat{\chi}^{\mathsf{T}}_{i,j,k} H_{i,j}([\bm{x}]^{\ell-1})} 
    {[\psi_i \widehat{\chi}^{\mathsf{T}}_{i,j,k}]^{\ell-1} } \right) ^{ \frac{[\psi_i \widehat{\chi}^{\mathsf{T}}_{i,j,k}]^{\ell-1}}{H_{i,j}([\bm{x}]^{\ell-1})}}
\end{aligned}
\vspace{-1mm}
\end{equation}
\vspace{-1mm}
\hrulefill
\begin{equation}\label{eq:approx_nrg}
\begin{aligned}
    J_{i,j}(\bm{x}) = \alpha_{i,j} + \epsilon_{E} 
    \geq \widehat{J}_{i,j}(\bm{x};\ell) 
    \triangleq \left( \frac{\alpha_{i,j} J_{i,j}([\bm{x}]^{\ell-1})}{[\alpha_{i,j}]^{\ell-1}} \right) ^ {\frac{[\alpha_{i,j}]^{\ell-1}}{J_{i,j}([\bm{x}]^{\ell-1})}} 
    \left( J_{i,j}([\bm{x}]^{\ell-1}) \right) ^ {\frac{\epsilon_E}{J_{i,j}([\bm{x}]^{\ell-1})}},
\end{aligned}
\vspace{-1mm}
\end{equation}
\vspace{-1mm}
\hrulefill
\begin{equation}\label{eq:approx_con_plus}
\begin{aligned}
    & M^{+}_{i,j}(\bm{x}) = \chi^{C}_{i,j} + \epsilon_{C} + \psi_j \geq 
    \widehat{ M}^{+}_{i,j}(\bm{x};\ell) \triangleq 
    \left( \frac{\chi^{C}_{i,j} M^{+}_{i,j}([\bm{x}]^{\ell-1})}{[\chi^{C}_{i,j}]^{\ell-1}} \right) ^{\frac{[\chi^{C}_{i,j}]^{\ell-1}}{M^{+}_{i,j}([\bm{x}]^{\ell-1})}} 
    \left( M^{+}_{i,j}([\bm{x}]^{\ell-1}) \right) ^{\frac{\epsilon_{C}}{M^{+}_{i,j}([\bm{x}]^{\ell-1})}}
    \left( \frac{\psi_j M^{+}_{i,j}([\bm{x}]^{\ell-1})}{[\psi_j]^{\ell-1}} \right) ^{\frac{[\psi_j]^{\ell-1}}{M^{+}_{i,j}([\bm{x}]^{\ell-1})}} \hspace{-2mm}   
\end{aligned}
\vspace{-1mm}
\end{equation}
\vspace{-1mm}
\hrulefill
\begin{equation}\label{eq:approx_con_minus}
\begin{aligned}
    & M^{-}_{i,j}(\bm{x}) = \sum_{i \in \mathcal{N}} \alpha_{i,j} \geq 
    \widehat{ M}^{-}_{i,j}(\bm{x};\ell) \triangleq 
    \prod_{i \in \mathcal{N}} \left( \frac{\alpha_{i,j} M^{-}_{i,j}([\bm{x}]^{\ell-1})}{[\alpha_{i,j}]^{\ell-1}} \right) ^{\frac{[\alpha_{i,j}]^{\ell-1}}{M^{-}_{i,j}([\bm{x}]^{\ell-1})}}  \hspace{-2mm}   
\end{aligned}
\vspace{-1mm}
\end{equation}
\vspace{-1mm}
\hrulefill
}
\end{minipage}
\vspace{-3mm}
\end{table*}

We first consider terms with negative optimization variables. First, we bound these terms using a unique auxiliary variable, yielding the general format: 
$a(x) - b(x) < \chi$, where $a(x)$ represents terms with positive variables, $b(x)$ represents terms with negative variables, and $\chi > 0$ is the auxiliary variable. 
Manipulating this expression yields $a(x)/(b(x) + \chi) \leq 1$, which is not a posynomial due to the division by a posynomial. 
One way to proceed is to approximate the posynomial denominator using a monomial, which allows the expression to become a posynomial (as the division of a posynomial by a monomial is a posynomial). To do so, we can bound a posynomial with a monomial through the arithmetic-geometric mean inequality: 
\begin{lemma}[\textbf{Arithmetic-geometric mean inequality~\cite{duffin1972reversed}}]
\label{Lemma:AG_mean}
Consider a posynomial function $g(\bm{y})=\sum_{i=1}^{i'} u_i(\bm{y})$, where $u_i(\bm{y})$ is a monomial, $\forall i$. The following inequality holds:
\begin{equation}\label{eq:approxPosMonMain}
    g(\bm{y})\geq \widehat{g}(\bm{y})\triangleq \prod_{i=1}^{i'} \left( \frac{u_i(\bm{y})}{\alpha_i(\bm{z})} \right)^{\alpha_i(\bm{z})},
\end{equation}
where $\alpha_i(\bm{z})=u_i(\bm{z})/g(\bm{z})$, $\forall i$, and $\bm{z}>0$ is a fixed point.
\end{lemma}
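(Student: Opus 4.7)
The plan is to recognize Lemma~\ref{Lemma:AG_mean} as a direct application of the weighted arithmetic-geometric mean inequality, instantiated with a carefully chosen set of weights and arguments. The statement asserts that a posynomial is lower-bounded at every point $\bm{y}>0$ by a monomial whose exponents are determined by evaluating the posynomial and its constituent monomials at a fixed expansion point $\bm{z}>0$. Since all $u_i(\bm{y})$ are monomials with nonnegative coefficients (so $u_i(\bm{y})>0$ for $\bm{y}>0$), this sets up exactly the positivity hypothesis needed for AM-GM.

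The first step will be to record the classical weighted AM-GM inequality: for positive reals $a_1,\ldots,a_{i'}$ and nonnegative weights $w_1,\ldots,w_{i'}$ with $\sum_i w_i = 1$, one has $\sum_i w_i a_i \geq \prod_i a_i^{w_i}$. The second step will be to verify that the choice $w_i = \alpha_i(\bm{z}) = u_i(\bm{z})/g(\bm{z})$ yields a legitimate weight vector: each $w_i$ is strictly positive (since $\bm{z}>0$ and the $u_i$ have nonnegative coefficients, with at least one active monomial so that $g(\bm{z})>0$), and by construction
\begin{equation}
    \sum_{i=1}^{i'} \alpha_i(\bm{z}) = \frac{1}{g(\bm{z})}\sum_{i=1}^{i'} u_i(\bm{z}) = 1.
\end{equation}

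The third step will be to substitute $a_i = u_i(\bm{y})/\alpha_i(\bm{z})$ into the weighted AM-GM inequality. The left-hand side collapses by telescoping the weights against the denominators, giving
\begin{equation}
    \sum_{i=1}^{i'} w_i a_i = \sum_{i=1}^{i'} \alpha_i(\bm{z})\cdot \frac{u_i(\bm{y})}{\alpha_i(\bm{z})} = \sum_{i=1}^{i'} u_i(\bm{y}) = g(\bm{y}),
\end{equation}
while the right-hand side is exactly the proposed monomial lower bound,
\begin{equation}
    \prod_{i=1}^{i'} a_i^{w_i} = \prod_{i=1}^{i'}\left(\frac{u_i(\bm{y})}{\alpha_i(\bm{z})}\right)^{\alpha_i(\bm{z})} = \widehat{g}(\bm{y}).
\end{equation}
Combining the two yields $g(\bm{y})\geq \widehat{g}(\bm{y})$, which is the claim. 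As a brief sanity check, evaluating at $\bm{y}=\bm{z}$ makes every base $a_i=u_i(\bm{z})/\alpha_i(\bm{z})=g(\bm{z})$, so $\widehat{g}(\bm{z})=g(\bm{z})^{\sum_i \alpha_i(\bm{z})}=g(\bm{z})$; the bound is tight at the expansion point, which is the property used in line~\ref{midAlg1} of Algorithm~\ref{alg:optimization_iteration} to guarantee feasibility of the iterate carried between successive convex approximations.

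There is no genuine obstacle here; the main subtlety is simply bookkeeping, i.e., choosing the weights $w_i = \alpha_i(\bm{z})$ so that they normalize to one and so that the weighted arithmetic mean reduces cleanly to $g(\bm{y})$. The only technical remark worth highlighting is the positivity assumption: if $g(\bm{z})=0$ (i.e., all coefficients vanish at $\bm{z}$), the $\alpha_i(\bm{z})$ are undefined, but this case is excluded by the $\bm{z}>0$ hypothesis together with the nonnegativity of monomial coefficients in Definition~\ref{defn:monomial_posynomial}, ensuring $g(\bm{z})>0$ whenever the posynomial is nontrivial.
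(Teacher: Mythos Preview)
Your proof is correct: it is the standard instantiation of the weighted AM--GM inequality with weights $w_i=\alpha_i(\bm{z})$ and arguments $a_i=u_i(\bm{y})/\alpha_i(\bm{z})$, and your check that $\sum_i \alpha_i(\bm{z})=1$ and that the left-hand side collapses to $g(\bm{y})$ is exactly the point. The paper does not supply its own proof of this lemma---it simply cites it from~\cite{duffin1972reversed}---so there is nothing to compare against; your argument is the canonical one and would be suitable as a self-contained justification.
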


\noindent 
The arithmetic-geometric mean inequality yields an initial, monomial bound that can be rather loose, but can be iteratively refined to obtain near equality with the original posynomial. 
We repeat this posynomial-to-monomial approximation process for terms $(c)$ and $(d)$ from~\eqref{eq:obj_fxn_1}, as they contain negative optimization variables. Additionally, $T_{i,j}$ contains $h^{\mathsf{T}}_j$, which contains another negative $(1-\psi_j)$ term, so we use an approximation for $T_{i,j}$ within our approximation for $(d)$ from~\eqref{eq:obj_fxn_1}. 
As~\eqref{eq:def_E_i} also has a posynomial denominator, we apply the same strategy to approximate its posynominal denominator as a monomial. 
Finally, we can substitute the equality constraint in~\eqref{eq:only_targets_receive} into two constraints (i) $\sum_{i \in \mathcal{N}} \alpha_{i,j} - \psi_j \leq \chi^{C}_{i,j} + \epsilon_{C}$ and (ii) $\sum_{i \in \mathcal{N}} \alpha_{i,j} - \psi_j \geq \chi^{C}_{i,j} - \epsilon_{C}$, with auxiliary variable $\chi^{\mathsf{C}}_{i,j}$ and a very small constant $\epsilon_C > 0$. In this way, the objective function augmented with $\chi^{\mathsf{C}}_{i,j}$ will squeeze (i) and (ii), approximating equality. Both constraints (i) and (ii) involve negative optimization variables, allowing us to then approximate them using the above outlined process for~\eqref{eq:obj_fxn_1}. 

To summarize, the approximations for~\eqref{eq:obj_fxn_1}-\eqref{eq:def_E_i} are displayed in~\eqref{eq:approx_source_error}-\eqref{eq:approx_con_minus}. 
The above discussion has been a high-level overview of the derivations of our approximations. The full derivations for each approximation are rather lengthy, and are thus deferred to Appendix~\ref{app_sssec:tx_2gp} for conciseness. 
To ensure that these approximations converge independently, they are associated with a unique auxiliary variable that is added to the objective function or bounded between very small quantities using an additional pair of constraints. 
The resulting optimization formulation with modified objective function, approximations~\eqref{eq:approx_source_error}-\eqref{eq:approx_con_minus}, augmented constraints, and auxiliary variables retains the same core insights as those for $({\boldsymbol{\mathcal{P}}})$, and is thus similarly left to Appendix~\ref{app_sssec:tx_2gp} to minimize repetition. 

As a result of the approximations in~\eqref{eq:approx_source_error}-\eqref{eq:approx_con_minus}, our optimization solver, summarized in Algorithm~\ref{alg:optimization_iteration}, is an iterative method that starts with an initial value for the solution $[\bm{x}]^0$, containing an initial estimate for $\boldsymbol{\alpha}$ and $\boldsymbol{\psi}$. The only requirement for the initial estimate is that it is feasible for $\boldsymbol{\alpha}$ and $\boldsymbol{\psi}$ - our solver will then converge to the optimal regardless of initial estimate. 
At each iteration indexed via $\ell$, our solver obtains a solution $[\bm{x}]^\ell$ via transforming the problem into a solvable convex program, in which all the terms in the objective function and constraints are transformed into convex terms around the previous $[\bm{x}]^{\ell-1}$. }

\vspace{-1mm}
\section{Numerical Evaluation}
\label{sec:experiments}

\noindent 
{\color{black} In this section, we experimentally demonstrate four key points of ST-LF.
In Sec.~\ref{ssec:ablate_hat_P}, we show that our optimization $({\boldsymbol{\mathcal{P}}})$ 
(i) allocates source/target classification $\boldsymbol{\psi}$ and link formation $\boldsymbol{\alpha}$ effectively based on distribution divergence $\widehat{d}_{\mathcal{H}\Delta\mathcal{H}}$, and 
(ii) adjusts communication resource consumption as a result of both the energy scaling $\phi^{\mathsf{E}}$ and the underlying classification task. 
Then, in Sec.~\ref{ssec:ablate_tx_learn}, we show that our methodology (iii) obtains significant enhancements in classification accuracy at target devices for domain adaptation tasks and (iv) significant communication resource savings compared to several baseline algorithms, validating its benefit for decentralized FL settings.



\begin{figure}[t]
    \centering
    \includegraphics[width=0.48\textwidth]{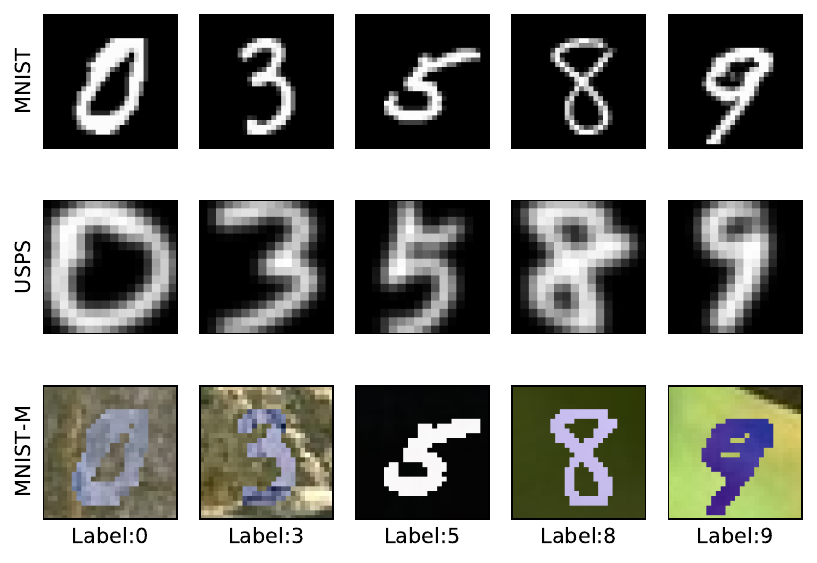} 
    \vspace{-2mm}
    \caption{An overview of the three common domain adaptation datasets used to evaluate our method. We explain the physical differences of each dataset in Sec.~\ref{sec:experiments}.}
    \label{fig:data_snapshot}
\vspace{-3mm}
\end{figure}

\textbf{Experimental Setup.} 
We conduct our evaluations on three image classification datasets commonly used in domain adaptation: MNIST~\cite{lecun1998gradient}, USPS~\cite{hull1994database}, and MNIST-M~\cite{ganin2016domain}.
A snapshot of the three datasets is shown in~\ref{fig:data_snapshot}.
In particular, while all three datasets are concerned with the problem of digit recognition, the images in each dataset are formatted differently, and, thus, transferring a trained ML model from one dataset to another encounters some difficulty. 
Specifically, the images in MNIST are neatly formatted with the same background, and USPS contains images similar to MNIST but taken with different resolution. Finally, as a tougher challenge, MNIST-M contains digits from MNIST blended with various photographs from BSDS500~\cite{arbelaez2010contour}, resulting in a wide-range of digit outlines and backgrounds (e.g., the sample MNIST-M image with label 3 in Fig.~\ref{fig:data_snapshot} is challenging even for people to decipher). 

\begin{figure}[t]
\centering
\begin{subfigure}{.24\textwidth} 
  \centering
  \includegraphics[width=.96\linewidth]{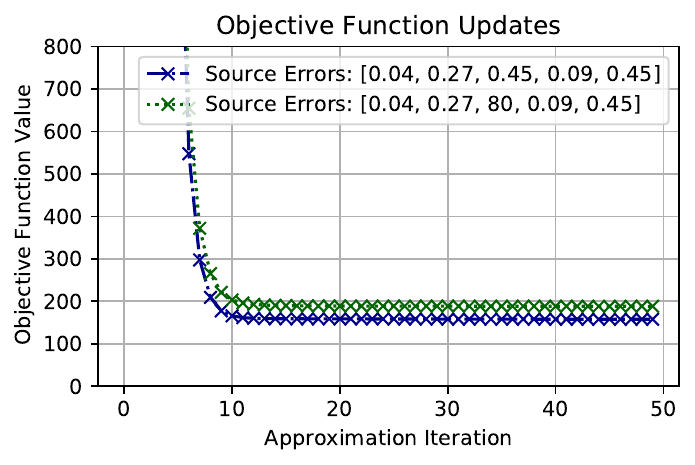}
  \vspace{-1mm}
  \caption{Convergence of $(\boldsymbol{\mathcal{P}})$.} 
  \label{fig:obj_fxn_convergence}
\end{subfigure}
\begin{subfigure}{.24\textwidth}
  \centering
  \includegraphics[width=.96\linewidth]{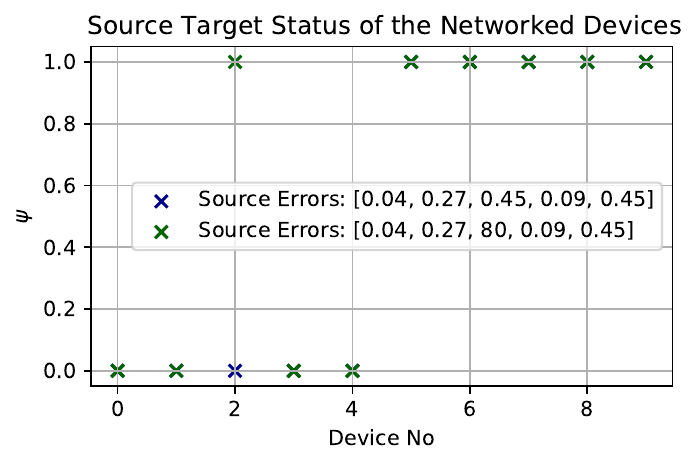}
    \vspace{-1mm}
  \caption{Source/target classification $\psi_i$.}
  \label{fig:st_determination_source_error}
\end{subfigure}
\vspace{-1.5mm}
\caption{{\color{black}Convergence behavior and source/target device classification at convergence for Algorithm~\ref{alg:optimization_iteration} with two different settings of source errors across devices with labeled/partially labeled data.}}
\vspace{-3mm}
\end{figure}

\begin{figure}[t]
    \centering
    \includegraphics[width=0.498\textwidth]{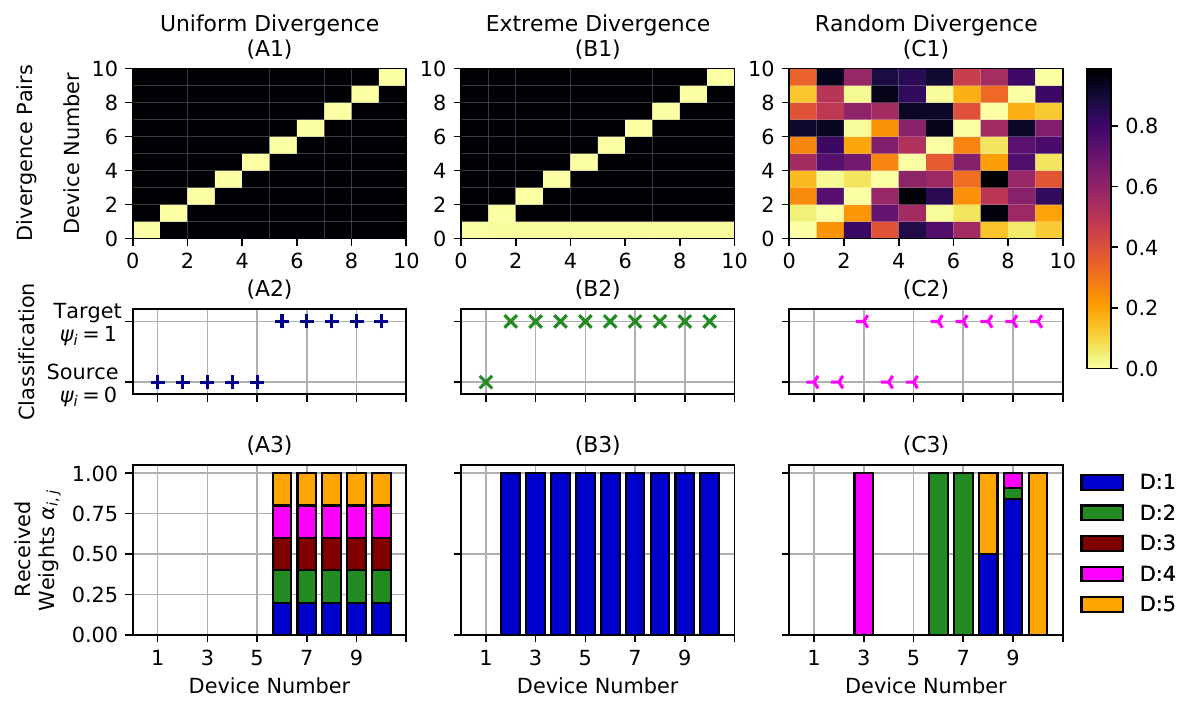}
    \vspace{-2mm}
    \caption{{\color{black}The effects of uniform, extreme, and random distribution divergence regimes on the behavior of $({\boldsymbol{\mathcal{P}}})$. 
    Each regime occupies a column, showing (i) the divergences $d_{\mathcal{H}\Delta\mathcal{H}}$ between pairs of devices, (ii) the optimized source/target classifications $\boldsymbol{\psi}$, and (iii) the optimized combination weights $\boldsymbol{\alpha}$. 
    The third row breaks down received ML models at targets from source devices 1-5 (i.e., $D:1,\cdots,5$), which are proportional to the divergences.}}
    \label{fig:3x3_div}
\vspace{-4mm}
\end{figure}

We consider a network of 10 devices, training two-layer CNNs (with 10 and 20 maps respectively) followed by two fully connected layers for all cases. For our federated divergence estimation (Algorithm~\ref{alg:fed_div_est}), we use a CNN with the same architecture as above, except that the dimension at the output of the final connected layer is $2$ rather than $10$. 
Next, we distribute data across the devices in a non-i.i.d. manner, where each device has a unique Dirichlet distribution of all labels or subset of labels from the full training dataset~\cite{wang2021novel}. Half of the network will have partially labeled datasets with randomly determined labeled-to-unlabeled data ratios, while the rest of the network will have completely unlabeled datasets. 
For the singular dataset tests involving MNIST and USPS, devices draw Dirichlet distributions from a subset of $4$ labels, while for simulations involving MNIST-M and/or multiple simultaneous datasets, devices draw from the full training dataset. 
We locally train ML models at source devices using stochastic gradient descent (SGD) as in conventional FL, with $100$ iterations, a mini-batch size of $10$, and a learning rate of $0.01$. Training is conducted with Pytorch~\cite{paszke2019pytorch} on a 6GB GTX 1660 SUPER with 16 GB RAM. 
Additionally, for ST-LF, we use $\phi^{\mathsf{S}}=1$, $\phi^{\mathsf{T}}=5$, and $\phi^{\mathsf{E}}=1$ in $({\boldsymbol{\mathcal{P}}})$, and all figures are the averaged results over 5 independent runs, unless otherwise stated. 

\begin{figure*}[t]
\centering
\begin{subfigure}{.32\textwidth}
  \centering
  \includegraphics[width=.96\linewidth]{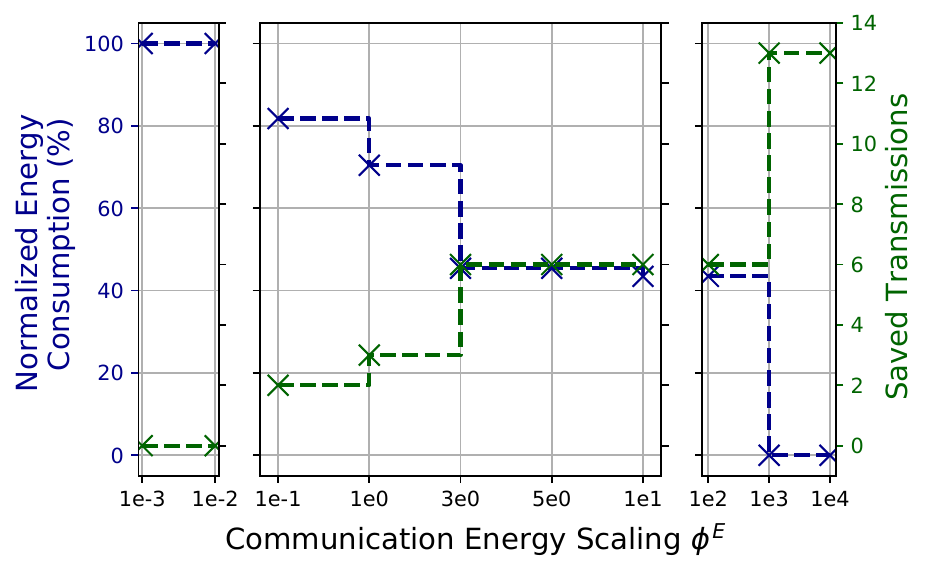}
    \vspace{-1mm}
  \caption{MNIST}
  \label{fig:nrg_vary_phi_e_M}
\end{subfigure}
\begin{subfigure}{.32\textwidth}
  \centering
  \includegraphics[width=.96\linewidth]{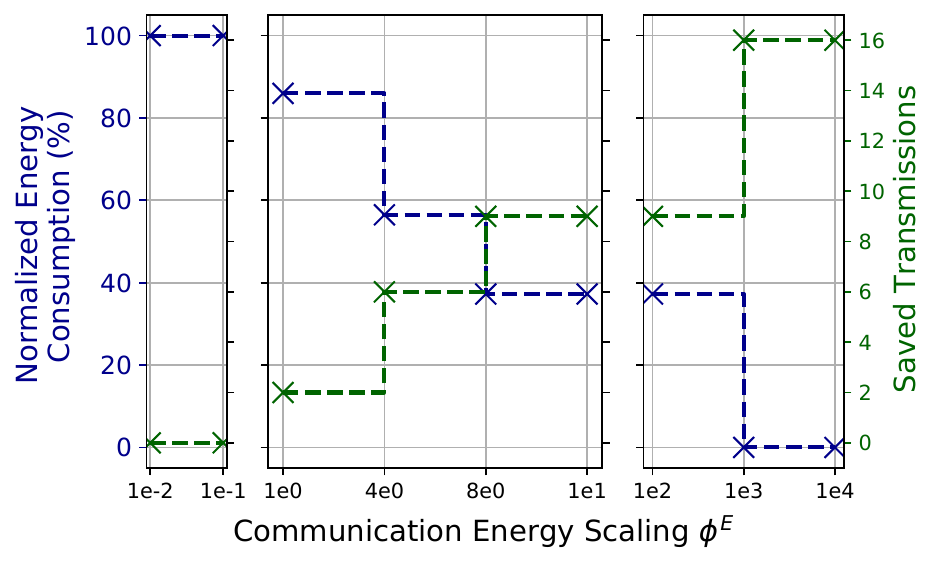}
    \vspace{-1mm}
  \caption{USPS}
  \label{fig:nrg_vary_phi_e_U}
\end{subfigure}
\begin{subfigure}{.32\textwidth}
  \centering
  \includegraphics[width=.96\linewidth]{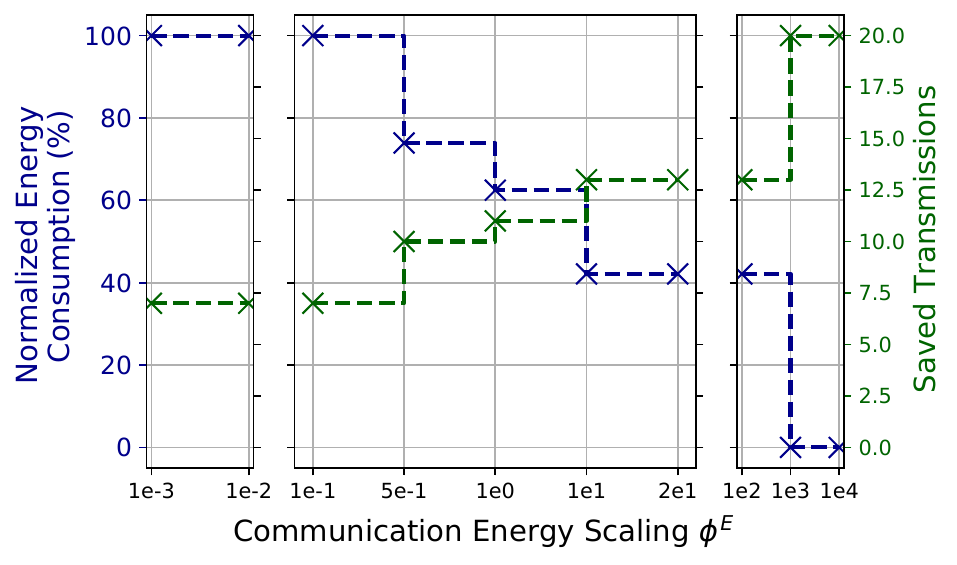}
    \vspace{-1mm}
  \caption{MNIST-M}
  \label{fig:nrg_vary_phi_e_MM}
\end{subfigure}
\vspace{-1.5mm}
\caption{{\color{black}Total device communication resource consumption and reduction in transmissions as a function of the energy cost of ML model transmissions ($\phi^{\mathsf{E}}$) in $({\boldsymbol{\mathcal{P}}})$. As $\phi^{\mathsf{E}}$ increases, ST-LF adjusts the combination weights $\boldsymbol{\alpha}$ so that fewer sources transfer their models to targets, with the energy consumption decreasing accordingly. Each dataset has different sensitivity to $\phi^{\mathsf{E}}$.}}
\label{fig:phi_e_vary}
\vspace{-3mm}
\end{figure*}

\begin{figure*}[t]
\centering
\begin{subfigure}{.32\textwidth}
  \centering
  \includegraphics[width=.76\linewidth]{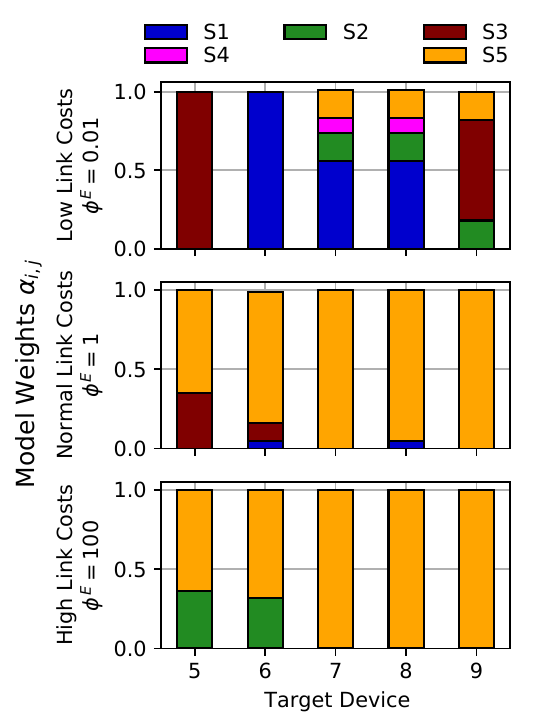}
    \vspace{-1mm}
  \caption{MNIST}
  \label{fig:nrg_alpha_M}
\end{subfigure}
\begin{subfigure}{.32\textwidth}
  \centering
  \includegraphics[width=.76\linewidth]{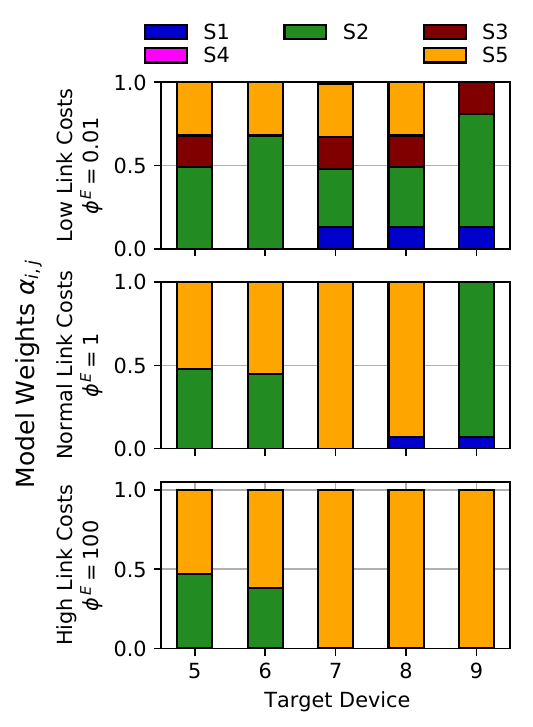}
    \vspace{-1mm}
  \caption{USPS}
  \label{fig:nrg_alpha_U}
\end{subfigure}
\begin{subfigure}{.32\textwidth}
  \centering
  \includegraphics[width=.76\linewidth]{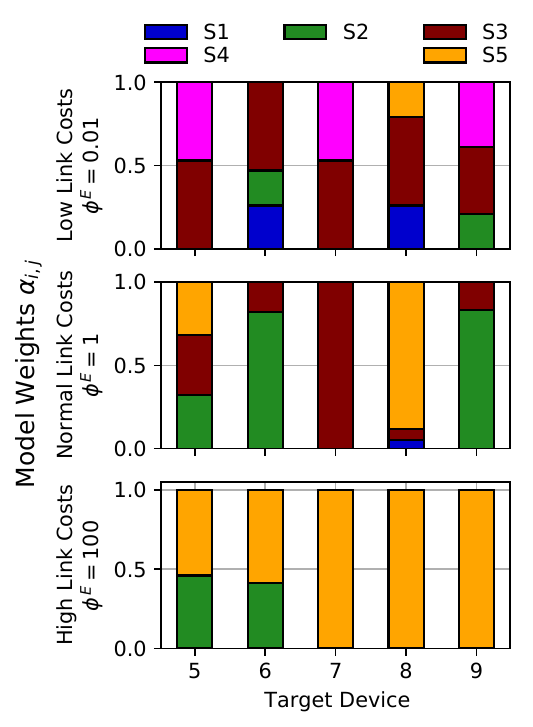}
    \vspace{-1mm}
  \caption{MNIST-M}
  \label{fig:nrg_alpha_MM}
\end{subfigure}
\vspace{-1.5mm}
\caption{
{\color{black}Effect of communication energy scaling, (i) low cost - $\phi^{\mathsf{E}} = 0.01$, (ii) medium cost - $\phi^{\mathsf{E}} = 1$, and (iii) high cost - $\phi^{\mathsf{E}}=100$, on the link formation weights $\boldsymbol{\alpha}$ for each of the datasets. The network and experiment setup are identical to those of Fig.~\ref{fig:phi_e_vary}.
As $\phi^{\mathsf{E}}$ increases, we see new combination weights involving fewer unique source-to-target transmissions in order to reduce communication costs.}}
\label{fig:nrg_ratios_phi_e}
\vspace{-3mm}
\end{figure*}

\textbf{Communication Energy Determination.}
The communication energy $E_{i,j}$ defined in~\eqref{eq:def_E_i} relies on $K_{i,j}$, a strictly communication-dependent term.
For any given device pair $(i,j)$, we define $K_{i,j}= \frac{M}{R_{i,j}}P_i$, where $P_i$ is transmission power at device $i$ (watts), $R_{i,j}$ is transmission rate from device $i$ to device $j$ (bits/sec), and $M$ is the size of the hypothesis (i.e., ML model) transmitted in bits. 
For all devices $i \in \mathcal{N}$, we determine the transmission power $P_i$ randomly between $P_{min} = 23$ \textrm{dBm} and $P_{max} = 25 $ {\textrm{dBm}}. 
Similarly, for all device pairs $(i,j)$, we select the transmission rate $R_{i,j}$ randomly between $R_{min} = 63$ Mbps and $R_{max} = 85$ Mbps. 
The size of the hypothesis in bits is assumed to be fixed at $1$ \textrm{Gbit}. 
In particular, these measurements are similar to commonly used measurements in autonomous systems research~\cite{mozaffari2017mobile}, and their selection highlights the versatility of our formulation and optimization methodologies. 

}

\subsection{ST-LF Optimization Solution} 
\label{ssec:ablate_hat_P}
\vspace{-.1mm}
We study multiple aspects of $({\boldsymbol{\mathcal{P}}})$ in-depth. We first demonstrate the convergence of our iterative methodology for $(\boldsymbol{\mathcal{P}})$ (Algorithm~\ref{alg:optimization_iteration}), and thereafter focus on our proposed federated divergence estimation technique (Algorithm~\ref{alg:fed_div_est}), which influences source-target classification and link formation weights. Finally, we examine the effect of $\phi^{\mathsf{E}}$ on the network communication resource consumption.

\textbf{Optimization Convergence and Source Error Sensitivity.} 
In Fig.~\ref{fig:obj_fxn_convergence} and Fig.~\ref{fig:st_determination_source_error}, we investigate (i) the convergence of our iterative methodology for $(\boldsymbol{\mathcal{P}})$ and (ii) the effect of having a large empirical error when a device has labelled data. For both figures, we assume that five devices have labeled or partially labeled datasets and the other five devices only have unlabeled data. We first show that, regardless of empirical error on a labeled dataset, our solution for $({\boldsymbol{\mathcal{P}}})$ is able to reach convergence in a monotonically decreasing fashion. Then, in Fig.~\ref{fig:st_determination_source_error}, we show that a large empirical error at device 3 results in $({\boldsymbol{\mathcal{P}}})$ classifying that device as a target even though it may have labeled data. Essentially, our optimization is able to balance the trade-offs between having more sources (i.e., more training data) versus the quality of those sources. In scenarios where a device has a small quantity and/or poor quality of locally labeled data, it may be more effective from a network resource perspective to classify that device as a target.

\textbf{Adapting to Distribution Divergence.} 
Fig.~\ref{fig:3x3_div} demonstrates the effect of distribution divergence $\widehat{d}_{\mathcal{H}\Delta\mathcal{H}}$ on ST-LF's optimization solution. Recall from Sec.~\ref{ssec:init_formulation} and Algorithm~\ref{alg:fed_div_est} that a large divergence means that two devices are statistically dissimilar while a small divergence implies that they have highly similar local dataset characteristics. 
We depict three separate regimes in the columns of Fig.~\ref{fig:3x3_div}: (A) \textit{uniform:} devices have identical pairwise divergences of $1$; (B) \textit{extreme:} one device has the minimum pairwise divergence, $0$, to all other devices while other pairs have the maximum, $1$; and (C) \textit{random:} device pairs have randomly assigned divergences. 
The first row of Fig.~\ref{fig:3x3_div} are 2D colormaps of the divergence values, where element $(i,j)$ indicates $\widehat{d}_{\mathcal{H}\Delta \mathcal{H}}(i,j)$, while the second and third row depict the resulting source/target classifications $\boldsymbol{\psi}$ and combination weights $\boldsymbol{\alpha}$ from solving $({\boldsymbol{\mathcal{P}}})$.

\begin{figure*}[t]
\centering
\begin{subfigure}{.325\textwidth}
  \centering
  \includegraphics[width=.98\linewidth]{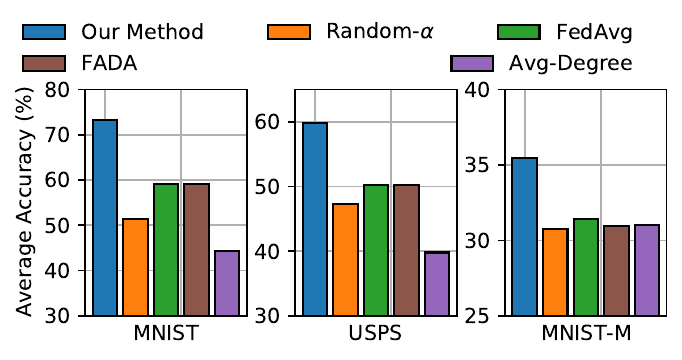}
  \vspace{-1mm}
  \caption{Single dataset}
  \label{fig:avg_gan}
\end{subfigure}
\begin{subfigure}{.325\textwidth}
  \centering
  \includegraphics[width=.98\linewidth]{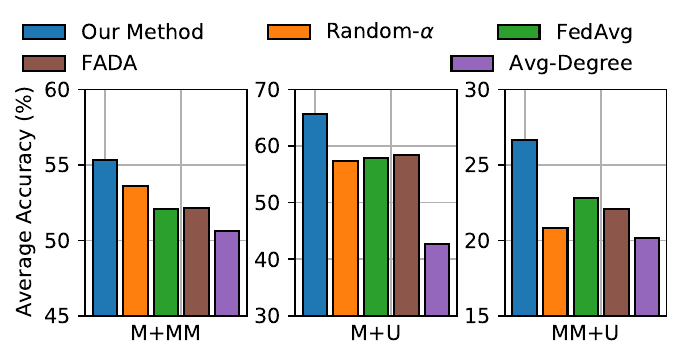}
  \vspace{-1mm}
  \caption{Mixed dataset}
  \label{fig:mixed_gan}
\end{subfigure}
\begin{subfigure}{.325\textwidth}
  \centering
  \includegraphics[width=.98\linewidth]{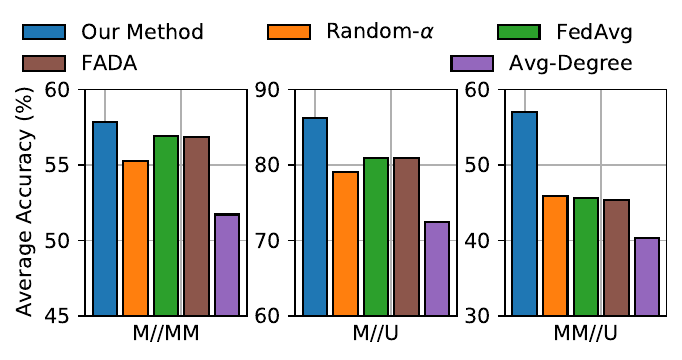}
  \vspace{-1mm}
  \caption{Split dataset}
  \label{fig:split_gan}
\end{subfigure}
\vspace{-1mm}
\caption{
{\color{black}
Comparing the model weight determination, $\boldsymbol{\alpha}$, of our method (ST-LF) versus several baselines (e.g., FedAvg~\cite{mcmahan2017communication}, FADA~\cite{peng2019federated}) on single, mixed, and split dataset settings. All methods rely on ST-LF's source/target classification output ($\boldsymbol{\psi}$), with each bar representing the average classification accuracy across all target devices obtained over five independent runs of each algorithm. 
Mixed datasets are denoted by ``+'', e.g., M+U means devices have a mixture of data from the MNIST and USPS datasets, while split datasets are described by ``//'', e.g., M//U means that some devices only have data from the MNIST dataset while others will exclusively contain data from the USPS dataset.}}
\label{fig:mt_alpha_all}
\vspace{-3mm}
\end{figure*}

\begin{figure*}[t]
\centering
\begin{subfigure}{.325\textwidth}
  \centering
  \includegraphics[width=.98\linewidth]{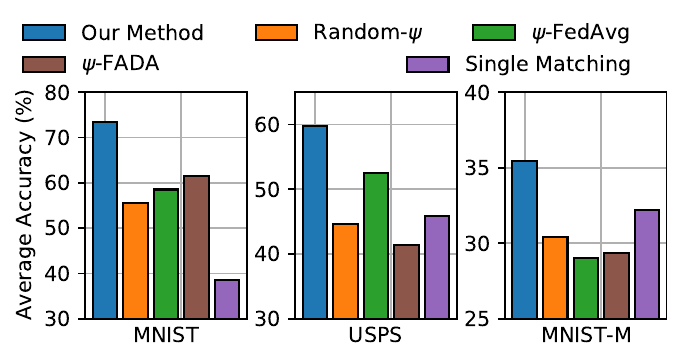}
  \vspace{-1mm}
  \caption{Single dataset}
  \label{fig:avg_gan_full}
\end{subfigure}
\begin{subfigure}{.325\textwidth}
  \centering
  \includegraphics[width=.98\linewidth]{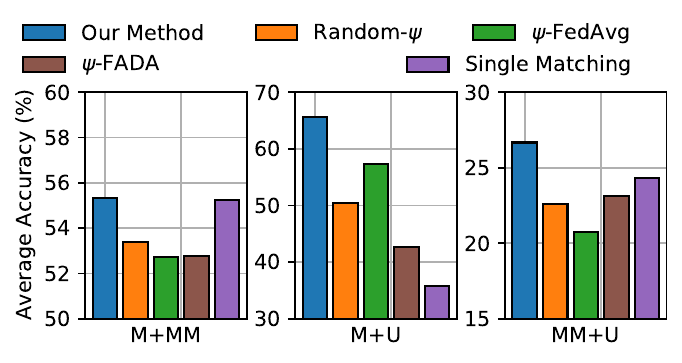}
  \vspace{-1mm}
  \caption{Mixed dataset}
  \label{fig:mixed_gan_full}
\end{subfigure}
\begin{subfigure}{.325\textwidth}
  \centering
  \includegraphics[width=.98\linewidth]{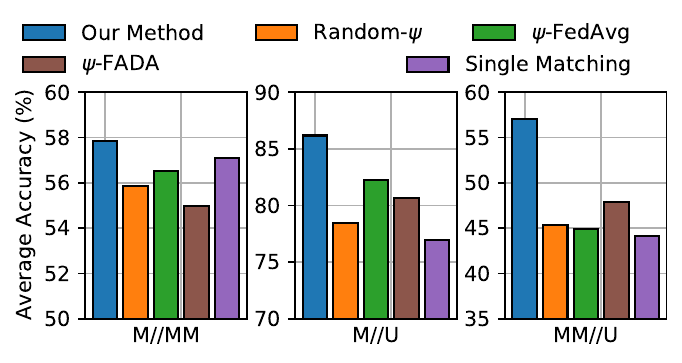}
  \vspace{-1mm}
  \caption{Split dataset}
  \label{fig:split_gan_full}
\end{subfigure}
\vspace{-1mm}
\caption{
{\color{black}
The impact of source/target classification, $\boldsymbol{\psi}$, on average target classification accuracy. When baseline methods (e.g., $\psi$-FL~\cite{mcmahan2017communication}, $\psi$-FADA~\cite{peng2019federated}) do not have access to optimal source/target determination from our method (ST-LF), they generally perform worse than in Fig.~\ref{fig:mt_alpha_all} relative to our method, demonstrating the importance of joint source/target classification, $\boldsymbol{\psi}$, and model/link weight determination, $\boldsymbol{\alpha}$. 
} }
\label{fig:mt_psi_all}
\vspace{-5mm}
\end{figure*}

In the uniform regime (A1-A3), devices are essentially statistically identical to each other, so targets should receive uniformly weighted model parameters from the sources. 
ST-LF adapts to this by classifying all devices with labeled data (i.e., devices $1$ to $5$) as sources and unlabeled devices as targets in (A2), and forming uniformly weighted links between all source-target pairs in (A3). 
In the extreme divergence regime (B1-B3), there is one device that is statistically perfect for every other device in the network. ST-LF detects this and correspondingly assigns device 1 as the only source, with all source-target combination weights set to $1$. 
With the random divergence regime (C1-C3), our methodology needs to be able to pick-out trade-offs among the devices. For example, if a device has labeled data but is very similar to the data at another device with labeled data, then it may be better to only keep one of these two devices as a source. 
ST-LF adapts well in this more challenging setting, determining that device 3 has high statistical similarity with device 4, and thus reclassifies device 3 as a target in (C2). 
Finally, the link weights $\boldsymbol{\alpha}$, which also reflect model combination weights, vary based on the $d_{\mathcal{H}\Delta\mathcal{H}}$ values, which makes sense as target devices should favor the trained models at statistically similar sources devices.

\textbf{Impact of Communication Energy Importance.} 
Communication energy expenditures in ST-LF follow a discrete pattern, where the network only sees a change in both total transmission and energy consumption at distinct thresholds. 
For example, when $\alpha_{i,j}$ changes numerically from $0.2$ to $0.6$, the link between device $i$ and device $j$ incurs the same energy cost, though the weight of those ML model parameters changes.
To effect a change, $\alpha_{i,j}$ must become deactivated, i.e., goes to zero, or become activated, i.e., increases from zero to a non-zero value.
We investigate the ability of our methodology to control these thresholds via varying the communication energy scaling $\phi^{\mathsf{E}}$ from $({\boldsymbol{\mathcal{P}}})$ in Fig.~\ref{fig:phi_e_vary}.

Specifically, Fig.~\ref{fig:phi_e_vary} depicts the ability of $\phi^{\mathsf{E}}$ to influence total network transmissions and energy consumption. 
On the left y-axis, Fig.~\ref{fig:phi_e_vary} measures the normalized energy consumption, a metric computed from term $(e)$ in~\eqref{eq:obj_fxn_1}, while the right y-axis measures saved transmissions. 
Both normalized energy consumption and saved transmissions are measured relative to the case of $\phi^{\mathsf{E}}=0$. 
For the lowest values of $\phi^{\mathsf{E}}$ in Fig.~\ref{fig:phi_e_vary}, the energy consumption is saturated, effectively equivalent to solving ($\boldsymbol{\mathcal{P}}$) without the energy term $(e)$.
{\color{black}
As $\phi^{\mathsf{E}}$ increases, energy consumption decreases at discrete intervals, corresponding to exact points of $\phi^{\mathsf{E}}$ when links deactivate. 
Additionally, different datasets have inherently unique response patterns to $\phi^{\mathsf{E}}$: ST-LF saves up to 13, 16, and 20 transmissions for MNIST, USPS, and MNIST-M when $\phi^{\mathsf{E}} \geq 1e3$. 
Finally, we notice a saturation effect as $\phi^{\mathsf{E}}$ reaches $1e3$, as further increases in $\phi^{\mathsf{E}}$ do not reduce energy consumption or save any additional transmissions. This is because, when $\phi^{\mathsf{E}}$ exceeds $1e3$, links become prohibitively expensive to activate, so ST-LF designates all the devices as targets instead.} 

We visualize the change in weights $\alpha_{i,j}$ as a result of different $\phi^{\mathsf{E}}$ values in Fig.~\ref{fig:nrg_ratios_phi_e}. For example, for the MNIST experiment in Fig.~\ref{fig:nrg_alpha_M}, increasing $\phi^{\mathsf{E}}$ from $0.01$ to $1$ eliminates transmissions from source device $4$ (i.e., $S4$), and this leads to some energy savings and saved transmissions in Fig.~\ref{fig:nrg_vary_phi_e_M}. 
{\color{black}
Fig.~\ref{fig:nrg_ratios_phi_e} also shows that for scenarios with large $\phi^{\mathsf{E}}$, the cost of model transmissions becomes the optimization bottleneck, resulting in only two sources that transmit to targets. Conversely, for scenarios with low $\phi^{\mathsf{E}}$, the optimization is able to focus more on the expected ML performance, resulting in more sources that transmit their ML models to targets.}
{\color{black}
The lowest value $\phi^{\mathsf{E}} = 0.01$ corresponds to the case of full energy consumption from Fig.~\ref{fig:phi_e_vary}.
}

\subsection{ST-LF Accuracy and Efficiency Enhancements} 
\label{ssec:ablate_tx_learn} 
{\color{black}
To evaluate the full ST-LF methodology, we now compare its effectiveness, as measured by average classification accuracies at target devices and total network energy consumption, relative to eight baselines on three commonly used domain adaptation datasets (MNIST~\cite{lecun1998gradient}, USPS~\cite{uspsdataset}, MNIST-M~\cite{ganin2015unsupervised,ganin2016domain}) depicted in Fig.~\ref{fig:data_snapshot}. We will use the abbreviations M for MNIST, U for USPS, and MM for MNIST-M in Fig.~\ref{fig:mt_alpha_all}, Fig.~\ref{fig:mt_psi_all}, and Table~\ref{tab:alg_nrg_comp_all}. 
{\color{black}We also provide an empirical investigation on the tightness/looseness of the theoretical results in Appendix~\ref{app_sec:tl_bounds}}
}

{
Current literature in the field commonly assumes that source/target devices are known apriori, and, to the best of our knowledge, works have yet to investigate source/target determination. 
Instead, current literature such as FADA~\cite{peng2019federated} has focused on determining model transmission or link weights, $\boldsymbol{\alpha}$, from source devices to a single target device. 
As a result, two of the baselines that we consider, (i) FedAvg~\cite{mcmahan2017communication} and (ii) FADA~\cite{peng2019federated}, require source/target device classification prior to operation. In this case, we feed these two algorithms the source/target classification output from ST-LF, i.e., the $\boldsymbol{\psi}$ from our method. 
Two other baselines, (iii) $\psi$-FedAvg and (iv) $\psi$-FADA, rely on heuristic determinations of source/target classification $\psi$ combined with the model weight determination, $\alpha$, from FedAvg~\cite{mcmahan2017communication} and FADA~\cite{peng2019federated}. 
In order to ensure a fair comparison between FADA~\cite{peng2019federated} and ST-LF, FADA's generators have the same architecture as ST-LF's binary hypothesis classifiers, except that FADA's generators have a larger output layer dimension to accommodate its subsequent operations. 
In this manner, we are able to separately measure the benefits of joint source/target classification and link weight determination from ST-LF. 
We also provide four additional heuristic baselines, (v) Random-$\alpha$, (vi) Avg Degree, (vii) Random-$\psi$, and (viii) Single Matching; we will detail each of these baselines in their relevant sections. 
}

\begin{table}[t]
\caption{{\color{black}Average target accuracy vs. communication energy (Nrg) consumption obtained by algorithms depicted in Fig.~\ref{fig:mt_alpha_all} and~\ref{fig:mt_psi_all}. 
Communication energy is normalized (norm) relative to the maximum energy use in each category, and we exclude energy consumption measurements of server-dependent algorithms. 
The improvements that our methodology (ST-LF) obtains on both metrics emphasizes the importance of joint optimization in decentralized federated settings.}}
{\footnotesize
\begin{tabularx}{0.49\textwidth}
{m{0.05em} m{4.6em} m{2.8em} m{2.8em}| m{4em} m{2.8em} m{2.8em}} 
\toprule[.2em]
& \multicolumn{3}{c}{\textbf{Comparing $\psi$}} & \multicolumn{3}{c}{\textbf{Comparing $\alpha$}} \\
\cmidrule(lr){2-4} \cmidrule{5-7}
& \multirow{3}{*}{\textbf{Method}} & {\textbf{Avg}} & {\textbf{Norm}} & \multirow{3}{*}{\textbf{Method}} & {\textbf{Avg}} & {\textbf{Norm}}\\ 
& & {\textbf{Acc}} & \textbf{Nrg} & & {\textbf{Acc}} & \textbf{Nrg}\\
& & {\textbf{(\%)}} &  \textbf{(\%)} & & {\textbf{(\%)}} & \textbf{(\%)}\\ 
\midrule
\multirow{4}{*}{\rotatebox[origin=c]{90}{{\parbox[c]{8.8mm}{\scriptsize \bf{MNIST}}}}}
& Ours & \textbf{71.32} & {21.10} & Ours & \textbf{71.32} & \textbf{21.10}\\ 
& Rnd-$\psi$ & 55.82 & 94.87 & Rnd-$\alpha$ & 54.54 & 99.15\\ 
& $\psi$-FedAvg & 59.64 & --- & FedAvg & 60.95 & --- \\ 
& $\psi$-FADA & 63.18 & 93.79 & FADA & 61.40 & 100 \\
& SM & 51.07 & \textbf{19.38} & AvgD & 58.63 & 55.15 \\
\midrule
\multirow{4}{*}{\rotatebox[origin=c]{90}{{\parbox[c]{7.8mm}{\scriptsize \bf{USPS}}}}} 
& Ours & \textbf{60.23} & {22.20} & Ours & \textbf{60.23} & \textbf{22.20}\\ 
& Rnd-$\psi$ & 47.97 & 94.87 & Rnd-$\alpha$ & 52.40 & 99.18\\ 
& $\psi$-FedAvg & 55.55 & --- & FedAvg & 54.00 & --- \\ 
& $\psi$-FADA & 46.60 & 87.21 & FADA & 54.46 & 100\\
& SM & 44.72 & \textbf{19.38} & AvgD & 53.35 & 57.83\\
\midrule
\multirow{4}{*}{\rotatebox[origin=c]{90}{{\parbox[c]{12.8mm}{\scriptsize \bf{MNIST-M}}}}} 
& Ours & \textbf{36.44} & {27.70} & Ours & \textbf{36.44} & \textbf{27.70}\\ 
& Rnd-$\psi$ & 30.11 & 94.87 & Rnd-$\alpha$ & 29.69 & 99.38\\ 
& $\psi$-FedAvg & 29.64 & --- & FedAvg & 30.74 & --- \\ 
& $\psi$-FADA & 28.79 & 93.67 & FADA & 30.20 & 100\\
& SM & 32.05 & \textbf{19.38} & AvgD & 30.60 & 69.60\\
\midrule[.2em]
\multirow{4}{*}{\rotatebox[origin=c]{90}{{\parbox[c]{9.6mm}{\scriptsize \bf{M+MM}}}}} 
& Ours & \textbf{56.43} & {24.11} & Ours & \textbf{56.43} & \textbf{24.11}\\ 
& Rnd-$\psi$ & 54.02 & 94.87 & Rnd-$\alpha$ & 53.28 & 98.41\\ 
& $\psi$-FedAvg & 53.15 & --- & FedAvg & 52.00 & --- \\ 
& $\psi$-FADA & 54.23 & 87.78 & FADA & 52.00 & 100\\
& SM & 54.93 & \textbf{19.38} & AvgD & 52.28 & 70.69\\
\midrule 
\multirow{4}{*}{\rotatebox[origin=c]{90}{{\parbox[c]{7mm}{\scriptsize \bf{M+U}}}}} 
& Ours & \textbf{68.01} & {20.56} & Ours & \textbf{68.01} & \textbf{20.56}\\ 
& Rnd-$\psi$ & 49.93 & 94.87 & Rnd-$\alpha$ & 58.60 & 100\\ 
& $\psi$-FedAvg & 55.85 & --- & FedAvg & 59.61 & --- \\ 
& $\psi$-FADA & 40.79 & 82.44 & FADA & 59.90 & 100\\
& SM & 51.73 & \textbf{19.38} & AvgD & 56.49 & 55.15\\
\midrule
\multirow{4}{*}{\rotatebox[origin=c]{90}{{\parbox[c]{9mm}{\scriptsize \bf{MM+U}}}}} 
& Ours & \textbf{26.64} & {22.44} & Ours & \textbf{26.64} & \textbf{22.44}\\ 
& Rnd-$\psi$ & 22.04 & 94.42 & Rnd-$\alpha$ & 20.89 & 98.60\\ 
& $\psi$-FedAvg & 19.60 & --- & FedAvg & 23.39 & --- \\ 
& $\psi$-FADA & 21.62 & 100 & FADA & 22.66 & 99.53\\
& SM & 23.33 & \textbf{19.29} & AvgD & 22.08 & 67.69\\
\midrule[.2em]
\multirow{4}{*}{\rotatebox[origin=c]{90}{{\parbox[c]{9.8mm}{\scriptsize \bf{M//MM}}}}} 
& Ours & \textbf{56.95} & {23.28} & Ours & \textbf{56.95} & \textbf{23.28}\\ 
& Rnd-$\psi$ & 56.20 & 94.87 & Rnd-$\alpha$ & 54.90 & 100\\ 
& $\psi$-FedAvg & 56.12 & --- & FedAvg & 56.36 & --- \\ 
& $\psi$-FADA & 55.30 & 93.79 & FADA & 56.23 & 100\\
& SM & 56.86 & \textbf{19.37} & AvgD & 56.11 & 63.28\\
\midrule
\multirow{4}{*}{\rotatebox[origin=c]{90}{{\parbox[c]{7mm}{\scriptsize \bf{M//U}}}}} 
& Ours & \textbf{86.26} & {25.28} & Ours & \textbf{86.26} & \textbf{25.28}\\ 
& Rnd-$\psi$ & 78.46 & 94.87 & Rnd-$\alpha$ & 78.46 & 99.35\\ 
& $\psi$-FedAvg & 82.25 & --- & FedAvg & 80.39 & --- \\ 
& $\psi$-FADA & 80.65 & 97.68 & FADA & 80.47 & 100\\
& SM & 79.11 & \textbf{19.38} & AvgD & 79.79 & 65.72\\
\midrule
\multirow{4}{*}{\rotatebox[origin=c]{90}{{\parbox[c]{9.8mm}{\scriptsize \bf{MM//U}}}}} 
& Ours & \textbf{54.70} & {23.49} & Ours & \textbf{54.70} & \textbf{23.49}\\ 
& Rnd-$\psi$ & 45.34 & 94.87 & Rnd-$\alpha$ & 46.46 & 99.35\\ 
& $\psi$-FedAvg & 44.94 & --- & FedAvg & 45.24 & --- \\ 
& $\psi$-FADA & 47.92 & 94.23 & FADA & 45.39 & 100\\
& SM & 47.29 & \textbf{19.38} & AvgD & 45.90 & 63.43\\
\bottomrule
\end{tabularx}
\label{tab:alg_nrg_comp_all} 
}
\vspace{-4mm}
\end{table}

\textbf{Effect of Link Formation $\boldsymbol{\alpha}$.} 
{
In Fig.~\ref{fig:mt_alpha_all} and Table~\ref{tab:alg_nrg_comp_all}, we compare ST-LF against the four baselines that determine $\boldsymbol{\alpha}$: 
\begin{enumerate}[label=(\roman*),leftmargin=6ex]
    \item \textit{Random-$\alpha$ (Rnd-$\alpha$)}, which allocates link formation $\boldsymbol{\alpha}$ according to a Dirichlet distribution so that $\sum_{s \in \mathcal{S}}\alpha_{s,t}=1$, 
    \item \textit{FedAvg}~\cite{mcmahan2017communication}, which scales link weights proportionally to sources' local dataset size,
    \item \textit{FADA}~\cite{peng2019federated}, which relies on unsupervised data clustering, distributed adversarial alignment (similar to GAN techniques), and feature disentangling networks to determine the link weights in federated domain adaptation, 
    \item \textit{Avg Degree (AvgD)}, where each source is allocated the average number of links/source from ST-LF, but the specific links and weights are randomly determined.
\end{enumerate}
All baselines rely on ST-LF's $\boldsymbol{\psi}$ determination. 
}

The experiments shown in Fig.~\ref{fig:mt_alpha_all} consist of three kinds of ML dataset manipulation: (i) single ML dataset (e.g., USPS) - all devices draw data from the same underlying dataset, (ii) mixed ML dataset (e.g. M+MM) - all devices draw data from the same underlying dataset composed of multiple datasets from Fig.~\ref{fig:data_snapshot}, (iii) split ML dataset (e.g., U//MM) - devices will draw from different ML datasets (e.g., for U//MM, a source device may draw data from USPS while a target device may contain data from MNIST-M). 
The single dataset experiment is a more classical federated scenario, where devices have non-i.i.d. data distributions drawn from the same ML dataset, but augmented with unlabeled data. 
Meanwhile, the mixed and split dataset experiments are bigger challenges, jointly involving statistical heterogeneity (i.e., non-i.i.d. data distributions) and dataset heterogeneity. 

{\color{black}
ST-LF outperforms the $\alpha$-baselines substantially in all single dataset cases: by $>$9\% for MNIST, and $>$5\% for both USPS and MNIST-M in terms of average accuracy.} 
The mixed and split dataset evaluations also show consistent and significant improvements, though more marginal for cases involving combinations of MNIST-M with other datasets, which is expected given the more challenging nature of mixed and split dataset scenarios.
The importance is that ST-LF maintains its significant performance benefits over all baselines in all cases, highlighting its effectiveness at determining model/link weights $\boldsymbol{\alpha}$. 

{
In Table~\ref{tab:alg_nrg_comp_all}, we also see substantial improvements in network-wide communication energy consumption compared to the multi-source baselines, due to the fact that ST-LF forms fewer source-target links. 
Despite forming fewer source-to-target links, our method achieves the best accuracies and the largest energy consumption savings relative to all other baselines for model/link weights $\boldsymbol{\alpha}$. 
Mixed and split dataset energy consumption measurements in Table~\ref{tab:alg_nrg_comp_all} preserve the same pattern as those for single dataset experiments. 
Energy measurements for server dependent baselines (i.e., FedAvg for $\boldsymbol{\alpha}$ baselines and $\psi$-FedAvg for $\boldsymbol{\psi}$ baselines) are omitted as they depend on the distance between server and devices. 
Overall, we see significant improvements for model/link weights $\boldsymbol{\alpha}$ even when the baselines employ ST-LF's solution for source/target classification $\boldsymbol{\psi}$, which emphasizes the importance of our joint optimization methodology. 
}

\textbf{Effect of Source/Target Determination $\boldsymbol{\psi}$.} 
{
Next, we investigate the impact of source/target determination $\boldsymbol{\psi}$ by comparing ST-LF against the four baselines that influence $\psi$: 
\begin{enumerate}[label=(\roman*),leftmargin=6ex]
    \setcounter{enumi}{4}
    \item \textit{Random-$\psi$ (Rnd-$\psi$)}, which randomly classifies each device as a source or target and then uses Rnd-$\alpha$ for the model/link weights,
    \item \textit{$\psi$-FedAvg}, which uses the same heuristic method to determine $\boldsymbol{\psi}$ as (i) combined with FedAvg's method for model/link weights $\boldsymbol{\alpha}$,
    \item \textit{$\psi$-FADA}, which uses the same method as (ii) but with FADA~\cite{peng2019federated} replacing FedAvg~\cite{mcmahan2017communication},
    \item \textit{Single Matching (SM)}, which is single source to single target matching inspired by~\cite{zhao2019learning}.
\end{enumerate}
Fig.~\ref{fig:mt_psi_all} compares the resulting target accuracies across all methods while parts of Table~\ref{tab:alg_nrg_comp_all} show the resulting energy consumption for all methods. 
Analogous to Fig.~\ref{fig:mt_alpha_all}, Fig.~\ref{fig:mt_psi_all} compares these methods on three kinds of ML dataset manipulations: (1) single ML dataset in Fig.~\ref{fig:avg_gan_full}, (2) mixed ML datasets in Fig.~\ref{fig:mixed_gan_full}, and (3) split ML datasets in Fig.~\ref{fig:split_gan_full}. 
}

{\color{black}
The single dataset experiment, Fig.~\ref{fig:avg_gan_full}, effectively compares ST-LF against the baselines in the classical federated scenario, where devices have non-i.i.d. data distributions albeit combined with the presence of unlabeled data. 
In this setting, ST-LF obtains significant performance improvements over the $\boldsymbol{\psi}$-baselines, specifically by $>8\%$ on MNIST, and $>4\%$ on both USPS and MNIST-M, demonstrating its ability to select the most optimal sources in the classical federated case.  
}

On the other hand, the mixed dataset case in Fig.~\ref{fig:mixed_gan_full} and the split dataset case in Fig.~\ref{fig:split_gan_full} feature multiple ML datasets in the same network environment, thus augmenting the classical federated challenge of statistical heterogeneity (i.e., non-i.i.d. data distributions) with dataset heterogeneity. 
In this more challenging problem, ST-LF maintains a clear and consistent performance advantage relative to all baselines, demonstrating its ability to jointly consider statistical similarity as well as dataset similarity when determining the optimal set of source devices. 
{
In Table~\ref{tab:alg_nrg_comp_all}, ST-LF continues to demonstrate the best accuracies relative to other baselines. 
One heuristic baseline, single matching (SM), does provide incremental energy savings over ST-LF, but leads to substantially worse ML performance across all test cases. 
Furthermore, because it is a one-to-one matching, the performance of SM is highly variable across different settings 
(e.g., the $20\%$ drop in performance relative to ST-LF on MNIST). 
Overall, when considering joint design of source/target classification, source-to-target model/link weights, and energy resource efficiency, ST-LF continues to show a commanding performance. 
}

\section{Conclusion}

\noindent 
We investigated decentralized FL in settings where devices have partially labeled datasets of varying quality. 
This challenging problem augments standard federated settings, which are known for communication heterogeneity and statistical heterogeneity (i.e., non-i.i.d. data distributions) across network devices, by introducing heterogeneous quantities and distributions of unlabeled data across network devices and integrating combinations of unique/independent ML datasets into the standard FL problem. 
We addressed this problem through a novel methodology, ST-LF, for multi-source to multi-target federated domain adaptation. 
In developing ST-LF, we obtained theoretical results for domain generalization errors which are measurable in real-world systems. 
Based on these results, we formulated a concrete optimization problem that jointly (i) determines the optimal source/target classification of devices, (ii) obtains link weights (model combination weights) to match sources to targets, and (iii) considers communication efficiency of model transmissions. 
We showed that ST-LF belongs to a class of mixed-integer signomial programs, which are NP-hard and non-convex, and developed an iterative method to solve it based on refining convex inner approximations. 
Finally, we demonstrated both performance and energy efficiency improvements obtained by ST-LF relative to baselines on common domain adaptation datasets. 


\bibliographystyle{IEEEtran}
\bibliography{References}

\vspace{-10mm}
\begin{IEEEbiographynophoto}
{(Henry) Su Wang} received the Ph.D. in ECE from Purdue University.
\end{IEEEbiographynophoto}

\vspace{-10mm}
\begin{IEEEbiographynophoto}
{Seyyedali Hosseinalipour} is an assistant professor of EE at the University at Buffalo (SUNY).
\end{IEEEbiographynophoto}

\vspace{-10mm}
\begin{IEEEbiographynophoto}
{Christopher G. Brinton (S’08, M’16, SM’20)} is the Elmore Rising Star Assistant Professor of ECE at Purdue University. 
\end{IEEEbiographynophoto}



\newpage
\clearpage
\begingroup
\let\clearpage\relax 
\onecolumn
\appendices


\section{Proof of Theorem~\ref{thm:upper_bound_tte}}
\label{app_ssec:thm2} 

\ourgenbound*
\begin{proof}
Leveraging $\vert \varepsilon_{\mathcal{D}}(h,h^{'}) - \varepsilon_{\mathcal{D}^{'}}(h,h^{'}) \vert \leq \frac{1}{2} d_{\mathcal{H}\Delta \mathcal{H}}(\mathcal{D},\mathcal{D}^{'})$ (Lemma 3~\cite{ben2010theory})
where $h,h^{'} \in \mathcal{H}$ and $\mathcal{D},\mathcal{D}^{'}$ are underlying data distributions, we can upper bound the true target error when the hypothesis function is a weighted combination from multiple source domains, i.e., upper bounding $\varepsilon^{\mathsf{T}}_t(h^{\mathsf{T}}_t,f^{\mathsf{T}}_t)$ for an arbitrary target domain $j$. Thus:
\begin{align}
    & \varepsilon^{\mathsf{T}}_t(h^{\mathsf{T}}_t,f^{\mathsf{T}}_t) = \sum_{s \in \mathcal{S}} \alpha_{s,t} \varepsilon^{\mathsf{T}}_t(h^{\mathsf{T}}_t,f^{\mathsf{T}}_t) 
    \overset{(a)}{\leq} \sum_{s \in \mathcal{S}} \alpha_{s,t} \bigg[ \varepsilon^{\mathsf{T}}_t(h^{\mathsf{T}}_t,h^{\mathsf{S}}_s) + \varepsilon^{\mathsf{T}}_t(h^{\mathsf{S}}_s,f^{\mathsf{T}}_t) \bigg] \\
    & \overset{(b)}{\leq} \sum_{s \in \mathcal{S}} \alpha_{s,t} \bigg[ \varepsilon^{\mathsf{T}}_t(h^{\mathsf{T}}_t,h^{\mathsf{S}}_s) + \varepsilon^{\mathsf{T}}_t(h^{\mathsf{S}}_s,f^{\mathsf{S}}_s) + \varepsilon^{\mathsf{T}}_t(f^{\mathsf{S}}_s,f^{\mathsf{T}}_t) \bigg] \\
    & \overset{(c)}{\leq} \sum_{s \in \mathcal{S}} \alpha_{s,t} \bigg[ \varepsilon^{\mathsf{T}}_t(h^{\mathsf{T}}_t,h^{\mathsf{S}}_s) + \varepsilon^{\mathsf{T}}_t(h^{\mathsf{S}}_s,f^{\mathsf{S}}_s) + \varepsilon^{\mathsf{T}}_t(f^{\mathsf{S}}_s,f^{\mathsf{T}}_t) 
    + \varepsilon^{\mathsf{S}}_s(h^{\mathsf{S}}_s,f^{\mathsf{S}}_s) - \varepsilon^{\mathsf{S}}_s(h^{\mathsf{S}}_s,f^{\mathsf{S}}_s) \bigg] \\ 
    & \overset{(d)}{\leq} \sum_{s \in \mathcal{S}} \alpha_{s,t} \bigg[ \varepsilon^{\mathsf{T}}_t(h^{\mathsf{T}}_t,h^{\mathsf{S}}_s) + \varepsilon^{\mathsf{T}}_t(f^{\mathsf{S}}_s,f^{\mathsf{T}}_t) +\varepsilon^{\mathsf{S}}_s(h^{\mathsf{S}}_s,f^{\mathsf{S}}_s) 
    +\vert \varepsilon^{\mathsf{T}}_t(h^{\mathsf{S}}_s,f^{\mathsf{S}}_s) - \varepsilon^{\mathsf{S}}_s(h^{\mathsf{S}}_s,f^{\mathsf{S}}_s) \vert \bigg] \\
    & \overset{(e)}{\leq} \sum_{s \in \mathcal{S}} \alpha_{s,t} \bigg[ \varepsilon^{\mathsf{T}}_t(h^{\mathsf{T}}_t,h^{\mathsf{S}}_s) + \varepsilon^{\mathsf{T}}_t(f^{\mathsf{S}}_s,f^{\mathsf{T}}_t) +\varepsilon^{\mathsf{S}}_s(h^{\mathsf{S}}_s,f^{\mathsf{S}}_s) 
    + \frac{1}{2}d_{\mathcal{H}\Delta \mathcal{H}}(\mathcal{D}^{\mathsf{T}}_t,\mathcal{D}^{\mathsf{S}}_s) \bigg],
\end{align}
where $(a)$ and $(b)$ are due to the triangle inequality for classification error, $(c)$ introduces $\varepsilon^{\mathsf{S}}_s(h^{\mathsf{S}}_s,f^{\mathsf{S}}_s)$, $(d)$ takes the magnitude of the difference between $\varepsilon^{\mathsf{T}}_t(h^{\mathsf{S}}_s,f^{\mathsf{S}}_s)$ and $\varepsilon^{\mathsf{S}}_s(h^{\mathsf{S}}_s,f^{\mathsf{S}}_s)$, and $(e)$ uses Lemma 3 from~\cite{ben2010theory}.
\end{proof}

\newpage 

\section{Proof of Lemma~\ref{lemma1}}
\label{app_ssec:lemma1}

\hcombination*

\begin{proof}
By definition:
\begin{align}
    & \varepsilon^{\mathsf{T}}_t(h^{\mathsf{T}}_t,h^{\mathsf{S}}_s) = \mathbb{E}_{x \sim \mathcal{D}^{\mathsf{T}}_t}[\vert h^{\mathsf{T}}_t(x) - h^{\mathsf{S}}_s(x) \vert].
\end{align}
Considering the right hand side, $\mathbb{E}_{x \sim \mathcal{D}_{T_j}}[\vert h_j(x) - h_i(x) \vert ]$, we have
\begin{align}
    & \mathbb{E}_{x \sim \mathcal{D}^{\mathsf{T}}_t}[\vert h^{\mathsf{T}}_t(x) - h^{\mathsf{S}}_s(x) \vert] 
    \leq \frac{1}{{\widehat{D}}^{\mathsf{T}}_t} \sum_{k=1}^{\widehat{D}^{\mathsf{T}}_t} \vert h^{\mathsf{T}}_t(x_k) - h^{\mathsf{S}}_s(x_k)\vert  
    + 2 Rad_{\widehat{\mathcal{D}}^{\mathsf{T}}_t}(\vert \mathcal{H}^{\mathsf{T}}_{t} - \mathcal{H}\vert) + 3 \sqrt{\frac{log(2/\delta)}{2\widehat{D}^{\mathsf{T}}_t}} \label{eq:rad_begin}, 
\end{align}
by~\cite{bartlett2002rademacher}, and where we define $\mathcal{H}^{\mathsf{T}}_t$ as the hypothesis space derived from the weighted averages of hypothesis from $\mathcal{H}$. 

We now consider $Rad_{\widehat{\mathcal{D}}^{\mathsf{T}}_t}(\mathcal{H}^{\mathsf{T}}_t - \mathcal{H})$:
\begin{align}
    & Rad_{\widehat{\mathcal{D}}^{\mathsf{T}}_t}(\mathcal{H}^{\mathsf{T}}_t - \mathcal{H}) 
    = \mathbb{E}_{\sigma}\left[ \sup_{h^{'} \in \mathcal{H}^{\mathsf{T}}_t - \mathcal{H}} \frac{1}{\widehat{D}^{\mathsf{T}}_t} \sum_{k=1}^{\widehat{D}^{\mathsf{T}}_t} \sigma_k h^{'}(x_k) \right] \\
    & \overset{(a)}{=} \mathbb{E}_{\sigma} \left[\sup_{h^{\mathsf{S}}_l,h^{\mathsf{S}}_s, \in \mathcal{H}, \forall l} \frac{1}{\widehat{D}^{\mathsf{T}}_t} \sum_{k=1}^{\widehat{D}^{\mathsf{T}}_t} \sigma_k 
    (\sum_{l \in \mathcal{N}} \alpha_{l,t} h^{\mathsf{S}}_l(x_k) - h^{\mathsf{S}}_s(x_k)) \right] \\
    & \overset{(b)}{=} \mathbb{E}_{\sigma} \left[\sup_{h^{\mathsf{S}}_l \in \mathcal{H}, \forall l} \frac{1}{\widehat{D}^{\mathsf{T}}_t} \sum_{k=1}^{\widehat{D}^{\mathsf{T}}_t} \sigma_k  \sum_{l \in \mathcal{N}}\alpha_{l,j} h^{\mathsf{S}}_l(x_k)\right] 
    + \mathbb{E}_{\sigma}\left[\sup_{h^{\mathsf{S}}_s \in \mathcal{H}} \frac{1}{\widehat{D}^{\mathsf{T}}_t} \sum_{k=1}^{\widehat{D}^{\mathsf{T}}_t} \sigma_k  h^{\mathsf{S}}_s(x_k) \right]  \\
    & \overset{(c)}{=} \sum_{l \in \mathcal{N}} \alpha_{l,j} \mathbb{E}_{\sigma}\left[\sup_{h^{\mathsf{S}}_l \in \mathcal{H}, \forall l} \frac{1}{\widehat{D}_{T_j}} \sum_{k=1}^{\widehat{D}_{T_j}} \sigma_k   h^{\mathsf{S}}_l(x_k)\right] + Rad_{\widehat{\mathcal{D}}^{\mathsf{T}}_t}(\mathcal{H}) 
    \overset{(d)}{=} 2 Rad_{\widehat{\mathcal{D}}^{\mathsf{T}}_t}(\mathcal{H})
    \label{eq:rad_result},
\end{align}
where $(a)$ is due to $h' = h^{\mathsf{T}}_t - h^{\mathsf{S}}_s$ and the definition of $h^{\mathsf{T}}_t = \sum_{l\in\mathcal{N}} \alpha_{l,t} h^{\mathsf{S}}_l$, $(b)$ is due to linearity of the supremum $\sup(A-B) = \sup(A) + \sup(-B)$ and expectation operators, and $\sigma_k$ absorbing the negative sign on $h^{\mathsf{S}}_s$, and $(c)$ is from the linearity of expectation of supremum and the definition of empirical Rademacher complexity, and $(d)$ applies the definition of empirical Rademacher complexity again. Next, by the Ledoux-Talagrand's contraction lemma\footnote{See "Probability in Banach Spaces: isoperimetry and processes" by Michel Ledoux and Michel Talagrand.}, we have
\begin{equation} \label{eq:lt_lemma}
    Rad_{\widehat{D}^{\mathsf{T}}_t}(\vert \mathcal{H}^{\mathsf{T}}_t - \mathcal{H} \vert ) \leq Rad_{\widehat{D}^{\mathsf{T}}_t}(\mathcal{H}^{\mathsf{T}}_t - \mathcal{H}),
\end{equation}
as $\vert \cdot \vert$ is $1-$Lipschitz. Combining~\eqref{eq:rad_begin},~\eqref{eq:rad_result}, and~\eqref{eq:lt_lemma} yields: 
\begin{align}
    & \mathbb{E}_{x \sim \widehat{\mathcal{D}}_{T_j}}[\vert h^{\mathsf{T}}_t(x) - h^{\mathsf{S}}_s(x) \vert] \leq \frac{1}{\widehat{D}^{\mathsf{T}}_t} \sum_{k=1}^{\widehat{D}^{\mathsf{T}}_t} \vert h^{\mathsf{T}}_t(x_k) - h^{\mathsf{S}}_s(x_k) \vert 
    + 4 Rad_{\widehat{\mathcal{D}}^{\mathsf{T}}_t}(\mathcal{H}) + 3 \sqrt{\frac{log(2/\delta)}{2\widehat{D}^{\mathsf{T}}_t}},
\end{align}
which concludes the proof.
\end{proof}
\newpage 

\section{Proof of Corollary~\ref{coro:gen_error_bound_rademacher}}
\label{app_ssec:corollary1}

\boundrad*

\begin{proof}
Combining the results of Theorem~\ref{thm:upper_bound_tte} and Lemma~\ref{lemma1} yields
\begin{align} \label{cor1_init_state}
    & \varepsilon^{\mathsf{T}}_t(h^{\mathsf{T}}_t) \leq \sum_{i \in \mathcal{N}} \alpha_{i,j} \Bigg[ \widehat{\varepsilon}^{\mathsf{S}}_s(h^{\mathsf{S}}_s) + 2 Rad_{\widehat{\mathcal{D}}^{\mathsf{S}}_s}(\mathcal{H}) + 3 \sqrt{\frac{\log(2/\delta)}{2\widehat{D}^{\mathsf{S}}_s}}  + \frac{1}{2} {d}_{\mathcal{H}\Delta\mathcal{H}}({\mathcal{D}}^{\mathsf{T}}_t,{\mathcal{D}}^{\mathsf{S}}_s) \\ \nonumber
    & + \varepsilon^{\mathsf{T}}_t(f^{\mathsf{T}}_t,f^{\mathsf{S}}_s) + 
    \widehat{\varepsilon}^{\mathsf{T}}_t(h^{\mathsf{T}}_t,h^{\mathsf{S}}_s) + 4 Rad_{\widehat{\mathcal{D}}^{\mathsf{T}}_t}(\mathcal{H}) + 3 \sqrt{\frac{\log(2/\delta)}{2\widehat{D}^{\mathsf{T}}_t}} \Bigg] . 
\end{align}
Next, we upper bound the true source-target hypothesis divergence $d_{\mathcal{H}\Delta\mathcal{H}}(\mathcal{D}^{\mathsf{T}}_{t},\mathcal{D}^{\mathsf{S}}_{s})$ by the empirical source-target hypothesis divergence $\widehat{d}_{\mathcal{H}\Delta\mathcal{H}}(\widehat{\mathcal{D}}^{\mathsf{T}}_t,\widehat{\mathcal{D}}^{\mathsf{S}}_s)$ as follows:
\begin{align} \label{div2rad}
    & d_{\mathcal{H}\Delta\mathcal{H}}(\mathcal{D}^{\mathsf{T}}_t,\mathcal{D}^{\mathsf{S}}_s) \overset{(i)}{\leq} d_{\mathcal{H}\Delta\mathcal{H}}(\mathcal{D}^{\mathsf{T}}_t,\widehat{\mathcal{D}}^{\mathsf{T}}_t) 
    + d_{\mathcal{H}\Delta\mathcal{H}}(\widehat{\mathcal{D}}^{\mathsf{T}}_t,\widehat{\mathcal{D}}^{\mathsf{S}}_s) + d_{\mathcal{H}\Delta\mathcal{H}}(\widehat{\mathcal{D}}^{\mathsf{S}}_s,\mathcal{D}^{\mathsf{S}}_s)  \nonumber \\ 
    & \overset{(ii)}{\leq} d_{\mathcal{H}\Delta\mathcal{H}}(\widehat{\mathcal{D}}^{\mathsf{T}}_t,\widehat{\mathcal{D}}^{\mathsf{S}}_s) 
    + 4 (Rad_{\widehat{\mathcal{D}}^{\mathsf{S}}_s}(\mathcal{H})+Rad_{\widehat{\mathcal{D}}^{\mathsf{T}}_t}(\mathcal{H})) + 6 \bigg( \sqrt{\frac{\log(2/\delta)}{2\widehat{D}^{\mathsf{S}}_s}} + \sqrt{\frac{\log(2/\delta)}{2\widehat{D}^{\mathsf{T}}_t}} \bigg),
\end{align}
where $(i)$ results from repeated application of the triangle property of hypothesis divergences, and 
$(ii)$ results from the following rearrangement of Bartlett and Mendelson's Lemma~\cite{bartlett2002rademacher}, which is reproduced in Lemma~\ref{lemma:bartlett} of Appendix~\ref{app_ssec:bartlett_mendelson}. 
When the distribution divergence is compared over empirical distributions, $\widehat{d}_{\mathcal{H}\Delta\mathcal{H}}(\widehat{\mathcal{D}}^{\mathsf{T}}_t,\widehat{\mathcal{D}}^{\mathsf{S}}_s)$ and ${d}_{\mathcal{H}\Delta\mathcal{H}}(\widehat{\mathcal{D}}^{\mathsf{T}}_t,\widehat{\mathcal{D}}^{\mathsf{S}}_s)$ are interchangeable, but we use $\widehat{d}_{\mathcal{H}\Delta\mathcal{H}}(\widehat{\mathcal{D}}^{\mathsf{T}}_t,\widehat{\mathcal{D}}^{\mathsf{S}}_s)$ for clarity in the main manuscript. 

{\color{black}In order to get the necessary rearrangement of Lemma~\ref{lemma:bartlett}, we start from the definition of $d_{\mathcal{H}\Delta\mathcal{H}}$.
Here, we have 
\begin{equation} \label{eq:d_HDH_original}
    d_{\mathcal{H}\Delta\mathcal{H}}(\mathcal{D},\widehat{\mathcal{D}}) 
    = 
    2 \sup_{g \in \mathcal{H}\Delta\mathcal{H}} \vert Pr_{x \in \mathcal{D}}[I(g(x))] - Pr_{\widehat{x} \in \widehat{\mathcal{D}}}[I(g(\widehat{x}))]  \vert,
\end{equation}
where $I(\cdot)$ is the indicator function, and we use $x$ to denote data in $\mathcal{D}$ and $\widehat{x}$ to denote data in $\widehat{\mathcal{D}}$. Since the probabilities are over the datasets,~\eqref{eq:d_HDH_original} can be equivalently written as: 
\begin{equation} \label{eq:d_HDH_expect} 
d_{\mathcal{H}\Delta\mathcal{H}}(\mathcal{D},\widehat{\mathcal{D}}) 
= 
2 \sup_{g \in \mathcal{H}\Delta\mathcal{H}} 
\Bigg\vert \frac{1}{D} \sum_{i=1}^{D} g(x_i) - \frac{1}{\widehat{D}} \sum_{i=1}^{\widehat{D}} g(\widehat{x}_i) \Bigg\vert 
= 
2 \sup_{g \in \mathcal{H}\Delta\mathcal{H}} 
\bigg\vert \mathbb{E}_{x \in \mathcal{D}}[I(g(x))] - \mathbb{E}_{\widehat{x} \in \widehat{\mathcal{D}}}[I(g(\widehat{x}))] \bigg\vert.
\end{equation}
Additionally, we can rearrange Bartlett and Mendelson's Lemma (Lemma~\ref{lemma:bartlett} in Appendix~\ref{app_ssec:bartlett_mendelson}) to obtain the following one-sided bound:
\begin{equation} \label{eq:bm_lemma_rearranged}
    \mathbb{E}_{x \in \mathcal{D}} [I(g(x))] - \frac{1}{\widehat{D}} \sum_{i=1}^{\widehat{D}} g(\widehat{x}_i) = \mathbb{E}_{x \in \mathcal{D}} [I(g(x))] - \mathbb{E}_{\widehat{x} \in \widehat{\mathcal{D}}} [I(g(\widehat{x}))]
    \leq 2 Rad_{\widehat{\mathcal{D}}}(\mathcal{H}) + 3 \sqrt{\frac{\log(2/\delta)}{2\widehat{D}}}. 
\end{equation}
Finally, taking the absolute value and $2\sup_{g \in \mathcal{H}\Delta\mathcal{H}}$ from both sides of~\eqref{eq:bm_lemma_rearranged} then combining with~\eqref{eq:d_HDH_expect} yields:
    \begin{equation} \label{eq:bm_lemma_2_d_HDH}
        2 \sup_{g \in \mathcal{H}\Delta\mathcal{H}} 
        \bigg\vert
        \mathbb{E}_{x \in \mathcal{D}} [I(g(x))] - \mathbb{E}_{\widehat{x} \in \widehat{\mathcal{D}}} [ I(g(\widehat{x}))] 
        \bigg\vert
        = 
        d_{\mathcal{H}\Delta\mathcal{H}}(\mathcal{D},\widehat{\mathcal{D}}) 
        \leq 
        4 Rad_{\widehat{\mathcal{D}}}(\mathcal{H}) + 6 \sqrt{\frac{\log(2/\delta)}{2\widehat{D}}}.
    \end{equation}
}

\noindent Thus, we can combine~\eqref{cor1_init_state} and~\eqref{div2rad}, and subsequently apply Lemma~\ref{lemma:massart_lemma} in Appendix~\ref{app_ssec:massart_lemma} to bound the empirical Rademacher complexities to obtain the result.
\end{proof}
\newpage 

\section{Massart's Lemma} 
\label{app_ssec:massart_lemma}


{\color{black}
\begin{lemma}\label{lemma:massart_lemma}
Finite Class Lemma (Massart's Lemma)~\cite{massartlecture, shalev2014understanding}.
Given some $\boldsymbol{Q} = \{x_i\}_{i=1}^{n}$, let $\boldsymbol{A}_{\boldsymbol{Q}} = \{ \boldsymbol{a}_1,\cdots,\boldsymbol{a}_{N} \}$, where $\boldsymbol{a}_j = \{h_j(x_1),\cdots,h_j(x_n) \}$, $h_j \in \mathcal{H}$, and $\mathcal{H}$ is a binary hypothesis space. 
Define $\bar{\boldsymbol{a}} = \frac{1}{N} \sum_{j=1}^{N} \boldsymbol{a}_j$ and $N = \vert \boldsymbol{A}_{\boldsymbol{Q}} \vert \leq 2^{n}$. 
Then, 
\begin{equation}\label{eq:massart_statement_eq}
    Rad_{\boldsymbol{Q}}(\mathcal{H}) \leq \max_{\boldsymbol{a} \in \boldsymbol{A}_{\boldsymbol{Q}}} \Vert \boldsymbol{a} - \bar{\boldsymbol{a}} \Vert_2 \frac{\sqrt{2\log(N)}}{n}.
\end{equation}
\end{lemma}

Since $\mathcal{H}$ is a binary hypothesis space, the most that $\boldsymbol{a},\bar{\boldsymbol{a}} \in \boldsymbol{A}_{\boldsymbol{Q}}$ can differ in a single position is by $1$. In other words, using $\boldsymbol{a}_k$ and $\bar{\boldsymbol{a}}_k$ to denote the $k$-th position of $\boldsymbol{a}$ and $\bar{\boldsymbol{a}}$, we have $\vert \boldsymbol{a}_k - \bar{\boldsymbol{a}}_k \vert \leq 1$ for all $k \in \{1,\cdots,n\}$ in $\boldsymbol{Q}$. 
We can then use this to upper bound~\eqref{eq:massart_statement_eq} as $\max_{\boldsymbol{a} \in \boldsymbol{A}_{\boldsymbol{Q}}} \Vert \boldsymbol{a} - \bar{\boldsymbol{a}} \Vert_{2} \leq \sqrt{\sum_{i \in \{1,\cdots,n\}} 1^{2}} = \sqrt{n}$. 
Thus,~\eqref{eq:massart_statement_eq} can also be bounded for binary $\mathcal{H}$ as: 
\begin{equation} \label{eq:ml_eq2}
    Rad_{\boldsymbol{Q}}(\mathcal{H}) \leq \sqrt{\frac{2\log(N)}{n}}.
\end{equation}

Consequently, if $\mathcal{H}$ shatters $\boldsymbol{Q}$, it means that $\mathcal{H}$ contains all possible hypothesis $h$ that map $\boldsymbol{Q}$ to $\{0,1\}$. In other words, if shattering occurs, then $N = \vert \mathcal{H} \circ \boldsymbol{Q} \vert = 2^{n}$ and~\eqref{eq:ml_eq2} becomes:
\begin{equation} \label{eq:ml_eq3}
    Rad_{\boldsymbol{Q}}(\mathcal{H}) \leq \sqrt{\frac{2\log(N)}{n}} \leq \sqrt{\frac{2\log(2^{n})}{n}} = \sqrt{2\log{2}}.
\end{equation}

}

\section{Bartlett and Mendelson's Lemma}
\label{app_ssec:bartlett_mendelson}

\begin{lemma}~\cite{bartlett2002rademacher}. 
\label{lemma:bartlett}
Let $\mathcal{H} \subseteq [0,1]^{\mathcal{X}}$, where $\mathcal{X}$ is an input space. Then, $\forall \delta > 0$, with probability at least $1-\delta$, the following inequality holds $\forall h \in \mathcal{H}$:
\begin{equation}
    \mathbb{E}[h(x)] \leq \frac{1}{n} \sum_{i=1}^{n} h(x_i) + 2 Rad_{\boldsymbol{Q}}(\mathcal{H}) + 3 \sqrt{\frac{\log(2/\delta)}{2n}},
\end{equation}
where $\boldsymbol{Q} = \{x_i\}_{i=1}^{n}$, and $Rad_{\boldsymbol{Q}}(\mathcal{H})$ is the empirical Rademacher complexity defined as in~\eqref{eq:rad}. 
\end{lemma}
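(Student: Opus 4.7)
The plan is to prove this classical generalization bound via the standard two-step concentration and symmetrization argument that is by now textbook in statistical learning theory. I would first introduce the one-sided supremum deviation
\[
\Phi(\bm{Q}) = \sup_{h \in \mathcal{H}} \Big(\mathbb{E}[h(x)] - \tfrac{1}{n}\sum_{i=1}^{n} h(x_i)\Big),
\]
noting that since $h$ takes values in $[0,1]$, replacing any single $x_i$ by an alternative $x'_i$ changes $\Phi$ by at most $1/n$. This bounded-differences property allows a direct application of McDiarmid's inequality, yielding $\Phi(\bm{Q}) \le \mathbb{E}[\Phi(\bm{Q})] + \sqrt{\log(2/\delta)/(2n)}$ with probability at least $1-\delta/2$.

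Next I would bound the expectation $\mathbb{E}[\Phi(\bm{Q})]$ by the expected Rademacher complexity through the usual symmetrization trick: introducing an i.i.d. ghost sample $\bm{Q}' = \{x'_i\}_{i=1}^n$ and using Jensen's inequality to move the expectation inside the supremum, one obtains
\[
\mathbb{E}[\Phi(\bm{Q})] \le \mathbb{E}\Big[\sup_{h\in\mathcal{H}} \tfrac{1}{n}\sum_{i=1}^n (h(x'_i)-h(x_i))\Big].
\]
Then, because each summand is a symmetric random variable, one can introduce i.i.d.\ Rademacher signs $\sigma_i \in \{-1,+1\}$ in front of each difference without changing the distribution, and a triangle inequality on the supremum yields $\mathbb{E}[\Phi(\bm{Q})] \le 2\,\mathbb{E}[Rad_{\bm{Q}}(\mathcal{H})]$.

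The final step is to upgrade from the \emph{expected} Rademacher complexity to the \emph{empirical} one. Since $Rad_{\bm{Q}}(\mathcal{H})$ itself, viewed as a function of the sample $\bm{Q}$, also has bounded differences (again of order $1/n$ given the $[0,1]$ range), a second application of McDiarmid's inequality gives $\mathbb{E}[Rad_{\bm{Q}}(\mathcal{H})] \le Rad_{\bm{Q}}(\mathcal{H}) + \sqrt{\log(2/\delta)/(2n)}$ with probability $1-\delta/2$. Combining the two high-probability events by a union bound (each holding with probability $1-\delta/2$), substituting into the definition of $\Phi$, and collecting the two square-root terms produces the coefficient $3$ in the stated bound (once $2 \cdot \tfrac{1}{2} + \tfrac{1}{2}$ fractional factors are accounted for inside the square root, this is tight up to constants, and one arrives at $3\sqrt{\log(2/\delta)/(2n)}$).

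I expect the main technical obstacle to be the symmetrization step, since one must be careful about the asymmetry between $\mathbb{E}[h(x)]$ and the empirical average (a one-sided rather than two-sided argument), and about the proper use of Jensen's inequality to pass the outer expectation inside the supremum without losing tightness. The two McDiarmid applications are routine in comparison, but the bookkeeping to ensure the final constant in front of $\sqrt{\log(2/\delta)/(2n)}$ is exactly $3$ (rather than, say, $2\sqrt{2}$) requires keeping track of whether the bounded-differences constant is $1/n$ or $2/n$ in each application.
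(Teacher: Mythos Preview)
Your proposal is correct and follows the standard textbook argument (McDiarmid on the supremum deviation, symmetrization with a ghost sample to bound $\mathbb{E}[\Phi]$ by twice the expected Rademacher complexity, then a second McDiarmid to pass from expected to empirical Rademacher complexity, union-bounded at level $\delta/2$ each to yield the constant $3$). The paper itself does not prove this lemma at all: it is stated in the appendix as a cited result from Bartlett and Mendelson~\cite{bartlett2002rademacher} and used as a black box in the proofs of Lemma~\ref{lemma1} and Corollary~\ref{coro:gen_error_bound_rademacher}, so there is no ``paper's own proof'' to compare against. Your parenthetical remark about the constant (``$2\cdot\tfrac12+\tfrac12$ fractional factors'') is muddled, but the actual bookkeeping you describe---$1+2=3$ copies of $\sqrt{\log(2/\delta)/(2n)}$---is right.
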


\section{A Discussion on Federated Divergence Estimation}
\label{app_ssec:divergence_estimation}
{\color{black} Similar to popular works in the field~\cite{ben2010theory,zhao2018adversarial}, we compute the empirical $\mathcal{H}$ divergence in place of the empirical $\mathcal{H}\Delta\mathcal{H}$ divergence.}
{\color{black}We can adjust the definition of the true $d_{\mathcal{H}}$ divergence from Definition~\ref{def:div} for empirical $\widehat{d}_{\mathcal{H}}$ divergences by substituting true data terms with their empirical counterparts. Thus, we obtain:  $\widehat{d}_{\mathcal{H}} = 2 \sup_{A \in \mathcal{A}_{\mathcal{H}}} \vert Pr_{\widehat{\mathcal{D}}^{\mathsf{T}}_t}(A) - Pr_{\widehat{\mathcal{D}}^{\mathsf{S}}_s}(A) \vert$, which means that the determination of empirical distribution divergence is equivalent to quantifying the separability of source and target domains based on the hypothesis space $\mathcal{H}$.
} 
{\color{black}Here, the researchers in~\cite{ben2010theory} propose a methodology to compute the empirical distribution divergence $\widehat{d}_{\mathcal{H}}(\widehat{\mathcal{D}}^{\mathsf{T}}_t,\widehat{\mathcal{D}}^{\mathsf{S}}_s)$ for symmetric hypothesis spaces, such as ours, in which if $h \in \mathcal{H}$ then $1-h \in \mathcal{H}$ as well.}
Their method computes $\widehat{d}_{\mathcal{H}}(\widehat{\mathcal{D}}^{\mathsf{T}}_t,\widehat{\mathcal{D}}^{\mathsf{S}}_s)$ by relabeling source data $\widehat{\mathcal{D}}^{\mathsf{S}}_s$ as $0$ and target data $\widehat{\mathcal{D}}^{\mathsf{T}}_t$ as $1$, and subsequently training a binary classifier based on the hypothesis space $\mathcal{H}$ to distinguish between domains. However, such methods assume that both source and target domains' data are located in the same location. 
In a federated setting, where data privacy becomes a major concern and data sharing is discouraged, we cannot readily apply the method of~\cite{ben2010theory}. We consequently develop a decentralized peer-to-peer algorithm, Algorithm~\ref{alg:fed_div_est}, to compute empirical distribution divergence among pairs of network devices. Our method only relies on hypothesis exchange among the devices and eliminates the need for any data transfer across the network. 

\newpage 

\section{{\color{black}Tightness/Looseness of Theoretical Results}}
\label{app_sec:tl_bounds}
To investigate the tightness/looseness of the proposed generalization error bounds, we have conducted additional simulations to measure bounds in Table~\ref{tab:bound_tl}. 
Note that some of the bounds contain terms on the true error, which would require having the true set of all data that a device could possibly encounter and/or generate in order to calculate. As a result, when true error terms appear, we estimate their values in Table~\ref{tab:bound_tl} using their corresponding empirical error terms. 
\\

Since the left hand side (L.H.S.) for both Theorem~\ref{thm:upper_bound_tte} and Corollary~\ref{coro:gen_error_bound_rademacher} is the same true error at target devices term, we use a single row to represent both cases in Table~\ref{tab:bound_tl}. The last two rows measure the empirical estimates for the right hand side (R.H.S.) of Theorem 2 and Corollary 1. 
Comparing the LHS and RHS for Theorem~\ref{thm:upper_bound_tte} (i.e., the first two rows of Table~\ref{tab:bound_tl}, we can see that the difference tends to be small, roughly from $0$ to $2\text{x}$.
The difference between LHS and RHS for Corollary~\ref{coro:gen_error_bound_rademacher} are larger however, as Corollary~\ref{coro:gen_error_bound_rademacher} can be viewed as an upper bound for Theorem~\ref{thm:upper_bound_tte} owing to converting the intractable terms in the theorem (i.e., the true error and divergence terms) to empirically measurable quantities. 
{\color{black}
In this process, additional quantities, which bounded empirical Rademacher complexities via Massart's lemma, were introduced that made the RHS larger than the LHS for Corollary~\ref{coro:gen_error_bound_rademacher}. 
The resulting bound in Corollary~\ref{coro:gen_error_bound_rademacher} considers possible shattering. 
Therefore, Corollary~\ref{coro:gen_error_bound_rademacher} yields the worst-case bound, making it applicable to a broad category of problems in federated domain adaptation. 
Nonetheless, in the empirical estimates for Corollary~\ref{coro:gen_error_bound_rademacher}, the RHS values differ from the LHS by roughly an order of magnitude, which is reasonable given that the LHS is a true error term that cannot be optimized directly and that Corollary~\ref{coro:gen_error_bound_rademacher} bounds the worst-case scenario.}
\\


\begin{table}[h]
\caption{
{\color{black}
Comparing the left and right sides of Theorem 2 and Corollary 1, i.e.,~\eqref{eq:thm1_main_result} and~\eqref{eq:coro_gen_error_bound_rademacher}. Since true errors (e.g., $\varepsilon^{\mathsf{S}}_{s}(h^{\mathsf{S}}_s)$ or $\varepsilon^{\mathsf{T}}_{t}(h^{\mathsf{T}}_t)$) cannot be calculated in practice, when they appear (e.g., both sides of Theorem 2), we substitute them for empirical error terms instead (e.g., $\widehat{\varepsilon}^{\mathsf{S}}_{s}(h^{\mathsf{S}}_s)$) to assess the tightness/looseness of the bounds. \textbf{L.H.S.} denotes left hand side and \textbf{R.H.S.} denotes right hand side.}}
\begin{tabularx}{0.99\textwidth}
{m{8em} m{4.5em} m{4.5em} m{4.5em} m{4.5em} m{4.5em} m{4.5em} m{4.5em} m{4.5em} m{4.5em}}
\toprule[.2em]
& \multicolumn{3}{c}{\textbf{Single Dataset}} & \multicolumn{3}{c}{\textbf{Mixed Dataset}} & \multicolumn{3}{c}{\textbf{Split Dataset}} \\
\cmidrule(lr){2-4} \cmidrule(lr){5-7} \cmidrule(lr){8-10}
& M & U & MM & M+MM & M+U & MM+U & M//MM & M//U & MM//U \\
\midrule
L.H.S. Thm2/Cor1 & 0.25 & 0.20 & 0.56 & 0.42 & 0.11 & 0.68 & 0.40 & 0.16 & 0.41 \\ 
R.H.S. Thm2 & 0.34 & 0.46 & 0.62 & 0.58 & 0.35 & 0.76 & 0.66 & 0.41 & 0.42 \\
R.H.S. Cor1 & 8.28 & 8.40 & 8.55 & 8.52 & 8.30 & 8.70 & 8.60 & 8.35 & 8.36 \\ 
\bottomrule
\end{tabularx}
\label{tab:bound_tl}
\end{table}
\newpage

\section{Solution to Optimization}
\label{app_ssec:st_lf_solution}
Our solution methodology for $(\boldsymbol{\mathcal{P}})$ involves transforming $(\boldsymbol{\mathcal{P}})$ into a geometric program, which, after a logarithmic change of variables, becomes a convex program. To explain this procedure, we first explain geometric programming in Sec.~\ref{app_sssec:gp} and then explain the details behind our transformations for $(\boldsymbol{\mathcal{P}})$ in Sec.~\ref{app_sssec:tx_2gp}.

\subsubsection{Geometric Programming}
\label{app_sssec:gp}

A standard GP is a non-convex
problem formulated as minimizing a posynomial under posynomial 
inequality constraints and monomial equality constraints~\cite{boyd2007tutorial,chiang2007power}:
\begin{equation}\label{eq:GPformat}
\begin{aligned}
&\min_{\bm{y}} g_0 (\bm{y})~~\\
& \textrm{s.t.} ~~\\ &g_i(\bm{y})\leq 1, \; i=1,\cdots,I,\\& f_\ell(\bm{y})=1, \; \ell=1,\cdots,L,
\end{aligned}
\end{equation}
where $g_i(\bm{y})=\sum_{m=1}^{M_i} d_{i,m} y_1^{\beta^{(1)}_{i,m}} \cdots y_n ^{\beta^{(n)}_{i,m}}$, $\forall i$, 
and $f_\ell(\bm{y})= d_\ell y_1^{\beta^{(1)}_\ell}  \cdots y_n ^{\beta^{(n)}_\ell}$, $\forall \ell$. Since the log-sum-exp function $f(\bm{y}) = \log \sum_{j=1}^n e^{y_j}$ is convex, where $\log$ denotes the natural logarithm, the GP in its standard format can be transformed into a convex programming formulation, via logarithmic change of variables and constants $z_i=\log(y_i)$, $b_{i,k}=\log(d_{i,k})$, $b_\ell=\log (d_\ell)$, and application of $\log$ on the objective and constrains of~\eqref{eq:GPformat}.
The result is as follows:
\begin{equation}~\label{GPtoConvex}
\begin{aligned}
&\min_{\bm{z}} \;\log \sum_{m=1}^{M_0} e^{\left(\bm{\beta}^{\top}_{0,m}\bm{z}+ b_{0,m}\right)}\\&\textrm{s.t.}~ \log \sum_{m=1}^{M_i} e^{\left(\bm{\beta}^{\top}_{i,m}\bm{z}+ b_{i,m}\right)}\leq 0,~ i=1,\cdots,I, \\&~~~~~~ \bm{\beta}_\ell^\top \bm{z}+b_\ell =0,\; \ell=1,\cdots,L,
\end{aligned}
\end{equation}
where $\bm{z}=[z_1,\cdots,z_n]^\top$, $\bm{\beta}_{i,m}=\left[\beta_{i,m}^{(1)},\cdots, \beta_{i,m}^{(n)}\right]^\top$, $\forall i,m$, and $\bm{\beta}_{\ell}=\left[\beta_{\ell}^{(1)},\cdots, \beta_{\ell}^{(n)}\right]^\top$\hspace{-2mm}, $\forall \ell$.

\subsubsection{Optimization Problem Transformation}
\label{app_sssec:tx_2gp}
With the preliminaries on geometric programming, we can revisit $(\boldsymbol{\mathcal{P}})$ and transform it from a signomial program to a geometric program. For explanatory clarity, we write $(\boldsymbol{\mathcal{P}})$ with the full expressions for all terms below:
\begin{align}
    & (\boldsymbol{\mathcal{P}}):~\argmin_{\boldsymbol{\alpha},\boldsymbol{\psi}} 
    \phi^{\mathsf{S}}  \underbrace{\sum_{i \in \mathcal{N}} (1-\psi_i) S_i}_{(a)}
    + \phi^{\mathsf{T}}  \underbrace{\sum_{j \in \mathcal{N}} \psi_j \sum_{i \in \mathcal{N}} (1-\psi_i) \alpha_{i,j} T_{i,j}
    }_{(b)}
    + \phi^{\mathsf{E}} \underbrace{ \sum_{i \in \mathcal{N}} \sum_{j \in \mathcal{N}} E_{i,j} (\alpha_{i,j})}_{(c)} \label{app_eq:obj_fxn_full}\\[-1em]
    & \textrm{s.t.} \nonumber \\
    & h_j^{\mathsf{T}} = \sum_{i \in \mathcal{N}} \alpha_{i,j} (1-\psi_i) \psi_j h_i^{\mathsf{S}}, \forall j \in \mathcal{N} \label{app_eq:target_hypothesis_def} \\
    & \sum_{i \in \mathcal{N}} \alpha_{i,j} = \psi_j, \forall j \in \mathcal{N} \label{app_eq:only_targets_receive} \\
    & E_{i,j}(\alpha_{i,j}) = K_{i,j} \frac{\alpha_{i,j}}{\alpha_{i,j}+\epsilon_{E}}, \forall i,j \in \mathcal{N} \label{app_eq:def_E_i} \\ 
    & 0 \leq \alpha_{i,j} \leq 1, \psi_i(t) \in \{ 0,1 \}, \forall i,j \in \mathcal{N} \label{app_eq:alpha_psi_limits},
\end{align}
where $K_{i,j}$ is determined in Sec.~\ref{sec:experiments}. Recall that in the main text, we defined 
{\color{black} $S_i = \left( \widehat{\varepsilon}^{\mathsf{S}}_i(h^{\mathsf{S}}_i)+2\sqrt{2\log(2)}
+ 3\sqrt{\frac{\log(2/\delta)}{2\widehat{D}^{\mathsf{S}}_i}}  \right)$, 
and 
$T_{i,j} = 
\bigg( \widehat{\varepsilon}^{\mathsf{S}}_i(h^{\mathsf{S}}_i)+10\sqrt{2\log(2)}
+ \varepsilon^{\mathsf{T}}_j(f^{\mathsf{T}}_j,f^{\mathsf{S}}_i)
+ \frac{1}{2}\widehat{d}_{\mathcal{H}\Delta\mathcal{H}}(\widehat{\mathcal{D}}^{\mathsf{T}}_j,\widehat{\mathcal{D}}^{\mathsf{S}}_i)
+ \widehat{\varepsilon}^{\mathsf{T}}_j(h^{\mathsf{T}}_j,h^{\mathsf{S}}_i) +6 \left( \sqrt{ \log(2/\delta)/(2\widehat{D}^{\mathsf{S}}_i)}
    + \sqrt{ \log(2/\delta)/ (2\widehat{D}^{\mathsf{T}}_j)} \right) \bigg)$. 
}
We will also use $S_i$ and $T_{i,j}$ to make some of our explanations more concise below.

In the following we transform $(\boldsymbol{\mathcal{P}})$ into a GP format, starting with the objective function~\eqref{app_eq:obj_fxn_full}. 

\textbf{Term $(a)$ in Objective Function.} Term $(a)$ involves sums $(1-\psi_i)S_i$ for $i \in \mathcal{N}$,  which is not in the format of GP since it includes the negative optimization variables $-\psi_i$. So, we bound it by an auxiliary variable $\chi^{\mathsf{S}}_i > 0$ as follows:
\begin{align}
    & (1-\psi_i)
    S_i \leq \chi^{\mathsf{S}}_i \label{app_eq:a_obj}\\
    & \frac{1}{\psi_i +
    \frac{\chi^{\mathsf{S}}_i}{S_i}} \leq 1 \label{app_eq:a_obj_adjust},
\end{align}
where we obtain~\eqref{app_eq:a_obj_adjust} by rearranging~\eqref{app_eq:a_obj}. The auxiliary variables $\chi^{\mathsf{S}}_i$ now replace the terms inside the summation of term $(a)$ in the objective function, and those terms become the inequality constraint developed in~\eqref{app_eq:a_obj_adjust}. 
It can be seen that the fraction in~\eqref{app_eq:a_obj_adjust} is not in the format of GP since it is an inequality with a posynomial in the denominator, which is not a posynomial. We thus exploit the arithmetic-geometric mean inequality (Lemma~\ref{Lemma:AG_mean}) to approximate the denominator with a monomial:
\begin{align} \label{app_eq:approx_term_a}
    & F_i(\bm{x}) = \psi_i + \frac{\chi^{\mathsf{S}}_i}{S_i} \geq \widehat{F}_i(\bm{x};\ell) \triangleq \left(\frac{\psi_i F_i([\bm{x}]^{\ell-1}) } {[\psi_i]^{\ell-1}} \right) ^ {\frac{[\psi_i]^{\ell-1}}{F_i([\bm{x}]^{\ell-1})}} 
    \left( \frac{\chi^{\mathsf{S}}_i F_i([\bm{x}]^{\ell-1})}{[\chi^{\mathsf{S}}_i]^{\ell-1}} \right)^ {\frac{[\chi^{\mathsf{S}}_i/ S_i]^{\ell-1}}{F_i([\bm{x}]^{\ell-1})}},
\end{align}
where we use the fact that $S_i$ is a constant to obtain $S_i = [S_i]^{\ell-1}$. 
We finally approximate $(1-\psi_i)S_i$ for each $i \in \mathcal{N}$ in term $(a)$ as follows:
\begin{tcolorbox}[ams align]
& \frac{1}{\widehat{F}_i(\bm{x};\ell)} \leq 1, \forall i \in \mathcal{N}\\
& \chi^{\mathsf{S}}_i \geq 0, \forall i \in \mathcal{N}.
\end{tcolorbox}

\textbf{Term $(b)$ in Objective Function.} 
Term $(b)$ contains optimization variables within $T_{i,j}$. We first use an auxiliary variable for $T_{i,j}$, and then use a separate one for the term inside the summation in $(b)$.
Specifically, within $T_{i,j}$, the hypothesis comparison term $\widehat{\varepsilon}^{\mathsf{T}}_j(h^{\mathsf{T}}_j,h^{\mathsf{S}}_i)$ is defined as $\frac{1}{\widehat{D}^{\mathsf{T}}_j} \sum_{x_d \in \widehat{D}^{\mathsf{T}}_j} \vert h^{\mathsf{T}}_{j}(x_d) - h^{\mathsf{S}}_i(x_d)\vert$, which relies on optimization variables $\boldsymbol{\psi}$ and $\boldsymbol{\alpha}$ due to the definition of $h^{\mathsf{T}}_j$ in~\eqref{app_eq:target_hypothesis_def}. 
Using this fact, we can bound the hypothesis comparison term by expanding the definition of $h^{\mathsf{T}}_j$ as follows:
\begin{equation} \label{app_eq:T_ij_hcomparison_bound}
    \widehat{\varepsilon}^{\mathsf{T}}_j(h^{\mathsf{T}}_j,h^{\mathsf{S}}_i)
    \leq \sum_{k \in \mathcal{N}} (1-\psi_k)\psi_i\alpha_{k,i} \widehat{T}_{i,j,k},
\end{equation}
where $\widehat{T}_{i,j,k} = \frac{1}{\widehat{D}^{\mathsf{T}}_j} \sum_{x_d \in \widehat{D}^{\mathsf{T}}_j} \vert h^{\mathsf{S}}_{{k}}(x_d) - h^{\mathsf{S}}_i(x_d) \vert$. 
The result in~\eqref{app_eq:T_ij_hcomparison_bound} contains a negative variable $-\psi_k$, so we employ an auxiliary variable $\widehat{\chi}^{\mathsf{T}}_{i,j,k} > 0$ to bound the terms and convert them to constraints as follows:
\begin{align} \label{app_eq:aux_hypothesis_comp}
    & (1-\psi_k)\psi_i \alpha_{k,i} \widehat{T}_{i,j,k} \leq \widehat{\chi}^{\mathsf{T}}_{i,j,k} \\
    & \frac{1}{\psi_k + \frac{\widehat{\chi}^{\mathsf{T}}_{i,j,k}}{\psi_i \alpha_{k,i}\widehat{T}_{i,j,k}}} \leq 1, \label{app_eq:aux_hypothesis_comp_rearr}
\end{align}
where~\eqref{app_eq:aux_hypothesis_comp_rearr} is a result of rearranging~\eqref{app_eq:aux_hypothesis_comp}. 
Due to the posynomial in the denominator,~\eqref{app_eq:aux_hypothesis_comp_rearr} is not in the format of GP, and we exploit Lemma~\ref{Lemma:AG_mean} to approximate the denominator with a posynomial:
\begin{align} \label{app_eq:approx_term_b1}
    & G_{i,j,k}(\bm{x}) = \psi_k + \frac{\widehat{\chi}^{\mathsf{T}}_{i,j,k}}{\psi_i \alpha_{k,i}\widehat{T}_{i,j,k}} \geq \nonumber \\ 
    & 
    \widehat{G}_{i,j,k}(\bm{x};\ell) \triangleq \left( \frac{\psi_k G_{i,j,k}([\bm{x}]^{\ell-1})}{[\psi_k]^{\ell-1}} \right) ^ {\frac{[\psi_k]^{\ell-1}}{G_{i,j,k}([\bm{x}]^{\ell-1})}} 
    \left( \frac{ \frac{\widehat{\chi}^{\mathsf{T}}_{i,j,k}}{\psi_i \alpha_{k,i}} G_{i,j,k}([\bm{x}]^{\ell-1})} {\left[\frac{\widehat{\chi}^{\mathsf{T}}_{i,j,k}}{\psi_i \alpha_{k,i}}\right]^{\ell-1}} \right) ^ {\frac{\left[\frac{\widehat{\chi}^{\mathsf{T}}_{i,j,k}}{\widehat{T}_{i,j,k}\psi_i \alpha_{k,i}}\right]^{\ell-1}} {G_{i,j,k}([\bm{x}]^{\ell-1})}},
\end{align}
where we the fact that $\widehat{T}_{i,j,k}$ is a constant to obtain $\widehat{T}_{i,j,k} = [\widehat{T}_{i,j,k}]^{\ell-1}$. We finally approximate $(1-\psi_k)\psi_i \alpha_{k,i} \widehat{T}_{i,j,k}$ for each $i,j,k \in \mathcal{N}$ as follows: 
\begin{tcolorbox}[ams align]
& \frac{1}{\widehat{G}_{i,j,k}(\bm{x};\ell)} \leq 1, \forall i,j,k \in \mathcal{N}\\
& \widehat{\chi}^{\mathsf{T}}_{i,j,k} \geq 0, \forall i,j,k \in \mathcal{N}.
\end{tcolorbox}

We can now address the rest of inner summation terms in $(b)$ by grouping $T_{i,j}$ into constant and optimization variable terms as follows: 
\begin{equation} \label{app_eq:term_b_complete}
    \psi_j (1-\psi_i)\alpha_{i,j}T_{i,j} = \psi_j (1-\psi_i)\alpha_{i,j}
    \left( \widehat{T}_{i,j} 
    + \sum_{k \in \mathcal{N}} \widehat{\chi}^{\mathsf{T}}_{i,j,k} \right), 
\end{equation}
where 
{\color{black}
$\widehat{T}_{i,j} = \bigg( \widehat{\varepsilon}^{\mathsf{S}}_i(h^{\mathsf{S}}_i)+10\sqrt{2\log(2)}
+ \varepsilon^{\mathsf{T}}_j(f^{\mathsf{T}}_j,f^{\mathsf{S}}_i)
+ \frac{1}{2}\widehat{d}_{\mathcal{H}\Delta\mathcal{H}}(\widehat{\mathcal{D}}^{\mathsf{T}}_j,\widehat{\mathcal{D}}^{\mathsf{S}}_i)
+ 6 \left( \sqrt{ \log(2/\delta)/(2\widehat{D}^{\mathsf{S}}_i)}
+ \sqrt{ \log(2/\delta)/ (2\widehat{D}^{\mathsf{T}}_j)} \right) \bigg)$ is a constant and
$\sum_{k \in \mathcal{N}} \widehat{\chi}^{\mathsf{T}}_{i,j,k}$ is a variable. 
}
Due to the negative variable $-\psi_i$ in~\eqref{app_eq:term_b_complete}, which is not in the format of GP, we need to introduce another auxiliary variable $\chi^{\mathsf{T}}_{i,j} > 0$ as follows: 
\begin{align} \label{app_eq:complete_term_b}
    & \psi_j (1-\psi_i)\alpha_{i,j}
    \left( \widehat{T}_{i,j} 
    + \sum_{k \in \mathcal{N}} \widehat{\chi}^{\mathsf{T}}_{i,j,k} \right)
    \leq \chi^{\mathsf{T}}_{i,j} \\
    & \frac{\left( \widehat{T}_{i,j} + \sum_{k \in \mathcal{N}} \widehat{\chi}^{\mathsf{T}}_{i,j,k} \right) } {\psi_i\widehat{T}_{i,j} + \psi_i \sum_{k \in \mathcal{N}} \widehat{\chi}^{\mathsf{T}}_{i,j,k} + \frac{\chi^{\mathsf{T}}_{i,j}}{\psi_j\alpha_{i,j}}}
    \leq 1, \label{app_eq:complete_term_b_rearr}
\end{align} 
where~\eqref{app_eq:complete_term_b_rearr} is a result of rearranging~\eqref{app_eq:complete_term_b}. 
We now approximate the posynomial in the denominator of~\eqref{app_eq:complete_term_b}:
\begin{align} \label{app_eq:approx_term_b2}
    & H_{i,j}(\bm{x}) = \psi_i \widehat{T}_{i,j} + \psi_i \sum_{k \in \mathcal{N}} \widehat{\chi}^{\mathsf{T}}_{i,j,k} + \frac{\chi^{\mathsf{T}}_{i,j}}{\psi_j\alpha_{i,j}} \geq 
    \widehat{H}_{i,j}(\bm{x};\ell) \triangleq 
    \left( \frac{\psi_i H_{i,j}([\bm{x}]^{\ell-1})}{[\psi_i]^{\ell-1}}  \right) ^ { \frac{[\psi_i]^{\ell-1} \widehat{T}_{i,j}} {H_{i,j}([\bm{x}]^{\ell-1})}} 
    \nonumber \\
    &
    \left( \frac{ \frac{\chi^{\mathsf{T}}_{i,j}}{\psi_j\alpha_{i,j}} H_{i,j}([\bm{x}]^{\ell-1}) } {\left[\frac{\chi^{\mathsf{T}}_{i,j}}{\psi_j\alpha_{i,j}}\right]^{\ell-1}}   \right) ^ {\frac{\left[\frac{\chi^{\mathsf{T}}_{i,j}}{\psi_j\alpha_{i,j}}\right]^{\ell-1}} {H_{i,j}([\bm{x}]^{\ell-1})} } 
    \prod_{k \in \mathcal{N}} \left( \frac{\psi_i \widehat{\chi}^{\mathsf{T}}_{i,j,k} H_{i,j}([\bm{x}]^{\ell-1})} 
    {[\psi_i \widehat{\chi}^{\mathsf{T}}_{i,j,k}]^{\ell-1} } \right) ^{ \frac{[\psi_i \widehat{\chi}^{\mathsf{T}}_{i,j,k}]^{\ell-1}}{H_{i,j}([\bm{x}]^{\ell-1})}}.
\end{align}

Note that the optimization is significantly slower due to~\eqref{app_eq:approx_term_b2}, as the approximations for $\widehat{\varepsilon}^{\mathsf{T}}_{j}(h^{\mathsf{T}}_j,h^{\mathsf{S}}_i)$ introduces many more optimization variables, and we didn't observe any significant improvements from adding it. We therefore chose to omit $\widehat{\varepsilon}^{\mathsf{T}}_{j}(h^{\mathsf{T}}_j,h^{\mathsf{S}}_i)$ in the simulations. 
We finally approximate $\psi_j(1-\psi_i)\alpha_{i,j}T_{i,j}$ for each $i,j \in \mathcal{N}$ as follows:
\begin{tcolorbox}[ams align]
& \frac{\left( \widehat{T}_{i,j} + \sum_{k \in \mathcal{N}} \widehat{\chi}^{\mathsf{T}}_{i,j,k} \right)}{\widehat{H}_{i,j}(\bm{x};\ell)} \leq 1, \forall i,j \in \mathcal{N} \\
& \chi^{\mathsf{T}}_{i,j} \geq 0, \forall i,j \in \mathcal{N}.
\end{tcolorbox}

\textbf{Term $(c)$ in Objective Function.}
Term $(c)$ relies on the definition of $E_{i,j}$ in~\eqref{app_eq:def_E_i}, which contains a posynomial denominator and thus is not in the format of GP. We exploit Lemma~\ref{Lemma:AG_mean} to approximate the denominator as follows:
\begin{align} \label{app_eq:approx_term_c}
    J_{i,j}(\bm{x}) = \alpha_{i,j} + \epsilon_{E} 
    \geq \widehat{J}_{i,j}(\bm{x};\ell) 
    \triangleq \left( \frac{\alpha_{i,j} J_{i,j}([\bm{x}]^{\ell-1})}{[\alpha_{i,j}]^{\ell-1}} \right) ^ {\frac{[\alpha_{i,j}]^{\ell-1}}{J_{i,j}([\bm{x}]^{\ell-1})}} 
    \left( J_{i,j}([\bm{x}]^{\ell-1}) \right) ^ {\frac{\epsilon_E}{J_{i,j}([\bm{x}]^{\ell-1})}},
\end{align}
where we used $\epsilon_E = [\epsilon_E]^{\ell-1}$ as $\epsilon_E$ is a constant.
We can thus update $E_{i,j}$ as:
\begin{tcolorbox}[ams align]
& E_{i,j}(\alpha_{i,j}) = K_{i,j} \frac{\alpha_{i,j}}{\widehat{J}_{i,j}(\bm{x};\ell)}, \forall i,j \in \mathcal{N}.
\end{tcolorbox}

\textbf{Constraint~\eqref{app_eq:only_targets_receive}.}
Constraint~\eqref{app_eq:only_targets_receive} is an equality constraint on posynomials, which is not in the format of GP. To convert it into an inequality constraint, we introduce an auxiliary variable $\chi^{C}_{i,j} > 0$ for each $(i,j)$-pair, $i,j \in \mathcal{N}$ and a constant $\epsilon_{C} > 0$. We then introduce two replacement constraints as follows:
\begin{align}
    & \sum_{i \in \mathcal{N}} \alpha_{i,j} - \psi_j \leq \chi^{C}_{i,j} + \epsilon_{C} \label{app_eq:con_1_pos}\\
    & \sum_{i \in \mathcal{N}} \alpha_{i,j} - \psi_j \geq \chi^{C}_{i,j} - \epsilon_{C} \label{app_eq:con_1_neg}.
\end{align}
The expressions in ~\eqref{app_eq:con_1_pos} and~\eqref{app_eq:con_1_neg} contain negative variables, which again is not in the format of GP. 
Rearranging~\eqref{app_eq:con_1_pos} and~\eqref{app_eq:con_1_neg} yields:
\begin{align}
    & \frac{\sum_{i \in \mathcal{N}} \alpha_{i,j}}{ \chi^{C}_{i,j} + \epsilon_{C} + \psi_j} \leq 1 \label{app_eq:con_1_pos_rearr}\\
    & \frac{\chi^{C}_{i,j} - \epsilon_{C} + \psi_j}{\sum_{i \in \mathcal{N}} \alpha_{i,j}} \leq 1 \label{app_eq:con_1_neg_rearr}.
\end{align}
Both~\eqref{app_eq:con_1_pos_rearr} and~\eqref{app_eq:con_1_neg} contain posynomial denominators and are not in the format of GP. We thus exploit Lemma~\ref{Lemma:AG_mean} to obtain the following monomial approximations for the denominators:
\begin{align} \label{app_eq:posy_con_pos}
    & M^{+}_{i,j}(\bm{x}) = \chi^{C}_{i,j} + \epsilon_{C} + \psi_j \geq 
    \widehat{ M}^{+}_{i,j}(\bm{x};\ell) \triangleq \nonumber \\
    & 
    \left( \frac{\chi^{C}_{i,j} M^{+}_{i,j}([\bm{x}]^{\ell-1})}{[\chi^{C}_{i,j}]^{\ell-1}} \right) ^{\frac{[\chi^{C}_{i,j}]^{\ell-1}}{M^{+}_{i,j}([\bm{x}]^{\ell-1})}} 
    \left( M^{+}_{i,j}([\bm{x}]^{\ell-1}) \right) ^{\frac{\epsilon_{C}}{M^{+}_{i,j}([\bm{x}]^{\ell-1})}}
    \left( \frac{\psi_j M^{+}_{i,j}([\bm{x}]^{\ell-1})}{[\psi_j]^{\ell-1}} \right) ^{\frac{[\psi_j]^{\ell-1}}{M^{+}_{i,j}([\bm{x}]^{\ell-1})}},   
\end{align}
and 
\begin{align} \label{app_eq:posy_con_neg}
    & M^{-}_{i,j}(\bm{x}) = \sum_{i \in \mathcal{N}} \alpha_{i,j} \geq 
    \widehat{ M}^{-}_{i,j}(\bm{x};\ell) \triangleq 
    \prod_{i \in \mathcal{N}} \left( \frac{\alpha_{i,j} M^{-}_{i,j}([\bm{x}]^{\ell-1})}{[\alpha_{i,j}]^{\ell-1}} \right) ^{\frac{[\alpha_{i,j}]^{\ell-1}}{M^{-}_{i,j}([\bm{x}]^{\ell-1})}}, 
\end{align}
where we use~\eqref{app_eq:posy_con_pos} and~\eqref{app_eq:posy_con_neg} to approximate the denominators in~\eqref{app_eq:con_1_pos_rearr} and~\eqref{app_eq:con_1_neg_rearr} respectively. Using this result we now have the constraints:
\begin{tcolorbox}[ams align]
& \frac{\sum_{i \in \mathcal{N}} \alpha_{i,j}}{ \widehat{M}^{+}_{i,j}(\bm{x};\ell) } \leq 1 \\
& \frac{\chi^{C}_{i,j} - \epsilon_{C} + \psi_j}{\widehat{M}^{-}_{i,j}(\bm{x};\ell)} \leq 1.
\end{tcolorbox}

We note that for the simulations in Sec.~\ref{sec:experiments}, we chose $\epsilon_C = 1e-2$.

Combining all the aforementioned adjustments to $(\boldsymbol{\mathcal{P}})$ yields our final optimization formulation $(\boldsymbol{\mathcal{P}}^{'})$, which is an iterative optimization, where at each iteration, the optimization admits the GP form. 
\begin{tcolorbox}[ams align]
    & (\boldsymbol{\mathcal{P}}^{'}):~\argmin
    \phi^{\mathsf{S}} \sum_{i \in \mathcal{N}} \chi^{\mathsf{S}}_i 
    + \phi^{\mathsf{T}} \sum_{j \in \mathcal{N}} \sum_{i \in \mathcal{N}} \chi^{\mathsf{T}}_{i,j}
    + \phi^{\mathsf{E}} \sum_{i \in \mathcal{N}} \sum_{j \in \mathcal{N}} E_{i,j} (\alpha_{i,j}) 
    + \sum_{j \in \mathcal{N}} \sum_{i \in \mathcal{N}} \chi^{C}_{i,j}
    \label{app_eq:prob_prime}
    \\
    & \textrm{s.t.} \nonumber \\
    & h_j^{\mathsf{T}} = \sum_{i \in \mathcal{N}} \alpha_{i,j} (1-\psi_i) \psi_j h_i^{\mathsf{S}}, \forall j \in \mathcal{N} \\
    & E_{i,j}(\alpha_{i,j}) = K_{i,j} \frac{\alpha_{i,j}}{ \widehat{J}_{i,j}(\bm{x};\ell) }, \forall i,j \in \mathcal{N} \\ 
    & \frac{1}{\widehat{F}_i(\bm{x};\ell)} \leq 1, \forall i \in \mathcal{N} \\
    & \frac{1}{\widehat{G}_{i,j,k}(\bm{x};\ell)} \leq 1, \forall i,j,k \in \mathcal{N} \\
    & \frac{\left( \widehat{T}_{i,j} + \sum_{k \in \mathcal{N}} \widehat{\chi}^{\mathsf{T}}_{i,j,k} \right)}{\widehat{H}_{i,j}(\bm{x};\ell)} \leq 1, \forall i,j \in \mathcal{N} \\ 
    & \frac{\sum_{i \in \mathcal{N}} \alpha_{i,j}}{ \widehat{M}^{+}_{i,j}(\bm{x};\ell) } \leq 1 \\
    & \frac{\chi^{C}_{i,j} - \epsilon_{C} + \psi_j}{\widehat{M}^{-}_{i,j}(\bm{x};\ell)} \leq 1 \\
    & 0 \leq \alpha_{i,j} \leq 1, \psi_i(t) \in \{ 0,1 \}, \forall i,j \in \mathcal{N} \\
    & \chi^{\mathsf{S}}_i > 0, \widehat{\chi}^{\mathsf{T}}_{i,j,k} \geq 0, \chi^{\mathsf{T}}_{i,j} > 0, \chi^{C}_{i,j} > 0, \forall i,j,k \in \mathcal{N} 
    \label{app_eq:prob_prime_lasteq}
    \\
    \cline{1-2}
    & \textrm{\textit{\textbf{Variables}}:}~\boldsymbol{\alpha},\boldsymbol{\psi},
    \{ \chi^{\mathsf{S}}_i, \widehat{\chi}^{\mathsf{T}}_{i,j,k}, \chi^{\mathsf{T}}_{i,j}, \chi^{C}_{i,j} \}_{i,j,k \in \mathcal{N}} 
\end{tcolorbox}



\newpage
\end{document}